\theoremstyle{plain}
\newtheorem{lemma}{Lemma}[section]
\newtheorem{cor}{Corollary}[section]
\newtheorem{theorem}{Theorem}[section]
\newtheorem{prop}{Proposition}[section]
\theoremstyle{remark}
\newtheorem{rem}{Remark}[section]
\numberwithin{equation}{subsection}
\let\oldsection\section
\renewcommand{\section}{
  \renewcommand{\theequation}{\thesection.\arabic{equation}}
  \oldsection}
\let\oldsubsection\subsection
\renewcommand{\subsection}{
  \renewcommand{\theequation}{\thesubsection.\arabic{equation}}
  \oldsubsection}
\newcommand{\IR}{\mathbb{R}}
\newcommand{\IS}{\mathbb{S}}
\newcommand{\mf}{\mathcal{F}}
\newcommand{\mv}{\mathcal{V}}
\renewcommand{\d}{\mathop{}\!\mathrm{d}}
\renewcommand{\bar}{\overline}
\newcommand{\bracket}[1]{\left\langle#1\right\rangle}
\newcommand{\norm}[1]{\left\lVert#1\right\rVert}
\newcommand{\tr}{\text{tr}}
\newcommand{\vol}{\text{Vol}}
\newcommand{\pd}{\partial}
\newcommand{\charfusion}[3][\mathord]{
	#1{\ifx#1\mathop\vphantom{#2}\fi
		\mathpalette\mov@rlay{#2\cr#3}
	}
	\ifx#1\mathop\expandafter\displaylimits\fi}
\newcommand{\bg}{\bar{g}}
 \newcommand{\sch}{\mbox{Sch}}
 \newcommand{\uep}{U_{a(\epsilon_1)}^A}
\newcommand{\wuep}{\widetilde{U}_{a(\epsilon_1)}^A}
\newcommand{\upa}{U_{\rho_0}^A}
\newcommand{\spa}{\; | \;}
\title{Stability of the Spacetime Penrose Inequality in Spherical Symmetry}
\author{Emily Schaal}
\begin{document}
\maketitle

\section*{Abstract}

We formulate and prove the stability statement associated with the spacetime Penrose inequality for $n$-dimensional spherically symmetric, asymptotically flat initial data satisfying the dominant energy condition. We assume that the ADM mass is close to the half area radius of the outermost apparent horizon and, following the generalized Jang equation approach, show that the initial data must arise from an isometric embedding into a static spacetime close to to the exterior region of a Schwarzschild spacetime in the following sense. Namely, the time slice is close to the Schwarzschild time slice in the volume preserving intrinsic flat distance, the static potentials are close in $L_{loc}^2$, and the initial data extrinsic curvature is close to the second fundamental form of the embedding in $L^2$.

\section{Introduction}

The Penrose inequality was first conjectured by Roger Penrose in the 1970s. It bounds from below the ADM mass of an asymptotically flat initial data set for the Einstein equations by the smallest area enclosing its trapped surfaces. In other words, a spacetime with a black hole must have at least as much mass as the half area radius of the black hole. The rigidity statement for the Penrose inequality posits that if equality is achieved, then the spacetime must actually be Schwarzschild. We explore the stability conjecture, which states that if the mass is close to the half area radius, then the spacetime should be close to Schwarzschild, and find that stability holds in the volume preserving intrinsic flat sense.

The Penrose inequality in the 3-dimensional Riemannian case was solved by Huisken and Ilmanen in 2001 for a manifold with one boundary component using inverse mean curvature flow \cite{Huisken2001}, and shortly thereafter by Bray for general boundary using conformal flow \cite{Bray2001}. Later, Bray and Lee generalized the conformal flow argument to dimensions less than $8$ \cite{Bray2009}. On the other hand, the spacetime Penrose inequality has been fully proven only in the case of spherical symmetry. In 1996, Hayward proved the inequality for spherically symmetric initial data but did not handle the case of equality \cite{Hayward1996}, while Bray and Khuri published a proof of the inequality and equality case for spherically symmetric initial data using a type of generalized Jang equation in 2010 \cite{Bray2010}. Although their proof is only explicitly stated in dimension 3, it can be generalized to all dimensions due to spherical symmetry.

Both the Riemannian and spacetime Penrose inequalities have stability conjectures corresponding to the case of equality. As described previously, such conjectures ask whether the manifold or the associated static spacetime is close to Schwarzschild when the ADM mass is close to the half area radius of the black hole. 

We appeal to the intrinsic flat distance to judge whether two manifolds are close together. The intrinsic flat distance was established by Sormani and Wenger in \cite{Sormani2011} as a generalization of the flat distance in a manner analogous to how the Gromov-Hausdorff distance generalizes the Hausdorff distance; instead of computing the flat distance between integral currents in Euclidean space, the intrinsic flat distance is taken as an infimum over all distance-preserving embeddings into all complete metric spaces. The weak nature of the intrinsic flat distance makes it ideal for problems like the Penrose inequality, which rely on assumptions regarding scalar curvature as opposed to stronger conditions regarding Ricci curvature. 

Bryden, Khuri, and Sormani used the Lakzian-Sormani technique \cite{Lakzian2012} to show the intrinsic flat distance stability of the spacetime positive mass theorem in the case of spherical symmetry \cite{Bryden_2020}. However, the Penrose inequality, for which the boundary of the initial data remains nontrivial, poses complications when it comes to establishing stability. In 2011, Lee and Sormani proved a type of stability result for the Riemannian Penrose inequality in spherical symmetry using the Lipschitz distance, which in turn implied intrinsic flat closeness \cite{Lee2011}. In their paper, Lee and Sormani highlight that a pure stability result is impossible in the Riemannian case because the control over the boundary does not prohibit the manifold from developing a cylindrical end of arbitrary length. Instead, they show that if the ADM mass is close to the half area radius, then the manifold is close to a so-called appended Schwarzschild space constructed by attaching a cylinder to the boundary of the appropriate Schwarzschild time slice. Moreover, in \cite{Mantoulidis_2015}, Mantoulidis and Schoen published a counterexample to the stability of the Riemannian Penrose inequality by gluing an almost-cylinder with a poorly behaved minimal boundary onto a portion of the exterior region of Schwarzschild space. While their resulting manifold is not spherically symmetric, it provides further evidence that the behavior near the boundary cannot quite be controlled by the assumptions regarding scalar curvature and mass. Additionally, Sakovich and Sormani have proven a stability result for the spacetime positive mass theorem in the asymptotically hyperbolic setting \cite{Sakovich2017}. In \cite{Sakovich_2021}, Sakovich solves the Jang equation for asymptotically hyperboloidal initial data in order give a non-spinor proof of the hyperbolic positive mass theorem. Additionally, Allen studies stability of the 3-dimensional Riemannian positive mass theorem and Penrose inequality in a number of cases using inverse mean curvature flow in \cite{ Allen2017b, Allen2017,Allen2018a}.

In this paper, we prove a stability result for the spacetime Penrose inequality in spherical symmetry with asymptotically flat initial data using the generalized Jang equation approach. While we are able to adapt the Jang equation proof of the Penrose inequality presented by Bray and Khuri \cite{Bray2010, Bray_2011} to the stability case, we find that an additional assumption is needed regarding the mean curvature of spheres in order to properly control the geometry outside of the asymptotic end. This assumption is closely tied to the outermost apparent horizon condition, and while technical in nature, is indispensable for the argument. Additionally, we must appropriately address the difficulties presented by lack of control near the boundary, which, similar to the case presented by Lee and Sormani, could develop into a cylinder of positive length. Finally, unlike \cite{Bryden_2020}, we cannot use the Lakzian-Sormani technique to show intrinsic flat closeness. The reason is that our initial data might have countably many regions in which we cannot guarantee metric closeness and therefore we need a technique that is more robust to measure the boundary of the regions on which we are evaluating intrinsic flat closeness. Instead, we apply the new boundary version of the volume above, distance below (VADB) estimation theorem \cite{Allen2020a}. 

The paper is structured as follows. In Section \ref{setup}, we rigorously state the setup for the problem and state the main theorem. In Section \ref{background}, we review some of the necessary background for the proof of the theorem, including the volume above, distance below theorem used to estimate intrinsic flat convergence as well as the proof of the Penrose inequality via the generalized Jang method. In Section \ref{prelims}, we prove some preliminary results that are used throughout. In Section \ref{warp fact convergence}, we prove the convergence result for the warping factor and consequently show that the remaining inner region of the comparison Schwarzschild space converges to the zero current in the intrinsic flat sense. In Section \ref{vadb app}, we establish the results which allow us to show intrinsic flat closeness. In Section \ref{proof}, we complete the proof of the main theorem.\bigskip

\noindent \textbf{Acknowledgements.}
The author would like to thank Professor Marcus Khuri for his invaluable advice and encouragement. The author would also like to thank Professor Christina Sormani for several helpful and stimulating conversations. 

\section{Setup and Theorem Statement}\label{setup}

We begin with an initial data set for the Einstein equations $(M,g,k)$ in which $M$ is a complete $n$-dimensional smooth manifold with Riemannian metric $g$, $n \geq 3$. The tensor $k$ is a symmetric 2-tensor which we treat as a candidate for the second fundamental form of an embedding of $M$ into a spacetime. The local matter $\mu$ and momentum density $J$ satisfy the constraint equations given by
\begin{align}
    16\pi \mu &= R_g - (\tr_g k)^2 - |k|_g^2, \quad   8\pi J = \mbox{div}_g(k - (\tr_g k)g).
\end{align}
We assume the dominant energy condition, ie,
\begin{align}\label{assumption: DEC}
    \mu \geq |J|_g.
\end{align}

We assume that $(M,g,k)$ is asymptotically flat in the sense that there is an asymptotic end that is diffeomorphic to the complement of a ball $\IR^n \setminus B_0(R)$ and there exists a uniform constant $C$ so that in the coordinates $x$ provided by this asymptotic diffeomorphism we have the following fall off conditions
\begin{equation}
\begin{aligned}\label{ass: asym flat}
    \left| \pd^{\beta_1}(g_{ij} - \delta_{ij}) \right| &\leq \frac{C}{|x|^{n-2+|\beta_1|}},&  \left| \pd^{\beta_2}k_{ij} \right| &\leq \frac{C}{|x|^{n-1+|\beta_2|}},\\
    |R_g| &\leq \frac{C}{|x|^{n+1}},&  |\tr_g k| &\leq \frac{C}{|x|^n}
\end{aligned}
\end{equation}
for multi-indices $\beta_1 \leq 2$, $\beta_2 \leq 1$. This assumption is necessary to ensure that the ADM mass is well defined as a limit of the Hawking mass of compact hypersurfaces at infinity. 

We restrict to spherically symmetric initial data: ie, $M$ has symmetry group $\mbox{SO}(n)$. In spherical symmetry, we can write the metric $g$ globally as
\begin{align}
    g = g_{11}(r) dr^2 + \rho^2(r) d\Omega^2
\end{align} 
for radial functions $g_{11}$ and $\rho$ where $d\Omega^2$ is the round metric on $\IS^{n-1}$. If $\eta$ is the outer unit normal vector to the sphere of radius $r$, we can write $k$ as
\begin{align}
    k_{ij} = \eta_i \eta_j k_n(r) + (g_{ij} - \eta_i \eta _j) k_t(r)
\end{align}
for radial functions $k_n$ and $k_t$, where $k_n$ represents the component of $k$ normal to spheres and $k_t$ the tangent component.  

We assume that the null expansions satisfy
\begin{align}\label{assumption: apparent horizon}
	\Theta_{\pm}:=2\left(\frac{\rho_{,r}}{\rho\sqrt{g_{11}}}\pm k_n \right)(r)>0
\end{align}
and call this the apparent horizon condition, where $\rho_{,r}$ denotes the derivative of $\rho$ with respect to $r$. While this was sufficient to prove the Penrose inequality and rigidity case, we need a slightly stronger assumption for stability, namely, that the null expansions are uniformly bounded:
\begin{align}\label{assumption: bounded expansions.}
    0 < \Theta_{\pm}(r) \leq C < \infty.
\end{align}
Adding $\Theta_{+}$ and $\Theta_{-}$ gives that $\rho_{,r} > 0$ so that $\rho$ is strictly increasing. We may also see that 
\begin{align}
    0 < H_{S_r} = 2\frac{\rho_{,r}}{\rho\sqrt{g_{11}}} \leq C
\end{align}
where $H_{S_r}$ is the mean curvature to spheres in $g$. The asymptotics \eqref{ass: asym flat} guarantee that the null expansions decay at infinity, so the uniform boundedness condition prevents singularites in the radial length from appearing in the interior. 

In spherical symmetry we only have one boundary component so that the half area radius may be computed as  
\begin{align}\label{eq: def half area rad}
    m_0 = \frac{1}{2}\left( \frac{|\pd M|}{\omega_{n-1}} \right)^{\frac{n-2}{n-1}}
\end{align}
where $|\pd M|$ denotes the area of sphere which composes the boundary. In arc length coordinates, we may define the ADM mass as the limit of the Hawking mass as $s\to \infty$ of spheres of radius $s$, denoted by $S_s$
\begin{align}
   m_{ADM} = \lim_{s\to \infty} m(s) &= \lim_{s \to \infty} \frac{1}{2}\left(\frac{|S_s|}{\omega_{n-1}} \right)^{\frac{n-2}{n-1}}\left[1 - \frac{1}{(n-1)^2(\omega_{n-1}^2 |S_s|^{n-3})^{\frac{1}{n-2}}}\int_{S_s} H_{S_s}^2 \right]
\end{align}
where $\omega_{n-1}$ is the area of the $(n-1)$-sphere and $H_{S_s}$ is the mean curvature of the sphere of radius $s$ in $M$.

The spacetime Penrose inequality then states that, given the dominant energy condition \eqref{assumption: DEC}, the asymptotic flatness given by \eqref{ass: asym flat}, and the outermost apparent horizon condition \eqref{assumption: apparent horizon}, we have
\begin{align}
    m_{ADM} \geq m_0.
\end{align}
Moreover, if $m_{ADM}= m_0$, then $(M,g,k)$ arises as an embedded timeslice of Schwarzschild spacetime with second fundamental form $k$.

To show stability, we follow the generalized Jang approach used to prove the Penrose inequality in \cite{Bray2010, Bray_2011}. The philosophy is to look for a surface $(\Sigma,\bg)$ inside the warped product space $(M\times \IR, g + \phi^2 dt^2)$ such that $\Sigma = F(M)$ for an embedding 
\begin{align}\label{jang as image under f}
    F: M \to M\times \IR, \quad F(x) = (x,f(x))
\end{align}
which satisfies 
\begin{align}\label{eq: Jang Equation}
    H_{\Sigma} - \tr_{\Sigma} K = 0
\end{align}
where $K$ is an extension of the initial data $k$ given by
\begin{equation}\label{eq: extension for K}
K_{ij} = \begin{cases}
    k_{ij} &\mbox{ for } i,j \leq n\\
     0 &\mbox{ for } i = n+1, j \ne n +1\\
    \bracket{N, \phi \nabla_{g}\phi }_{g + \phi^2 dt^2} &\mbox{ for } i = j = n+1
\end{cases}
\end{equation}
and $N$ is the normal vector to $\Sigma$ inside $M\times \IR$. We denote the second fundamental form of the Jang surface $(\Sigma,\bg)$ by $h$. The choice of extension for $k$ to $K$ in \eqref{eq: extension for K} is made to give a positivity property for the scalar curvature of $\bg$ when we assume the dominant energy condition.

Once we have the embedding $F$ to specify $\Sigma$, we can define $G: (M,g) \to (\Sigma \times \IR, \bg - \phi^2 dt^2)$ as the graph map $G(x) = (x, f(x))$ so that the induced metric on $G(M)$ is $g = \bg - \phi^2 df^2$. To be consistent with \cite{Bray2010}, let $\pi$ be the second fundamental form of $G(M)$. If $\pi - k = 0$, then $k$ is the second fundamental form of the embedding. If $\bg = g_S$ is the Schwarzschild metric and $\phi = \phi_S$, then the map $G$ gives $(M,g)$ as the image of an isometric embedding of $\Sigma$ into Schwarzschild spacetime. The existence of the embedding and the equality case were proven in \cite{Bray2010}. For stability, we want to show that if $(M,g)$ is close to $(\sch(m_0),g_S)$, $\phi$ is close to $\phi_S$ and $k$ is close to $\pi$ so that $(M,g,k)$ is close to being realized as an embedded timeslice of Schwarzschild spacetime where we denote the Schwarzschild space with half area radius $m_0$ and metric $g_S$ by $(\sch(m_0),g_S)$..

Note that we will refer to $f$ and $v$ both as solutions to the Jang equation where $v$ is given in $r$ coordinates by
\begin{align}\label{eq: eq for v}
    v = \frac{\phi \sqrt{g^{11}}f_{,r}}{\sqrt{1+ \phi^2 g^{11}f_{,r}^2}},
\end{align}
see \cite{Bray2010}. In his 1979 paper \cite{Jang1977}, Pong Soo Jang proved the equivalence of the existence and uniqueness of $v$ to the existence and uniqueness of the graph determined by $f$, so we may solve for $v$ in \eqref{eq: Jang Equation} to obtain the embedding. Moreover, we choose the boundary data for the Jang solution to preserve the half area radius $m_0$.

While we may state the generalized Jang equations for arbitrary warping factor $\phi$, the existence of the Jang surface has only been proven for a specific choice. Let
\begin{align}
    s = \int_0^r \sqrt{g_{11} + \phi^2 f_{,r}^2}
\end{align}
be the radial arc length parameter in the $\bg$ metric. In these coordinates, 
\begin{align}
    \bg = ds^2 + \rho^2(s) d\Omega^2
\end{align}
and the mean curvature of a sphere of radius $s$ is 
\begin{align}\label{eq: mean curv sphere}
    \bar{H}_{S_s} = \frac{n-1}{\rho}\rho_{,s}.
\end{align}
We set the warping factor $\phi$ as
\begin{align}\label{eq: set phi}
    \phi = \rho_{,s} = \frac{\bar{H}_{S_s}}{n-1}  \left(\frac{|S_s|}{\omega_{n-1}}\right)^{1/(n-1)}. 
\end{align}
This choice of $\phi$ couples the Jang equation to the inverse mean curvature flow, which is solvable in spherical symmetry. The solvability of the flow then produces a solvable Jang system. In a sense, this technique is a combination of the Huisken and Ilmanen approach with the original approach of Schoen and Yau to the spacetime positive mass theorem \cite{Schoen_1981}. 

With the Jang setup in place, we move on to defining the sets on which we study intrinsic flat convergence. In radial coordinates, the metric becomes
\begin{align}
    \bg = \frac{1}{\rho^2_{,s}} d\rho^2 + \rho^2 d\Omega^2.
\end{align}

\begin{rem}
That $\rho$ is strictly increasing implies that, in these coordinates, there is a bijection between the set of radial values and the set of areas, so distinguishing a spherically symmetric subset by its radial values is the same as distinguishing it by the areas of the spheres it contains: the unique sphere of radius $\rho$ has area $\rho^{n-1} \omega_{n-1}$. In this way, we see that these coordinates have geometric meaning. 
\end{rem}

We wish to isolate the minimal radial value. Again, as $\rho$ is strictly increasing, this will be the radial value corresponding to the area of the innermost boundary. As we choose Jang data to preserve the half area radius $m_0$, we compute that in spherical symmetry the minimal radial value is
\begin{align}
    \rho_0 = (2m_0)^{1/(n-2)}.
\end{align}
It follows that a point $x \in \Sigma$ can be written as $x = (\rho, \theta)$ for $\rho \in [\rho_0, \infty)$ and $\theta \in \IS^{n-1}$. We compare to Schwarzschild space with the same half area radius $m_0$, given in radial coordinates by
\begin{align}
    (\sch(m_0),g_S) = \left([\rho_0,\infty] \times \IS^{n-1}, \frac{1}{\phi_S^2} d\rho^2 + \rho^2 d\Omega^2 \right)
\end{align}
where $\phi_S = \sqrt{1 - \frac{2m_0}{\rho^{n-2}}}$, $d\Omega^2$ is the standard metric on the $(n-1)$-sphere, and $\rho_0^{n-2} = 2m_0$ and see that we have a natural diffeomorphism $F_1: \Sigma \to \sch(m_0)$ so that $(\rho,\theta) \mapsto (\rho,\theta)$.

We distinguish spheres and annuli in $(\Sigma,\bg)$ and $(\sch(m_0),g_S)$ with the convention that if $S$ is a set in $\Sigma$, $\widetilde{S}$ is a set in $\sch(m_0)$:
\begin{equation}
\begin{aligned}
	S_A &= \{(\rho,\theta)\in \Sigma \: : \: \rho = A \}, &
	U_a^A &= \{(\rho, \theta) \in \Sigma \: :\: a \leq \rho \leq A \},\\
	\widetilde{S}_A &= \{(\rho,\theta) \in \sch(m_0) \: : \: \rho = A \},&
	\widetilde{U}_a^A  &= \{(\rho, \theta) \in \sch(m_0) \: :\: a \leq \rho \leq A \},
\end{aligned}
\end{equation}
and let
\begin{align}
    T_D(S_A) = \{ x \in \Sigma \spa d_{\bg}(x,y) \leq D \mbox{ for } y \in S_A \}
\end{align}
be the tubular neighborhood of radius $D$ around $S_A$.

We choose $A >> \rho_0$. Given the anchor surface $S_A$, we select the annulus on which we prove intrinsic flat convergence. To do so, we must be careful of how radially long our annulus is. We must also take care that the inner boundary of our annulus is not too close to the boundary of the Jang surface. We first control for the radial length by choosing the distance from which we measure down on the anchor surface $S_A$. Let
\begin{align}
    \widetilde{D} = \min\left\{d_{g_S}(x,y) \spa x \in \widetilde{S}_A, y \in \pd\sch(m_0) = \widetilde{S}_{\rho_0} \right \}
\end{align}
be the distance between the anchor surface in $\sch(m_0)$ and $\pd \sch(m_0)$. We set
\begin{align}
    a(\epsilon_1) := \min\{\rho \spa (\rho,\theta) \in T_{\widetilde{D}}(S_A) \} + \epsilon_1
\end{align}
for a fixed $\epsilon_1 > 0$. The constant $\epsilon_1$ controls for the case that the radial distance between $S_A$ and $\pd \Sigma$ is $\widetilde{D}$ or less. Specifically, we are preventing $T_{\widetilde{D}}(S_A) \cap U_{\rho_0}^A = U_{\rho_0}^A$: by adding $\epsilon_1$ we guarantee that the inclusion $U_{a(\epsilon_1)}^A \subset U_{\rho_0}^A$ is strict. See Figure \ref{fig:set diagram} for an illustration of $\uep$ in $\Sigma$.

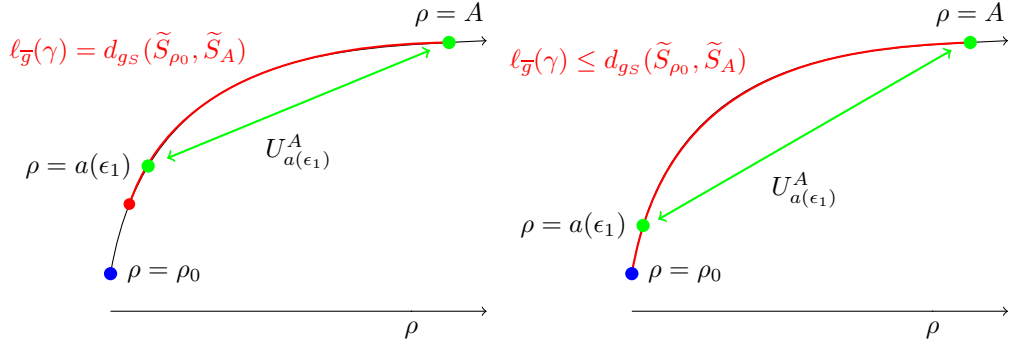
\begin{figure}[ht]
\centering
\begin{tikzpicture}

\draw[->] (0,-1) .. controls (0.5,2) and (3,2) .. (5,2.1);
\node[circle, inner sep=0pt, minimum size = 5pt, fill=blue, label={0: $\rho = \rho_0$}] at (0,-1) {};
\filldraw [red] (0.25,-0.07437) circle (2pt);

\draw[red, thick](0.25,-0.07437)  .. node[above left]{$\ell_{\bg}(\gamma) = d_{g_S}(\widetilde{S}_{\rho_0},\widetilde{S}_{A})$} controls (0.85,1.65) and (2.75,2.05) .. (4.5,2.07574) ;

\node[circle, inner sep=0pt, minimum size = 5pt, fill=green, label={90: $\rho = A$}] at (4.5,2.07574) {};
\node[circle, inner sep=0pt, minimum size = 5pt, fill=green, label={180: $\rho = a(\epsilon_1)$}] at (0.5,0.432282) {};
\draw[green, thick, <->](0.75,0.534999)  -- (4.25,1.97303) ;
\node[circle, inner sep=0pt, minimum size = 1pt, fill=white,label={270:$U_{a(\epsilon_1)}^A$}] at (2.5,1.0) {};
\draw[->] (0,-1.5) -- (5,-1.5);
\node[circle, inner sep=0pt, minimum size = 0.5pt, fill=black, label={270: $\rho$}] at (4,-1.5) {};
\end{tikzpicture}
\begin{tikzpicture}

\draw[->] (0,-1) .. controls (0.5,2) and (3,2) .. (5,2.1);

\draw[red, thick](0,-1)  .. node[above left]{$\ell_{\bg}(\gamma) \leq d_{g_S}(\widetilde{S}_{\rho_0},\widetilde{S}_{A})$} controls (0.45,1.65) and (2.5,2) .. (4.5,2.07574) ;

\node[circle, inner sep=0pt, minimum size = 5pt, fill=blue, label={0: $\rho = \rho_0$}] at (0,-1) {};
\node[circle, inner sep=0pt, minimum size = 5pt, fill=green, label={90: $\rho = A$}] at (4.5,2.07574) {};
\node[circle, inner sep=0pt, minimum size = 5pt, fill=green, label={180: $\rho = a(\epsilon_1)$}] at (0.15,-0.358729) {};

\draw[green, thick, <->](0.35,-0.279786)  -- (4.25,1.97303) ;
\node[circle, inner sep=0pt, minimum size = 1pt, fill=white,label={270:$U_{a(\epsilon_1)}^A$}] at (2.3,0.489908) {};
\draw[->] (0,-1.5) -- (5,-1.5);
\node[circle, inner sep=0pt, minimum size = 0.5pt, fill=black, label={270: $\rho$}] at (4,-1.5) {};
\end{tikzpicture}
\caption{Here we have sketched radial cross-sections of $\Sigma$ illustrating the two different cases that might occur when choosing the set $\uep$. The first, pictured left, is that the radial distance between $S_{\rho_0}$ and $S_{A}$ is greater than $d_{g_S}(\widetilde{S}_{\rho_0},\widetilde{S}_{A})$. In this case, the radial geodesic $\gamma$ that drops down toward $S_{\rho_0}$ and has length $\ell_{\bg}(\gamma)  = d_{g_S}(\widetilde{S}_{\rho_0},\widetilde{S}_{A}) $ will terminate at some radial value bigger than $\rho_0$, depicted by the red dot. The set $U_{a(\epsilon_1)}^A$ is then the part of $\Sigma$ in between the green dots and described by the green arrow. In the second case, pictured right, the radial distance between $S_{\rho_)}$ and $S_{A}$ is less than or equal to $d_{g_S}(\widetilde{S}_{\rho_0},\widetilde{S}_{A})$ so that $ \ell_{\bg}(\gamma) \leq d_{g_S}(\widetilde{S}_{\rho_0},\widetilde{S}_{A}) $ and $\gamma$ terminates at $\rho_0$. It is this second case in which the control by $\epsilon_1$ is crucial for ensuring the set $U_{a(\epsilon_1)}^A$ does not extend to the boundary. In both cases, the radial depth of $\uep$ is strictly less than $d_{g_S}(\widetilde{S}_{\rho_0},\widetilde{S}_{A})$ in a manner controlled by $\epsilon_1$.}
    \label{fig:set diagram}
\end{figure}

We define a restricted Jang space by
\begin{align}
    \Sigma_{\epsilon_1} := \left\{ (\rho,\theta) \in \Sigma \spa a(\epsilon_1) \leq \rho \leq A \right\}.
\end{align}
The construction of the subsets $\uep$ and $\Sigma_{\epsilon_1}$ succeed in keeping our study of convergence away from the boundary where we do not have specific control over the geometry. We have chosen $\uep$ specifically to control the radial length, which prevents us from falling down a cylindrical region near the boundary, see Figure \ref{fig:set diagram}. However, to justify this restriction, we need to include as part of our argument that $a(\epsilon_1)$ approaches $\rho_0$ as we take $\epsilon_1$ and $\delta$ to zero. We are now ready to state our theorem. 

\begin{theorem}\label{theorem: IF convergence of Jang}
    Suppose $(M,g,k)$ is a spherically symmetric initial data set satisfying the dominant energy condition \eqref{assumption: DEC}, asymptotic flatness given by \eqref{ass: asym flat}, and the uniform bounded outermost expansion condition \eqref{assumption: bounded expansions.}. Further, suppose that $m_{ADM} = (1+\delta)m_0$. Then, there exists a Riemannian manifold $(\Sigma, \bg)$ diffeomorphic to $(M,g)$ with graphical isometric embeddings
    \begin{equation}
    \begin{aligned}
        G: (M,&g) \to (\Sigma \times \IR, \bg - \phi^2dt^2), \quad G(x) = (x, f(x)), \mbox{ and} \\
        &g = G^*(\bg - \phi^2dt^2) = \bg - \phi^2df^2
    \end{aligned}
    \end{equation}
    for $\phi$ as in \eqref{eq: set phi} such that the static spacetime $(\Sigma \times \IR, \bg - \phi^2 dt^2)$ converges to Schwarzschild spacetime 
    \begin{align}
    \left(\sch(m_0) \times \IR, g_S - \phi^2 dt^2\right)
    \end{align}
    in the sense that the restricted base manifold $\Sigma_{\epsilon_1}$ is close to $\sch(m_0)$ for $\epsilon_1$ and $\delta$  very small. In other words, for each arbitrarily small $\epsilon > 0$ there exists $\epsilon_1$ and $\delta$ small enough such that 
    \begin{align}
        d_{\mv\mf}(\uep, \widetilde{U}_{a(\epsilon_1)}^A) < \epsilon
    \end{align}
    where $d_{\mv\mf}(\:, \:)$ is the volume preserving intrinsic flat distance. Moreover, we have the following convergences of the warping factor and the second fundamental form:
    \begin{align}
        \norm{\phi - \phi_S}_{L_{loc}^2(\Sigma_{\epsilon_1},\bg)} \to 0 \mbox{ and } \norm{\sqrt{\phi}(k - \pi)}_{L^2(\Sigma,\bg)} \to 0.
    \end{align}
\end{theorem}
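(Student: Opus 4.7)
The plan is to produce the Jang surface $(\Sigma,\bar{g})$ using the existence theory of \cite{Bray2010}, with boundary data chosen so that the inner radial value is exactly $\rho_0 = (2m_0)^{1/(n-2)}$, and then to extract convergence from the monotonicity of the Hawking mass along $\Sigma$. With the choice $\phi = \rho_{,s}$, the Jang system decouples into a generalized inverse mean curvature flow for the spheres $S_s$, and the dominant energy condition combined with the Jang equation produces the monotonicity formula $\bar{m}'(s)\ge (\text{nonnegative integrand})$ in the $\bar{g}$-arclength parameter. Since $\bar{m}(\rho_0)=m_0$ and $\bar{m}(\infty)=m_{ADM}=(1+\delta)m_0$, the total defect is at most $\delta m_0$, which yields an integral bound
\begin{equation*}
\int_{\Sigma}(\text{DEC slack}+\text{IMCF defect})\,dV_{\bar{g}}\le C\delta.
\end{equation*}
This is the engine that drives every convergence statement.

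First I would translate this integral defect into $L^2_{loc}$ control on $\phi-\phi_S$. The IMCF form of the warping factor gives $\phi^2(\rho)=1-2\bar{m}(\rho)/\rho^{n-2}$ up to a nonnegative error controlled by the defect above, while $\phi_S^2=1-2m_0/\rho^{n-2}$. Since $\bar{m}(\rho)\in[m_0,(1+\delta)m_0]$ and the defect integral is $O(\delta)$, one obtains $\|\phi-\phi_S\|_{L^2_{loc}(\Sigma_{\epsilon_1},\bar{g})}\to 0$, which is the second conclusion. From the expression $\bar{g}=\phi^{-2}d\rho^2+\rho^2 d\Omega^2$, the same bound gives pointwise and $L^2$ convergence of $\bar{g}$ to $g_S$ on any annulus bounded away from $\{\rho=\rho_0\}$. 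In parallel, one must verify that $a(\epsilon_1)\to\rho_0$ as $\delta,\epsilon_1\to 0$; this is a genuine point where the outermost expansion condition \eqref{assumption: bounded expansions.} is crucial, since the uniform bound on $\Theta_{\pm}$ prevents a long cylindrical neck from developing at the horizon, so that $\widetilde{D}$ is achieved close to $\rho_0$ and $U_{a(\epsilon_1)}^A$ eventually exhausts the relevant Schwarzschild annulus.

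Next I would apply the boundary version of the volume above, distance below theorem from \cite{Allen2020a} to the identity-type diffeomorphism $F_1\colon U_{a(\epsilon_1)}^A\to\widetilde{U}_{a(\epsilon_1)}^A$. The three ingredients needed are: (i) uniform diameter and area bounds on both sides, which follow from the uniform mean-curvature bound and the explicit Schwarzschild geometry; (ii) volume convergence, which is immediate from the pointwise $\bar{g}\to g_S$ convergence together with dominated convergence using the uniform control on $\phi$; and (iii) the distance-nondecreasing inequality $d_{\bar{g}}(F_1(x),F_1(y))\ge d_{g_S}(x,y)-o(1)$, which is obtained from $\phi\le\phi_S+o(1)$ integrated along radial curves, combined with a standard spherically symmetric argument that minimizing geodesics can be replaced by radial-then-azimuthal paths. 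The boundary version is essential here because we cannot rule out countably many small regions where metric closeness fails, and Lakzian--Sormani's technique is too rigid on the boundary, as the authors emphasize in the introduction.

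Finally, for the extrinsic curvature statement, I would revisit the generalized Jang equation $H_\Sigma-\mathrm{tr}_\Sigma K=0$ and the identity linking $\pi-k$ to the Jang graph slope $v$ of \eqref{eq: eq for v}. Bray--Khuri's rigidity derivation shows that the term $\int \phi|k-\pi|^2\,dV_{\bar{g}}$ appears as one of the nonnegative terms in the Hawking mass monotonicity; bounding this integrand by the total defect $C\delta$ yields $\|\sqrt{\phi}(k-\pi)\|_{L^2(\Sigma,\bar{g})}\to 0$ directly, without further boundary issues since $\phi$ vanishes at $\{\rho=\rho_0\}$ and the weight absorbs any degeneracy. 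The main obstacle throughout is controlling the inner neck: ensuring that $a(\epsilon_1)\to\rho_0$ and that the radial depth of $U_{a(\epsilon_1)}^A$ does not blow up in a cylindrical manner is what forces both the extra uniform expansion hypothesis and the use of the boundary VADB estimate, and this is where the bulk of the technical work is concentrated in Sections \ref{prelims}--\ref{vadb app}.
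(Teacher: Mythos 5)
There is a genuine gap at the heart of your argument: you assume pointwise control of the Hawking mass along the flow, writing $\bar{m}(\rho)\in[m_0,(1+\delta)m_0]$ and ``$\bar m'(s)\ge$ nonnegative integrand,'' and from this you deduce pointwise convergence $\phi\to\phi_S$, pointwise $\bar g\to g_S$, volume convergence by dominated convergence, and the approximate distance inequality $\phi\le\phi_S+o(1)$. But in the generalized Jang setting the scalar curvature \eqref{eq: scalar curv} contains the divergence term $-2\phi^{-1}\mathrm{div}_{\bar g}(\phi q)$, so the Hawking mass is \emph{not} monotone: as in \eqref{eq: hawk mass decom}, $m(\rho)=m_0+P(\rho)+Q(\rho)$, where only $P$ is nonnegative and increasing, while the boundary term $Q(\rho)=-\tfrac{\rho^{n-1}}{n-1}\phi^2 q_1$ has no sign and no pointwise smallness; it vanishes only at the horizon and at infinity. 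Consequently $m(\rho)$ can drop below $m_0$, $\phi$ can exceed $\phi_S$, and neither $\bar g\le g_S$ nor $g_S\le\bar g$ holds region by region. The entire technical core of the paper consists of replacing your pointwise assumption by the $L^2$ bound on $Q$ (Lemma \ref{lem: analysis of Q}, which itself needs the bound $\phi\le AC$ of Lemma \ref{lem: bound on phi} coming from the uniform expansion hypothesis), a Chebyshev-type good/bad decomposition $W_{\epsilon_2},B_{\epsilon_2}$, metric comparison only on the good set (Lemma \ref{lem: metric estimate}), and control of geodesics through the bad annuli via the no-radial-maximum lemma. Your route, as written, has no mechanism to exclude the bad behavior of $Q$, so the steps ``$\phi^2=1-2\bar m(\rho)/\rho^{n-2}$ up to a nonnegative error controlled by the defect'' and the subsequent pointwise conclusions do not follow.

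A second, related problem is your use of the boundary VADB theorem. Theorem \ref{vadb bdy} requires an honest distance non-increasing map $F$; the slack $2\alpha_1$ in \eqref{est: dist} is permitted only in the reverse distance comparison on the good set, not in the monotonicity of the map itself. Because $Q$ changes sign, the identity map $F_1:\uep\to\wuep$ is in general distance non-increasing in neither direction, so your plan of applying VADB once to $F_1$ with an ``up to $o(1)$'' inequality does not meet the hypotheses. The paper circumvents this by building the intermediary space $(U_0,g_0)$, which is $\bar g$ on the annuli where $\bar g\le g_S$ and $g_S$ elsewhere, so that both $\uep\to U_0$ and $\wuep\to U_0$ are distance non-increasing, and then applies VADB twice together with the triangle inequality (Proposition \ref{prop: inter metric space}). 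Also, the convergence $a(\epsilon_1)\to\rho_0+\epsilon_1$ is not a direct consequence of the uniform expansion bound preventing necks, as you suggest, but is proved by contradiction from the $L^2$ convergence of $\phi$ to $\phi_S$. Your final paragraph on $\norm{\sqrt{\phi}(k-\pi)}_{L^2}$ is essentially the paper's argument and is fine.
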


\begin{rem}
We stress that our data $(M,g,k)$ and subsequent Jang surface depends on $\delta$, so that the uniformity of our assumptions is determined with respect to $\delta$. It is possible to prove the theorem for a sequence of inital data $(M_j, g_j, k_j)$; moreover, the theorem holds if the $m_0^j = m_0(M_j)$ is allowed to vary as long as we assume that $m_{ADM}^j = (1+\delta_j)m_0^j \to m_0$ for a sequence of $\delta_j$ converging to zero. In particular, this forces $m_0^j \to m_0$. However, these considerations require more complicated notation and technical details without any fundamental differences of proof, so we omit them from this paper. 
\end{rem}

\section{Background}\label{background}

We review briefly the intrinsic flat convergence as well as the results we will need concerning the Jang equation.

\subsection{Intrinsic Flat Convergence}

The intrinsic flat convergence studies how similar spaces are in terms of their highest order integral properties, in contrast to the Gromov-Hausdorff convergence which operates at the level of geodesics. By definition, the intrinsic flat distance is the smallest flat distance between the integral currents defined by two manifolds under all possible pushforwards to all possible metric spaces. In 2020, Allen, Perales, and Sormani proved an explicit estimate if the volumes are bounded above and the distances bounded below \cite{Allen2020}. Also in 2020, Allen and Perales adapted this argument to the case of Riemannian manifolds with continuous metrics and boundary \cite{Allen2020a}. As our sets $\uep$ and $\wuep$ are compact with boundary, this is the theorem we will apply. 

\begin{theorem}[Volume Above, Distance Below with Boundary]\label{vadb bdy}
    Let $U$ be an oriented and compact manifold, $U_1 = (U,g_1)$ and $U_0 = (U,g_0)$ be continuous Riemannian manifolds with $\mbox{Diam}(U_1) \leq D$, $\vol(U_1) \leq V$, $\vol(\pd U_1) \leq A$ and $F: U_1 \to U_0$ a biLipschitz and distance non-increasing map with a $C^1$ inverse. Let $W_1 \subset U_1$ be a measurable set with 
    \begin{align}
        \vol(U_1 \setminus W_1) \leq V_1
    \end{align}
    and assume there exists an $\alpha_1 > 0$ so that for all $x,y \in W_j$,
    \begin{align}\label{est: dist}
        d_1(x,y) \leq d_0(F_1(x),F_1(y)) + 2\alpha_1
    \end{align}
    and that $h_1 \geq \sqrt{2\alpha_1 D + \alpha_1^2}$. Then
    \begin{align}
        d_{\mf}(U_0,U_1) \leq 2V_1 + h_1 V + h_1 A.
    \end{align}
\end{theorem}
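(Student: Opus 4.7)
The plan is to follow the Lakzian--Sormani hemispherical embedding template, adapted to manifolds with boundary as in Allen--Perales. The overall idea is to realize both $U_0$ and $U_1$ as distance-preserving integral currents inside a single common metric space $Z$, and then exhibit an explicit $(n+1)$-dimensional filling current between them whose mass decomposes into the three contributions $2V_1$, $h_1 V$, and $h_1 A$.

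First I would construct $Z$ as the metric quotient of $U_0 \sqcup (U \times [0, h_1])$ obtained by gluing $U_0$ to the slice $U \times \{0\}$ via $F$, and equipping the result with a length metric that restricts to $g_0$ on $U_0$, to $g_1$ on the top slice $U \times \{h_1\}$, and whose vertical fibers have length $h_1$. The crucial property to verify is that both inclusions $\iota_0 : U_0 \hookrightarrow Z$ and $\iota_1 : U_1 \hookrightarrow Z$ are distance preserving. The only candidate shortcut between two top-sheet points $x, y \in W_1$ is a path that drops to the bottom sheet, travels horizontally, and reemerges; modelling this as a Euclidean hemispherical join, its length is at least $\sqrt{(2h_1)^2 + d_0(F(x), F(y))^2}$. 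Combining the hypothesis $d_1(x,y) \leq d_0(F(x),F(y)) + 2\alpha_1$ with $\mbox{Diam}(U_1) \leq D$ reduces the no-shortcut inequality to $4 h_1^2 \geq 8 \alpha_1 D + 4\alpha_1^2$, which is exactly the assumption $h_1 \geq \sqrt{2\alpha_1 D + \alpha_1^2}$. This is the step that consumes the Pythagorean-type bound on $h_1$.

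With the embeddings fixed, let $T_i = (\iota_i)_\# [U_i]$ for $i=0,1$ be the associated integral currents in $Z$. On the good region $W_1$, I would push the product current $[W_1 \times [0, h_1]]$ into $Z$ through the gluing to obtain an $(n+1)$-current $B_1$ interpolating between the images of $W_1$ in the two sheets, with $M(B_1) \leq h_1 \vol(W_1) \leq h_1 V$ by the product structure of the cylinder. Over the bad region $U_1 \setminus W_1$, where distance control is unavailable, I would not attempt to fill but instead absorb the corresponding pieces of $T_0$ and $T_1$ directly into the flat-distance estimate, contributing $\vol(F(U_1 \setminus W_1)) + \vol(U_1 \setminus W_1) \leq 2 V_1$ since $F$ is distance non-increasing. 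Finally, to cancel the side-wall boundary coming from $\pd U \times [0, h_1]$, I would add an auxiliary $n$-current $B_2$ supported on that cylinder with mass bounded by $h_1 \vol(\pd U_1) \leq h_1 A$. Assembling the three pieces yields $d_{\mf}(U_0, U_1) \leq M(B_1) + M(B_2) + 2V_1 \leq h_1 V + h_1 A + 2V_1$, as claimed.

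The main obstacle is the construction in the previous paragraphs' first step: verifying that the hemispherical gluing really produces a metric space in which both inclusions are distance preserving. One has to rule out shortcuts through the joining cylinder between any two points lying on the same sheet, which requires a careful triangle-inequality argument using the full strength of the hypothesis on $h_1$, and it is precisely here that the good set $W_1$ controls only pairs of points both lying in $W_1$, so the deficit must enter as a volume rather than a distortion. A secondary technical subtlety is ensuring that the candidate fillings are honest integer rectifiable currents with the correct boundary in $Z$: this relies on the biLipschitz and $C^1$-invertible nature of $F$ so that all pushforwards are well defined, together with the product structure of the cylinder so that $\partial B_1$ assembles correctly from the two sheets plus the side walls $B_2$.
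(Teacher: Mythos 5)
First, note that the paper does not actually prove this theorem: it is quoted from Allen--Perales \cite{Allen2020a}, and the paper only records that ``the theorem is proven by constructing an explicit metric space $Z$ in which $U_0$, $U_1$ embed in a distance preserving manner.'' Your proposal follows that same general template (glue a cylinder of height $h_1$ over $U$ to $U_0$, embed both spaces, fill over the good set, count the bad set and the side wall separately), and your mass bookkeeping $M(B_1)\le h_1 V$, side wall $\le h_1 A$, unfilled bad region $\le 2V_1$ is the right decomposition. The gap is in the one step you yourself flag as crucial: with the gluing you propose (the \emph{entire} bottom slice $U\times\{0\}$ identified with $U_0$ via $F$), the top embedding $\iota_1:U_1\to Z$ is genuinely not distance preserving. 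The hypothesis \eqref{est: dist} controls only pairs $x,y\in W_1$; for $x,y\in U_1\setminus W_1$ the path $(x,h_1)\to(x,0)\sim F(x)\to F(y)\sim(y,0)\to(y,h_1)$ has length at most $2h_1+d_0(F(x),F(y))$, which can be far smaller than $d_1(x,y)$ (of order $D$), since $F$ is merely distance non-increasing on the bad set and $h_1\sim\sqrt{2\alpha_1 D+\alpha_1^2}$ is small. Your suggestion to ``absorb'' this into the $2V_1$ term does not work: $2V_1$ bounds the mass of the excess current $T_1-T_0-\partial B$, whereas distance preservation of \emph{both} embeddings is a prerequisite for them to be admissible in the definition of $d_{\mathcal{F}}$ at all; no mass term can compensate for a non--distance-preserving embedding.

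The repair, which is what the cited construction actually does, is to glue the filling region to $U_0$ only over the good set $W_1$ (equivalently, to arrange $Z$ so that any excursion of a path off the top sheet must enter and exit through $W_1$). Then for arbitrary $x,y\in U_1$ such a competing path has length at least $\sqrt{h_1^2+d_1(x,p)^2}+d_0(F(p),F(q))+\sqrt{h_1^2+d_1(q,y)^2}$ with $p,q\in W_1$, and the triangle inequality through $p,q$ lets you invoke \eqref{est: dist} only where it is available. The hypothesis on $h_1$ is tailored to this per-leg estimate: $h_1\ge\sqrt{2\alpha_1 D+\alpha_1^2}$ is exactly $\sqrt{h_1^2+a^2}\ge a+\alpha_1$ for all $a\le D$, so each descent and ascent absorbs one of the two $\alpha_1$'s in \eqref{est: dist}, giving $d_Z(\iota_1(x),\iota_1(y))\ge d_1(x,y)$ for \emph{all} pairs, not just pairs in $W_1$. (Your modeling of the shortcut as a single join of height $2h_1$ over displacement $d_0(F(x),F(y))$ would only require $h_1^2\ge \alpha_1 D+\alpha_1^2$, a sign that it is not the estimate the hypothesis is designed for.) One must then also recheck that $\iota_0$ remains distance preserving after restricting the gluing to $W_1$ -- it does, because the projection of the cylinder to $(U,g_1)$ followed by $F$ is $1$-Lipschitz -- and that the boundary of the filling current decomposes into the two sheets, the side wall over $\partial U$, and the bad-region pieces, which is where your $h_1A$ and $2V_1$ terms legitimately enter.
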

The theorem is proven by constructing an explicit metric space $Z$ in which $U_0$, $U_1$ embed in a distance preserving manner. Although we will find that the diffeomorphism $F_1: \Sigma \to \sch(m_0)$ which restricts to a map $F_1: \uep \to \wuep$ is not necessarily distance decreasing, spherical symmetry allows us to construct an intermediary metric space for which there exists a distance non-increasing diffeomorphism. To show this, we will reference the following lemma from \cite{Allen2020}.
\begin{lemma}\label{lem: met dist}
    Let $U_1=(U,g_1)$ and $U_0 = (U,g_0)$be Riemannian manifolds and $F: U_1 \to U_0$ be a $C^1$ diffeomorphism. Then
    \begin{align}
        g_0(dF(v),dF(v)) \leq g_1(v,v) \mbox{ for all } v \in TU_1
    \end{align}
    if and only if
    \begin{align}
        d_0(F(p),F(q)) \leq d_1(p,q) \mbox{ for all } p,q \in U_1.
    \end{align}
\end{lemma}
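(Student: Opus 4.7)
The plan is to prove the equivalence in the standard way, handling each direction using the identification between Riemannian length and metric distance.

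For the forward direction, assume $g_0(dF(v), dF(v)) \le g_1(v,v)$ for every $v \in TU_1$. Given $p,q \in U_1$ and any piecewise $C^1$ curve $\gamma : [0,1] \to U_1$ from $p$ to $q$, the image $F \circ \gamma$ is a piecewise $C^1$ curve from $F(p)$ to $F(q)$ in $U_0$ whose $g_0$-length satisfies
\begin{align*}
L_{g_0}(F \circ \gamma) \;=\; \int_0^1 \sqrt{g_0(dF(\gamma'(t)), dF(\gamma'(t)))} \, dt \;\le\; \int_0^1 \sqrt{g_1(\gamma'(t), \gamma'(t))} \, dt \;=\; L_{g_1}(\gamma),
\end{align*}
by the pointwise hypothesis applied at each $\gamma(t)$. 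Because $F$ is a $C^1$ diffeomorphism, $\gamma \mapsto F \circ \gamma$ is a bijection between piecewise $C^1$ curves from $p$ to $q$ in $U_1$ and piecewise $C^1$ curves from $F(p)$ to $F(q)$ in $U_0$, so taking the infimum over $\gamma$ gives $d_0(F(p), F(q)) \le d_1(p,q)$.

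For the reverse direction, assume the distance inequality. Fix $p \in U_1$ and $v \in T_p U_1$, and let $\gamma : (-\tau, \tau) \to U_1$ be any $C^1$ curve with $\gamma(0) = p$ and $\gamma'(0) = v$. For small $t > 0$ the hypothesis gives $d_0(F(p), F(\gamma(t))) \le d_1(p, \gamma(t))$. The infinitesimal identification of Riemannian distance with the norm produces
\begin{align*}
\lim_{t \to 0^+} t^{-1} d_1(p, \gamma(t)) = \sqrt{g_1(v,v)}, \qquad \lim_{t \to 0^+} t^{-1} d_0(F(p), F(\gamma(t))) = \sqrt{g_0(dF(v), dF(v))},
\end{align*}
so dividing the distance inequality by $t$ and passing to the limit yields the pointwise bound. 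Since $p$ and $v$ were arbitrary, the converse implication follows.

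The only delicate point is the infinitesimal limit in the reverse direction. For smooth Riemannian metrics it is immediate from a normal coordinate chart around $p$ in which distance and norm agree up to second order. For the merely continuous Riemannian metrics to which the boundary VADB theorem \cite{Allen2020a} applies, the same limit holds by local uniform equivalence of each $g_i$ to a smooth background metric together with a standard $C^0$-approximation argument, so verifying this equivalence is the main item of technical work.
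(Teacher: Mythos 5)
Your proof is correct: the forward direction via comparing lengths of curves under $F$ and the reverse direction via the infinitesimal identification of Riemannian distance with the norm is the standard argument, and since the paper itself does not reproduce a proof of Lemma \ref{lem: met dist} but simply quotes it from \cite{Allen2020} (noting only that it extends to continuous metrics, which you also address), there is no genuinely different route in the paper to compare against. The only points left implicit --- vectors based at boundary points, which follow from the interior case by density and continuity of $g_1 - F^*g_0$, and the continuous-metric refinement you already flag --- are routine and do not affect correctness; note also that the paper only ever uses the forward implication.
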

We note that the proof of this lemma may be extended to continuous metrics. For a more thorough overview of the intrinsic flat distance, we refer the reader to Section 2.3 of \cite{Allen2020}.

The volume preserving instrinsic flat distance adds a term that controls for the convergence of the global volumes:
\begin{align}
    d_{\mv\mf}(\Omega, \widetilde{U}_{a(\epsilon_1)}^A) = d_{\mf}(\Omega, \widetilde{U}_{a(\epsilon_1)}^A) + |\mbox{Vol}_{g}(\Omega) - \mbox{Vol}_{g'}(\widetilde{U}_{a(\epsilon_1)}^A)|.
\end{align}
Portegies and Jauregue-Lee have shown that volume preserving intrinsic flat convergence implies that the volumes of balls centered on a convergent sequence converge to the volume of the ball in the limit space centered at the limit point \cite{Jauregui2019,Portegies2015}. It is stronger than the unmodified intrinsic flat convergence, but in our case the work needed to upgrade intrinsic flat convergence to volume preserving intrinsic flat convergence is minimal.

\subsection{The Penrose Inequality via the Generalized Jang Equation}

Our approach follows closely that of Bray and Khuri in \cite{Bray2010}. In this section, we restate with discussion the theorems from their generalized Jang approach to the Penrose inequality. As we use multiple elements from their proof of the Penrose inequality and as their result is not stated or proven for general dimension, we reproduce their proof of the Penrose inequality and generalize it to higher dimension. 

First, we state the scalar curvature of the generalized Jang equation. The reader familiar with the expression for the scalar curvature for the trivial extension of the Jang equation should note the differences appear primarily in the divergence term, which is multiplied by the inverse of the warping factor. Indeed, we have that the scalar curvature of the Jang surface, denoted by $\bar{R}$ is given by
	\begin{align}\label{eq: scalar curv}
		\bar{R} = 16\pi(\mu-J(w)) + |h-K|_{\Sigma}|_{\bg}^2 + 2|q|_{\bg}^2 - 2 \phi^{-1}\mbox{div}_{\bg}(\phi q),
	\end{align}
	where $h$ is the second fundamental form, $K|_\Sigma$ is the restriction to $\Sigma$ of the extended tensor $K$, $q$ is a 1-form, and $w$ is a vector with $|w|_g \leq 1$ given by
	\begin{align}
		w = \frac{f^i\pd_{x^i}}{\sqrt{\phi^{-2} + |\nabla_g f|^2 }}, \mbox{ and }
		q_i = \frac{g^{ij}f_{,i} }{\sqrt{\phi^{-2} + |\nabla_g f|^2 }}(h_{ij} - (K|_\Sigma)_{ij}).
	\end{align}
By \eqref{eq: scalar curv}, we can see that once we assume the dominant energy condition \eqref{assumption: DEC}, our the generalized Jang scalar curvature is positive modulo the divergence term. The crux of the stability result we prove here rests on studying the behavior of this divergence term. Although we do not get pointwise control, $L^2$ control is enough to prove the intrinsic flat convergence. 

Another difference between \eqref{eq: scalar curv} and the expression for the trivial Jang extension appears in the second term:
\begin{align}
    |h-K|_{\Sigma}|_{\bg}^2.
\end{align}
Note that $h$ is the second fundamental form of $\Sigma$ as the image of $M$ in $(M \times \IR, g+\phi^2 dt^2)$, not of $M$ inside $(\Sigma \times \IR, \bg- \phi^2 dt^2)$ which we have denoted by $\pi$. Moreover, $K|_{\Sigma}$ is the extension of $k$, not $k$ itself. In order to show that $\pi - k$ is somehow close to zero, we need that
\begin{align}\label{eq: 2nd fund forms}
    h-K|_{\Sigma} = \pi - k.
\end{align}
This is the content of the next lemma, which otherwise appears in Appendix B of \cite{Bray2010}. We leave out the details of computations appearing in Appendix A of \cite{Bray2010}.

\begin{lemma}\label{lem: 2nd fund forms}
Let $h$ be the second fundamental form of the Jang surface $(\Sigma,\bg)$ inside of $(M\times \IR, g+ \phi^2 dt^2)$ and let $\pi$ be the second fundamental form of $G(M)$ inside of $(\Sigma \times \IR, \bg - \phi^2 dt^2)$. Recall that $K$ is the extension of the initial data $k$ given by \eqref{eq: extension for K}. Then we have \eqref{eq: 2nd fund forms}.
\end{lemma}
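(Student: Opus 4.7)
Since the identity is a pointwise tensor statement, my plan is a direct local-coordinate computation. Pick coordinates $(x^i)_{i=1}^n$ on $M$, extended to $(x^i,t)$ on both $M\times\IR$ and $\Sigma\times\IR$. Because $F$ and $G$ share the parametrization $x\mapsto(x,f(x))$, the frame $e_i = \pd_i + f_{,i}\pd_t$ simultaneously spans $T\Sigma$ and $TG(M)$, so all four tensors $h$, $\pi$, $K|_\Sigma$, and $k$ pull back to symmetric $(0,2)$-tensors on $M$ that can be compared entry-by-entry in the frame $(\pd_i)$. The $K|_\Sigma$ side is purely algebraic: expanding $K(e_i,e_j)$ against the prescribed entries in \eqref{eq: extension for K}, where the mixed entries vanish, yields
\begin{equation*}
(F^*K|_\Sigma)_{ij} = k_{ij} + \bracket{N,\phi\nabla_g\phi}_{g+\phi^2 dt^2}\, f_{,i}f_{,j},
\end{equation*}
so the lemma reduces to verifying
\begin{equation*}
h_{ij} - \pi_{ij} = \bracket{N,\phi\nabla_g\phi}_{g+\phi^2 dt^2}\, f_{,i}f_{,j}.
\end{equation*}

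To compute $h_{ij}$, I would solve the orthogonality and unit-norm conditions in $(M\times\IR, g+\phi^2 dt^2)$ to obtain the explicit unit normal $N = -(\phi/W)\,\nabla_g f + (\phi W)^{-1}\pd_t$, with $W = \sqrt{1+\phi^2|\nabla_g f|_g^2}$, and then use the warped-product Christoffel identities $\widetilde{\nabla}_{\pd_i}\pd_t = (\phi_{,i}/\phi)\pd_t$ and $\widetilde{\nabla}_{\pd_t}\pd_t = -\phi\nabla_g\phi$ to expand $\widetilde{\nabla}_{e_i}e_j$ and pair with $N$. For $\pi_{ij}$ I would repeat the construction in the Lorentzian warped product $(\Sigma\times\IR,\bg-\phi^2 dt^2)$: the graph $G(M)$ is spacelike (its induced metric is $\bg-\phi^2 df\otimes df = g$), and the timelike unit normal $\nu$ is determined analogously, while the corresponding Lorentzian Christoffel identities agree with the Riemannian ones except that $\widehat{\nabla}_{\pd_t}\pd_t = +\phi\nabla_{\bg}\phi$ carries the opposite sign and the base part uses Christoffels of $\bg$. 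Expanding both expressions in the common frame $e_i$, the Hessian $\nabla^2 f$ contributions and the mixed $(f_{,i}\phi_{,j}+f_{,j}\phi_{,i})$ contributions should line up, leaving a residual $f_{,i}f_{,j}$-piece whose coefficient evaluates — after substituting the explicit form of $N$ — to exactly $\bracket{N,\phi\nabla_g\phi}_{g+\phi^2 dt^2}$.

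The main obstacle is the careful bookkeeping between the two geometries, because the base metrics $g$ (used for $\pi$) and $\bg = g + \phi^2\, df\otimes df$ (used for $h$) differ by a rank-one modification, and during the comparison one must repeatedly invoke the Sherman–Morrison relation $\bg^{ij} = g^{ij} - \phi^2 f^{,i}f^{,j}/W^2$ to reconcile contractions. In particular $\bar\Gamma^c_{ij}\ne\Gamma^c_{ij}(g)$ in general, and it is only after exploiting $\bg_{ij} = g_{ij} + \phi^2 f_{,i}f_{,j}$ together with the identity $G^*(\bg-\phi^2 dt^2) = g$ that the $\bar\Gamma^c_{ij}f_{,c}$ term in $\pi_{ij}$ reduces to its correct counterpart. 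The prescription $K_{n+1,n+1} = \bracket{N,\phi\nabla_g\phi}_{g+\phi^2 dt^2}$ in \eqref{eq: extension for K} is engineered precisely so that the signature- and base-metric-dependent $f_{,i}f_{,j}$ discrepancy between $h$ and $\pi$ is absorbed by $K|_\Sigma$, which closes \eqref{eq: 2nd fund forms}.
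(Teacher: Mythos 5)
Your proposal is correct and follows essentially the same route as the paper: both work in the common graphical frame $X_i=\pd_i+f_{,i}\pd_t$, evaluate $K(X_i,X_j)=k_{ij}+\bracket{N,\phi\nabla_g\phi}f_{,i}f_{,j}$ algebraically, and then show that the graphical expressions for $h$ and $\pi$ (Riemannian vs.\ Lorentzian warped product, with the sign flip in $\nabla_{\pd_t}\pd_t$) differ exactly by that $f_{,i}f_{,j}$ term after reconciling $g$- and $\bg$-quantities via $g=\bg-\phi^2df^2$. The only difference is that the paper quotes the graphical second-fundamental-form formulas from Appendix A of Bray--Khuri, while you would re-derive them from the warped-product Christoffel symbols, which is the same computation carried out one level deeper.
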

\begin{proof}
    Let 
    \begin{align}
        X_i = \pd_{x^i} + f_{,i}\pd_{x^{n+1}}, \quad i = 1, 2, 3
    \end{align}
    be tangent vectors to $\Sigma$. Then by equation (42) of \cite{Bray2010}, 
    \begin{align}
        h_{ij} = \frac{\nabla_{ij}f + (\log\phi)_{,i} f_{,j} + (\log \phi)_{,j} f_{,i} + g^{\ell p} \phi \phi_{,\ell} f_{,p} f_{,i} f_{,j} }{\sqrt{\phi^{-2}  + \left| \nabla_g f \right|^2 }}
    \end{align}
    where $\nabla_{ij}$ represents covariant differentiation with respect to $g$. Meanwhile, if $\bar{\nabla}$ represents covariant differentiation with respect to $\bg$, then
    \begin{align}
         \pi_{ij} := \frac{\bar{\nabla}_{ij}f + (\log\phi)_{,i} f_{,j} + (\log \phi)_{,j} f_{,i} - \bg^{\ell p} \phi \phi_{,\ell} f_{,p} f_{,i} f_{,j} }{\sqrt{\phi^{-2}  + \left| \nabla_{\bg} f \right|^2 }}.
    \end{align}
    Calculations making use of the equality $g = \bg - \phi^2 df^2$ show that 
    \begin{align}
    \begin{split}
        \nabla_{ij}f + (\log\phi)_{,i} f_{,j} &+ (\log \phi)_{,j} f_{,i} + g^{\ell p} \phi \phi_{,\ell} f_{,p} f_{,i} f_{,j}
        \\
        &=  \frac{\phi^{-2}}{\phi^{-2} - \left| \nabla_{\bg} f \right|^2}\left(\bar{\nabla}_{ij}f + (\log\phi)_{,i} f_{,j} + (\log \phi)_{,j} f_{,i}  \right) 
        \end{split}
    \end{align}
    and
    \begin{align}
        \frac{1 }{\sqrt{\phi^{-2}  + \left| \nabla_g f \right|^2 }} = \phi^2\sqrt{\phi^{-2} - \left| \nabla_{\bg} f \right|^2 }.
    \end{align}
    It follows that
    \begin{align}
    \begin{split}
        h_{ij} &= \frac{\sqrt{\phi^{-2} - \left| \nabla_{\bg} f \right|^2 }}{\phi^{-2} - \left| \nabla_{\bg} f \right|^2}\left(\bar{\nabla}_{ij}f + (\log\phi)_{,i} f_{,j} + (\log \phi)_{,j} f_{,i} \right) \\
        &= \pi_{ij} + \frac{\bg^{\ell p} \phi \phi_{,\ell} f_{,p} f_{,i} f_{,j} }{\sqrt{\phi^{-2}  + \left| \nabla_{\bg} f \right|^2 }} \\
        &= \pi_{ij} + \frac{ \bracket{\phi \nabla_g \phi, \nabla_g f}_g  }{\sqrt{\phi^{-2}  + \left| \nabla_{g} f \right|^2 }} f_{,i} f_{,j}.
    \end{split}
    \end{align}
    
    On the other hand, we have
    \begin{align}
        (K|_{\Sigma})_{ij} = K(X_i,X_j) = k_{ij} + \frac{ \bracket{\phi \nabla_g \phi, \nabla_g f}_g  }{\sqrt{\phi^{-2}  + \left| \nabla_{g} f \right|^2 }} f_{,i} f_{,j}.
    \end{align}
    It is immediate then that \eqref{eq: 2nd fund forms} holds.
\end{proof}
Note that \eqref{eq: scalar curv} and Lemma \ref{lem: 2nd fund forms} are both independent of the choice of warping factor $\phi$ and the assumption that the initial data is spherically symmetric. However, the existence and uniqueness of the Jang equations relies heavily on the choice of $\phi$ as \eqref{eq: set phi} and spherical symmetry. Recall that the solution $v$ of the Jang equation which determines the embedding $F$ is given by \eqref{eq: eq for v}. If we assume that the initial data are smooth, satisfy the outermost apparent horizon condition \eqref{assumption: apparent horizon}, and the asymptotics \eqref{ass: asym flat} then there exists a unique solution to \eqref{eq: Jang Equation} given by $v \in C^{\infty}((0,\infty)) \cap C^1([0,\infty))$
so that
\begin{align}\label{eq: bd on v}
    0 \leq |v| \leq 1
\end{align}
which satisfies asymptotics to ensure that the Jang surface $\Sigma$ is also asymptotically flat:
\begin{align}\label{eq: asymp for v}
    |v(r)| + r|v_{,r}(r)| \leq C r^{1-n} \mbox{ as } r \to \infty
\end{align}
for a constant $C$ depending only on $|g|_{C^1((0,\infty))}$ and $|k|_{C^0((0,\infty)) }$. We choose boundary data $\alpha = \pm 1$ so that the Jang solution $v$ will blow up to an apparent horizon and the Jang solution will have the same mass information as the initial data. Now we are ready to state and prove the spacetime Penrose inequality in spherical symmetry. 

\begin{theorem}[Spacetime Penrose Inequality in Spherical Symmetry]\label{Penrose Inequality}
    Suppose $n$-dimensional spherically symmetric initial data $(M,g,k)$ satisfies \eqref{assumption: apparent horizon}, and \eqref{ass: asym flat}. Suppose further that the initial data satisfies the dominant energy condition \eqref{assumption: DEC}. Then 
    \begin{align}
        m_{ADM} \geq m_0
    \end{align}
    and if $m_{ADM} = m_0$ then $(M,g)$ arises as an isometric embedding into Schwarzschild spacetime with second fundamental form $k$.
\end{theorem}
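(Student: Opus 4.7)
The plan is to use the generalized Jang surface $(\Sigma,\bar g)$ to interpolate between the horizon and infinity by Geroch monotonicity of the Hawking mass. First, I would solve the Jang equation \eqref{eq: Jang Equation} with warping factor $\phi=\rho_{,s}$ given by \eqref{eq: set phi} and with the blow-up boundary data $\alpha=\pm 1$ described after \eqref{eq: asymp for v}; this gives a unique $v\in C^\infty((0,\infty))\cap C^1([0,\infty))$ satisfying $|v|\le 1$ and the decay \eqref{eq: asymp for v}. The choice $\phi=\rho_{,s}$ couples the Jang system to inverse mean curvature flow and makes it solvable in spherical symmetry. The chosen boundary data ensure that (i) $(\Sigma,\bar g)$ inherits an outermost minimal sphere of area $|\partial M|$, so $m_0(\Sigma)=m_0(M)=m_0$, and (ii) the asymptotics \eqref{ass: asym flat} together with \eqref{eq: asymp for v} force $\Sigma$ to be asymptotically flat with $m_{ADM}(\Sigma)=m_{ADM}(M)=m_{ADM}$.

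Next, I would parametrize $\Sigma$ by arc length $s$, so $\bar g=ds^2+\rho(s)^2 d\Omega^2$ and the Hawking mass of the level sphere $S_s$ reduces to
\begin{equation}
\bar m(s)=\tfrac{1}{2}\rho^{n-2}\bigl(1-\rho_{,s}^{2}\bigr)=\tfrac{1}{2}\rho^{n-2}(1-\phi^{2}).
\end{equation}
A direct computation using the spherically symmetric formula
\begin{equation}
\bar R=-2(n-1)\tfrac{\rho_{,ss}}{\rho}+(n-1)(n-2)\tfrac{1-\rho_{,s}^{2}}{\rho^{2}}
\end{equation}
gives the Geroch identity $\bar m'(s)=\frac{\rho^{n-1}\phi}{2(n-1)}\bar R$. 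Substituting the scalar curvature formula \eqref{eq: scalar curv} and using the fact that the divergence is multiplied by $\phi$, the key cancellation occurs: $\phi\cdot\phi^{-1}\operatorname{div}_{\bar g}(\phi q)=\operatorname{div}_{\bar g}(\phi q)$, and in spherical symmetry with $q=q_1\,ds$ this is $\rho^{-(n-1)}(\rho^{n-1}\phi q_1)_{,s}$, a total $s$-derivative. Therefore
\begin{equation}
\bar m'(s)=\tfrac{\rho^{n-1}\phi}{2(n-1)}\Bigl[16\pi(\mu-J(w))+|h-K|_\Sigma|_{\bar g}^{2}+2|q|_{\bar g}^{2}\Bigr]-\tfrac{1}{n-1}\bigl(\rho^{n-1}\phi q_1\bigr)_{,s}.
\end{equation}
The bracket is pointwise nonnegative by the dominant energy condition \eqref{assumption: DEC} (since $|w|_g\le 1$ forces $|J(w)|\le|J|_g\le\mu$).

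Integrating from the horizon $s=0$ to infinity and using $\rho^{n-1}\phi=0$ at the horizon (because $\bar H_{S_s}=(n-1)\phi/\rho$ vanishes on the minimal sphere) and at infinity (by \eqref{eq: asymp for v} translated to $\bar g$ coordinates), the boundary contributions drop out and yield
\begin{equation}
m_{ADM}-m_0=\int_0^\infty\bar m'(s)\,ds\ge 0,
\end{equation}
proving the inequality. For rigidity, equality forces each nonnegative integrand to vanish identically: $\mu=J(w)$, $q\equiv 0$, and $h=K|_\Sigma$. By Lemma \ref{lem: 2nd fund forms} the last identity is precisely $\pi=k$, so the original $k$ is realized as the second fundamental form of $G(M)$. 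Moreover $\bar R\equiv 0$ together with spherical symmetry, asymptotic flatness, and mass $m_0$ forces $(\Sigma,\bar g)=(\sch(m_0),g_S)$ and hence $\phi=\phi_S$, exhibiting $(M,g,k)$ as an isometric embedding into Schwarzschild spacetime. The main obstacle is justifying the vanishing of the boundary terms at the horizon, where $\phi\to 0$ but $f$ blows up simultaneously; this requires careful analysis of the Jang blow-up asymptotics near $s=0$ to control $\rho^{n-1}\phi q_1$ uniformly, which is precisely what the apparent horizon condition \eqref{assumption: apparent horizon} and the chosen Jang boundary data are designed to deliver.
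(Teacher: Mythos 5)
Your proposal follows essentially the same route as the paper: solve the generalized Jang system with $\phi=\rho_{,s}$ and blow-up boundary data so that $m_0$ and $m_{ADM}$ are preserved, differentiate the Hawking mass $m(s)=\tfrac12\rho^{n-2}(1-\rho_{,s}^2)$, substitute the Jang scalar curvature \eqref{eq: scalar curv} so that the factor $\phi$ cancels the $\phi^{-1}$ in front of the divergence term, use the dominant energy condition for the bulk positivity, and kill the boundary terms; the rigidity discussion is also the paper's. Writing it as a pointwise Geroch identity $m'(s)$ rather than integrating over $\Sigma$ and applying Stokes, as the paper does, is only a cosmetic difference.

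There is, however, one genuine gap, and you partially flag it yourself: the vanishing of the boundary term at the horizon. Your stated justification, that $\rho^{n-1}\phi\to 0$ because $\bar H_{S_s}\to 0$ at the minimal boundary, is not sufficient, because the boundary flux is $\rho^{n-1}\phi\, q_1$ (up to the volume factor) and $q_1$ diverges there at exactly the compensating rate: in $r$ coordinates one has
\begin{align}
    q(\pd_r) \;=\; -2\sqrt{g_{11}}\,\frac{v}{1-v^2}\left(\frac{\rho_{,r}}{\rho\sqrt{g_{11}}}\,v - k(\eta,\eta)\right),
\end{align}
and $1-v^2\to 0$ as $v\to\pm1$ at the blow-up boundary, so the product $\phi\,q_1$ is a priori of the form $0\cdot\infty$. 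The paper closes this by the explicit computation (Bray--Khuri, Appendix C): after inserting $\phi=\sqrt{1-v^2}\,\rho_{,r}/\sqrt{g_{11}}$ and the area and volume factors, all $(1-v^2)$ factors cancel and the boundary term equals $\pm\,\frac{2\rho_{,r}v}{\sqrt{g_{11}}}\left(\frac{\rho_{,r}}{\rho\sqrt{g_{11}}}\,v - k(\eta,\eta)\right)$, a quantity that is manifestly bounded; it vanishes on $S_0$ not because $\phi$ does, but because the Jang boundary data $v=\pm1$ preserves the apparent horizon, forcing $\frac{\rho_{,r}}{\rho\sqrt{g_{11}}}\,v - k(\eta,\eta)=0$ there (this is the vanishing null expansion of the horizon), while at infinity the decay \eqref{eq: asymp for v} together with \eqref{ass: asym flat} gives the vanishing. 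So the missing step in your argument is precisely this cancellation-plus-horizon-condition computation; appealing to ``careful blow-up asymptotics'' without it leaves the inequality unproved at its most delicate point.
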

\begin{proof}
We can solve the Jang equation for $(M,g,k)$ to get the Jang surface $(\Sigma, \bg, h)$ as in \eqref{jang as image under f}. Moreover, the asymptotics given by \eqref{ass: asym flat} and the choice of horizon boundary for the Jang solution means that $m_{ADM}=m_{ADM}(\Sigma)$ and $m_0 = m_0(\Sigma)$ so that it suffices to prove the inequality for $(\Sigma, \bg)$.

For the first part of the proof, we work in arc length radial coordinates denoted by $(s,\theta)$ on $(\Sigma, \bg)$. In spherical symmetry, the Hawking mass (or Misner-Sharp mass) of the $(n-1)$-sphere $S_s$ generalizes as
\begin{align}
    m(s) &:= \frac{1}{2}\rho^{n-2}(1-\rho_{,s}^2) \label{eq: mass line 1}\\
    &= \frac{1}{2}\left(\frac{|S_s|}{\omega_{n-1}} \right)^{\frac{n-2}{n-1}}\left[1 - \frac{1}{(n-1)^2(\omega_{n-1}^2 |S_s|^{n-3})^{\frac{1}{n-2}}}\int_{S_s} H_{S_s}^2 \right] \label{eq: mass line 2}
\end{align}
where $H_{S_s}$ in $\bg$ as computed in \eqref{eq: mean curv sphere}. We also note that in spherical symmetry the scalar curvature of $\bg$ is
\begin{align}\label{eq: scalar curv 2}
    \bar{R} &= \frac{n-1}{\rho^2}\left((n-2)(1-\rho_{,s}^2) -2\rho\rho_{,ss}  \right).
\end{align}

Differentiate the expression for $m(s)$ in \eqref{eq: mass line 1} with respect to $s$ and use the fundamental theorem of calculus to get that
\begin{align}\label{the one}
    m(\infty) - m(0) &=  \int_0^\infty \frac{1}{2(n-1)}\rho^{n-1}\rho_{,s}\bar{R} ds 
\end{align}
where we have simplified the expression using \eqref{eq: scalar curv 2}. We leverage spherical symmetry to write the integral over $\Sigma$ with volume form $\d\omega_{\bg}$ and then substitute the formula for $\bar{R}$ given by $\eqref{eq: scalar curv}$ to get
\begin{align}
    m_{ADM} - m_0 &= \frac{1}{2\omega_{n-1}(n-1)}\int_\Sigma \rho_{,s} \bar{R} d \omega_{\bg} \label{eq: mass term}  \\
    &= \frac{1}{2\omega_{n-1}(n-1)}\int_\Sigma  \phi \left(16\pi(\mu-J(w)) + |h-K|_{\Sigma}|_{\bg}^2 + 2|q|_{\bg}^2 \right)d \omega_{\bg} \label{eq: pos term} \\
    & \quad -\frac{1}{\omega_{n-1}(n-1)}\int_\Sigma  \mbox{div}_{\bg}(\phi q)  d \omega_{\bg}. \label{eq: div term}
\end{align}
Assumption \eqref{assumption: apparent horizon} guarantees that the mean curvature of $S_s$ is positive, which by \eqref{eq: set phi} implies that $\phi$ is positive as well. This combined with the dominant energy condition \eqref{assumption: DEC} gives that \eqref{eq: pos term} is strictly positive. We can apply Stokes theorem to \eqref{eq: div term} and arrive at
\begin{align}
    m_{ADM} - m_0 \geq  -\frac{1}{\omega_{n-1}(n-1)}\int_{S_{\infty} \cup S_0} \phi \bg( q, n_{\bg})  d \sigma_{\bg}
\end{align}
where $n_{\bg}$ is the outer unit normal and $d\sigma_{\bg}$ is the appropriate surface form. 

The computation in Appendix C of \cite{Bray2010} shows that the part of $q$ normal to spheres can be expressed in $r$ coordinates as
\begin{align}
    q(\pd_r) = -2\sqrt{g_{11}}\frac{v}{1-v^2} \left( \frac{\rho_{,r}}{\rho\sqrt{g_{11}}} v - k(\eta,\eta) \right),
\end{align}
and that in the diagonal metric, $q_i = 0$ for $i > 1$. For a sphere of radius $r$,
\begin{align}
\begin{split}
   -\frac{1}{\omega_{n-1}(n-1)}\int_{S_r} \phi \bg( q, n_{\bg})  d \sigma_{\bg}
   &=  \frac{\rho^{n-1}\phi q_1}{(n-1)\sqrt{g_{11} + \phi^2f_{,r}^2 }}   \\
   &= \pm \frac{2\rho_{,r}v}{\sqrt{g_{11}}}\left( \frac{\rho_{,r}}{\rho\sqrt{g_{11}}} v - k(\eta,\eta) \right) 
\end{split}
\end{align}
depending on whether the boundary is a past or future horizon. By the preservation of the apparent horizon under the choice of $v$,
\begin{align}
    \frac{\rho_{,r}}{\rho\sqrt{g_{11}}} v - k(\eta,\eta) = 0 
\end{align}
on the inner boundary so that $Q$ vanishes on $S_0$. Moreover, the asymptotics for $v$ given by \eqref{eq: asymp for v} and the fall off conditions in \eqref{ass: asym flat} guarantee that $Q(\infty) = 0$. It follows that
\begin{align}
    m_{ADM} - m_0 \geq 0
\end{align}
which proves the inequality.

In the case of equality, the evaluation of the boundary term still holds so that from \eqref{eq: pos term} we have
\begin{align}
    0 = \frac{1}{2\omega_{n-1}(n-1)}\int_\Sigma  \phi \left(16\pi(\mu-J(w)) + |h-K|_{\Sigma}|_{\bg}^2 + 2|q|_{\bg}^2 \right)d \omega_{\bg}
\end{align}
which immediately implies that 
\begin{align}
    \mu - |J|_g \equiv 0, \quad h-K|_{\Sigma} \equiv 0, \mbox{ and } \quad q \equiv 0.
\end{align}
It follows that $\bar{R} = 0$, and from the time symmetric Penrose inequality we deduce that $\bg \cong g_S$ so that, in particular, $\phi = \phi_S$. Moreover,
\begin{align}
    g = \bg - \phi^2 df^2 = g_S - \phi^2 df^2
\end{align}
so that the graph map $G: M \to \sch(M_0) \times \IR$ is an isometric embedding of the initial data into Schwarschild spacetime. 

Finally, by Lemma \ref{lem: 2nd fund forms}, $h-K|_{\Sigma} \equiv 0$ implies that $\pi - k \equiv 0$, and $k$ is indeed the second fundamental form of the embedding into Schwarzschild spacetime.
\end{proof}

\section{Preliminaries}\label{prelims}

We now have all the background we need to set up the proof of Theorem \ref{theorem: IF convergence of Jang}. Recall that $U_{\rho_0}^A = \{ (\rho,\theta) \in \Sigma \spa \rho_0 \leq \rho \leq A \}$ and $\uep =\{ (\rho,\theta) \in \Sigma \spa a(\epsilon_1) \leq \rho \leq A \}$ where 
\begin{align}
    a(\epsilon_1) := \min\{\rho \spa (\rho,\theta) \in T_{\widetilde{D}}(S_A) \} + \epsilon_1
\end{align}
for 
\begin{align}
    \widetilde{D} = \min\left\{d_{g_S}(x,y) \spa x \in \widetilde{S}_A, y \in \pd\sch(m_0) = \widetilde{S}_{\rho_0} \right \}.
\end{align}
Recall also that we may write our Jang metric as 
\begin{align}
    \bg = \frac{1}{\rho_{,s}^2} d\rho^2 + \rho^2 d\Omega^2.
\end{align}

We begin with some preliminary results to control the warping factor $\phi$ and the divergence term in the scalar curvature equation \eqref{eq: scalar curv}. The control over the divergence term is essential for the remainder of our argument. As a consequence, we prove the local convergence of $\phi$ to $\phi_S$. Our first result gives a local upper bound for $\phi$, for which we need the bounded expansion condition.

\begin{lemma}\label{lem: bound on phi}
If we assume the initial data $(M,g,k)$ are smooth, asymptotically flat, spherically symmetric, and satisfy the apparent horizon condition, we can solve the Jang equation for $\phi = \rho_{,s}$ where $s$ is the arc length parameter on $(\Sigma,\bg)$. Additionally, if we assume the bounded expansion condition \eqref{assumption: bounded expansions.}, we have that 
\begin{align}\label{eq: bound on phi}
    \phi \leq AC
\end{align}
on $U_{\rho_0}^A$.
\end{lemma}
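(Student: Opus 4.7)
The plan is to reduce the bound on $\phi = \rho_{,s}$ to the mean curvature bound on spheres in $M$ provided by the uniformly bounded null expansions, using the change of variable between the arc length parameter $s$ on $(\Sigma,\bg)$ and the original radial parameter $r$ on $M$.

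First, I would convert back to $r$-coordinates. By definition $s = \int_0^r \sqrt{g_{11} + \phi^2 f_{,r}^2}\, dr$, so
\begin{align*}
    \phi = \rho_{,s} = \frac{\rho_{,r}}{\sqrt{g_{11} + \phi^2 f_{,r}^2}}.
\end{align*}
Next I would rewrite the denominator in terms of the Jang quantity $v$ from \eqref{eq: eq for v}. A direct algebraic manipulation using $v^2 = \phi^2 g^{11} f_{,r}^2/(1+\phi^2 g^{11} f_{,r}^2)$ gives $1-v^2 = 1/(1+\phi^2 g^{11} f_{,r}^2)$, hence
\begin{align*}
    \sqrt{g_{11} + \phi^2 f_{,r}^2} \;=\; \frac{\sqrt{g_{11}}}{\sqrt{1-v^2}},
\qquad\text{so}\qquad
    \phi \;=\; \frac{\rho_{,r}\sqrt{1-v^2}}{\sqrt{g_{11}}}.
\end{align*}

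Now I would invoke the bound $|v|\le 1$ from \eqref{eq: bd on v} to throw away the $\sqrt{1-v^2}$ factor, reducing the problem to a bound on $\rho_{,r}/\sqrt{g_{11}}$ in the original initial data. Recognizing this ratio as half the mean curvature of coordinate spheres in $(M,g)$, namely $\rho_{,r}/\sqrt{g_{11}} = \tfrac{\rho}{2} H_{S_r}$, I would then apply the uniform bound $H_{S_r}\le C$ that is an immediate consequence of the bounded expansion assumption \eqref{assumption: bounded expansions.} (as already noted in the setup: adding $\Theta_+$ and $\Theta_-$ yields precisely $H_{S_r}\le C$). This gives
\begin{align*}
    \phi \;\le\; \frac{\rho_{,r}}{\sqrt{g_{11}}} \;=\; \frac{\rho}{2} H_{S_r} \;\le\; \frac{C\rho}{2}.
\end{align*}
Finally, on $U_{\rho_0}^A$ the radial function $\rho$ is bounded above by $A$, and absorbing constants into $C$ yields the desired bound $\phi \le AC$.

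There is no real obstacle here: the argument is essentially just chasing the definition of $\phi$ through the Jang change of variables and applying the two a priori bounds ($|v|\le 1$ and $H_{S_r}\le C$) that are already in hand. The only thing worth flagging is that this estimate genuinely uses the stronger assumption \eqref{assumption: bounded expansions.} rather than just the strict positivity of the expansions in \eqref{assumption: apparent horizon}; without the uniform upper bound on $\Theta_\pm$, the mean curvature $H_{S_r}$ could blow up on compact sets, and no pointwise upper bound on $\phi$ of this form would be available.
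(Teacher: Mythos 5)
Your proposal is correct and follows essentially the same route as the paper: rewrite $\phi=\rho_{,s}$ in $r$-coordinates, use $\sqrt{g_{11}+\phi^2 f_{,r}^2}=\sqrt{g_{11}}/\sqrt{1-v^2}$ and $|v|\le 1$ to get $\phi\le \rho_{,r}/\sqrt{g_{11}}$, and then apply the mean curvature bound $H_{S_r}\le C$ coming from the bounded expansion condition together with $\rho\le A$ on $U_{\rho_0}^A$. The only differences are cosmetic (tracking the factor of $2$ in $H_{S_r}$ and absorbing it into $C$), so the argument matches the paper's proof.
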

\begin{proof}
We can re-write $\phi$ in $r$ coordinates by noting that
\begin{align}
    \frac{dr}{ds} &= \frac{1}{\sqrt{g_{11} + \phi^2 f_{,r}^2}} = \frac{\sqrt{1-v^2}}{\sqrt{g_{11}} }
\end{align}
so that, because $\rho_{,r}$ is positive, 
\begin{align} 
    \phi = \frac{\sqrt{1-v^2}}{\sqrt{g_{11}} } \rho_{,r} \leq \frac{\rho_{,r}}{\sqrt{g_{11}}}.
\end{align}
As described previously, we can deduce from the the bounded expansion condition \eqref{assumption: bounded expansions.} that
\begin{align}
    H_{S_r}^{g} = \frac{2 \rho_{,r}}{\rho \sqrt{g_{11}}} \leq C
\end{align}
and thus at any given $r \in \Sigma$, $\frac{\rho_{,r}}{\sqrt{g_{11}}} \leq \rho C $ so that on $U_{\rho_0}^A$,
\begin{align}
    \phi \leq  \frac{\rho_{,r}}{\sqrt{g_{11}}} \leq AC.
\end{align}
\end{proof}

Now we seek to control the divergence term in the expression for scalar curvature. Following \eqref{the one}, we have in radial coordinates that
\begin{align}
    \begin{split}
        m(\rho) - m_0 &= \frac{1}{2\omega_{n-1}(n-1)}\int_{U_{\rho_0}^\rho} \rho_{,s} \bar{R} d \omega_{\bg} \\
        &= \frac{1}{2\omega_{n-1}(n-1)}\int_{U_{\rho_0}^\rho}  \phi \left(16\pi(\mu-J(w)) + |h-K|_{\Sigma}|_{\bg}^2 + 2|q|_{\bg}^2 \right)d \omega_{\bg} \\
        &\quad -\frac{1}{\omega_{n-1}(n-1)}\int_{S_{\rho}} \phi \bg( q, n_{\bg})  d \sigma_{\bg} \\
        &= \frac{1}{2\omega_{n-1}(n-1)}\int_{U_{\rho_0}^\rho}  \phi \left(16\pi(\mu-J(w)) + |h-K|_{\Sigma}|_{\bg}^2 + 2|q|_{\bg}^2 \right)d \omega_{\bg} \\
        &\quad -\frac{\rho^{n-1}}{n-1} \phi^2 q_1
    \end{split}
\end{align}
where
\begin{align}
    q_1 := q(\pd_{\rho}).
\end{align}

We may therefore write
\begin{align}\label{eq: hawk mass decom}
    m(\rho) &= P(\rho) + Q(\rho) + m_0 
\end{align}
where
\begin{align}\label{eq: restricted pos term}
    P(\rho) = \frac{1}{2\omega_{n-1}(n-1)}\int_{U_{\rho_0}^\rho} 16\pi(\mu-J(w)) + |h-K|_{\Sigma}|_{\bg}^2 + 2|q|_{\bg}^2 d \omega_{\bg} 
\end{align}
and 
\begin{align}
\begin{split}
    Q(\rho) &=-\frac{\rho^{n-1}}{n-1} \phi^2 q_1.
\end{split}
\end{align}
We can see by this definition that $P$ is a positive, increasing function of $\rho$. The computations in Theorem \ref{Penrose Inequality} tell us that
\begin{align*}
    \delta m_0 = m_{ADM} - m_0 = \lim_{\rho \to \infty} P(\rho)
\end{align*}
so that $0 \leq P(\rho) \leq \delta m_0$. Our next lemma gives control over $Q$.

\begin{lemma}\label{lem: analysis of Q}
If the initial data $(M,g,k)$ are smooth, asymptotically flat as in \eqref{ass: asym flat}, and spherically symmetric and satisfy the dominant energy condition \eqref{assumption: DEC} and bounded expansion condition \eqref{assumption: bounded expansions.}, we have
\begin{align}\label{eq: bound on q}
    \norm{Q}_{L^2(U_{\rho_0}^A,\bg)}^2 \leq \delta \omega_{n-1} m_0 A^{2n-1}C
\end{align}
and, in particular, $\norm{Q}_{L^2(U_{\rho_0}^A,\bg)} \to 0$ as $\delta \to 0$.  
\end{lemma}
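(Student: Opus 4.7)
The plan is to bound $Q^2$ pointwise by a constant multiple of $\phi\,|q|_{\bg}^2$ times a power of $\rho$, then integrate and appeal to the positivity control in the mass decomposition $m(\rho) = m_0 + P(\rho) + Q(\rho)$ together with the $L^\infty$ bound on $\phi$ from Lemma \ref{lem: bound on phi}.

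First, I would exploit the diagonal structure of $\bg$ in radial coordinates. From the computation in the proof of Theorem \ref{Penrose Inequality}, the only nonzero component of the one-form $q$ is the radial component $q_1 = q(\pd_\rho)$, and since $\bg = \frac{1}{\phi^2}d\rho^2 + \rho^2 d\Omega^2$ we have $\bg^{11} = \phi^2$. Hence $|q|_{\bg}^2 = \phi^2 q_1^2$, which rewrites $Q$ as
\begin{align}
Q(\rho)^2 \;=\; \frac{\rho^{2(n-1)}}{(n-1)^2}\,\phi^4\,q_1^2 \;=\; \frac{\rho^{2(n-1)}}{(n-1)^2}\,\phi^2\,|q|_{\bg}^2.
\end{align}
On $U_{\rho_0}^A$ one has $\rho \le A$ and by Lemma \ref{lem: bound on phi} also $\phi \le AC$, yielding the pointwise estimate $Q(\rho)^2 \le \tfrac{A^{2n-1} C}{(n-1)^2}\,\phi\,|q|_{\bg}^2$.

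Next, I would use the dominant energy condition \eqref{assumption: DEC} to bound $\int_{U_{\rho_0}^A} \phi\,|q|_{\bg}^2\,d\omega_{\bg}$ via the positivity of the $P$ integrand. Since $\mu - J(w) \ge 0$ and $|h-K|_\Sigma|_{\bg}^2 \ge 0$, the $2|q|_{\bg}^2$ term in (the integrand giving) $P(\rho)$ can be isolated, and because $P$ is a nondecreasing function of $\rho$ with $\lim_{\rho\to\infty} P(\rho) = m_{ADM} - m_0 = \delta m_0$, we obtain
\begin{align}
\int_{U_{\rho_0}^A} \phi\,|q|_{\bg}^2\,d\omega_{\bg} \;\le\; \omega_{n-1}(n-1)\,P(A) \;\le\; \omega_{n-1}(n-1)\,\delta m_0.
\end{align}
Integrating the pointwise bound and substituting this $L^1$ estimate gives
\begin{align}
\norm{Q}_{L^2(U_{\rho_0}^A,\bg)}^2 \;\le\; \frac{A^{2n-1}\,C}{(n-1)^2}\cdot\omega_{n-1}(n-1)\,\delta m_0 \;=\; \frac{A^{2n-1}\,C\,\omega_{n-1}\,\delta m_0}{n-1},
\end{align}
which is the claimed inequality after absorbing the factor $1/(n-1)$ into $C$, and sending $\delta \to 0$ gives the desired $L^2$ convergence.

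There is no serious obstacle; the step worth highlighting is the algebraic identification $\phi^2 q_1^2 = |q|_{\bg}^2$, which converts the a priori opaque pointwise quantity defining $Q$ into precisely the $\bg$-intrinsic norm appearing (multiplied by $\phi$) in the positive integrand of $P$. Once this identification is made, the estimate reduces to combining an $L^\infty$ bound on $\phi$ (from the uniform bounded expansion assumption \eqref{assumption: bounded expansions.}) with an $L^1$ bound on $\phi|q|_{\bg}^2$ (from the dominant energy condition), both already in hand.
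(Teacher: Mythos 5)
Your proposal is correct and follows essentially the same route as the paper: the identification $|q|_{\bg}^2=\phi^2 q_1^2$, the bound $\int_{U_{\rho_0}^A}\phi|q|_{\bg}^2\,d\omega_{\bg}\le\omega_{n-1}(n-1)\delta m_0$ coming from the dominant energy condition and the mass decomposition, and the pointwise bounds $\rho\le A$, $\phi\le AC$ from Lemma \ref{lem: bound on phi} are exactly the ingredients in the paper's argument, which merely rearranges the integrand rather than stating the pointwise estimate on $Q^2$ first. The resulting constant $\delta\omega_{n-1}m_0A^{2n-1}C/(n-1)$ matches the paper's, and since $n-1\ge 2$ it is already no larger than the stated bound.
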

\begin{proof}
By the analysis in Theorem \ref{Penrose Inequality},
\begin{align}
\begin{split}
    \delta m_0 &= m_{ADM} - m_0 \\
    &\geq \frac{1}{2\omega_{n-1}(n-1)}\int_\Sigma  \phi \left(16\pi(\mu-J(w)) + |h-K|_{\Sigma}|_{\bg}^2 + 2|q|_{\bg}^2 \right)d \omega_{\bg} \\
    &\geq  \frac{1}{\omega_{n-1}(n-1)}\int_{U_{\rho_0}^A}  \phi |q|_{\bg}^2 d \omega_{\bg} \\
    & = \frac{1}{\omega_{n-1}(n-1)}\int_{U_{\rho_0}^A}  \phi^3 q_1^2 d \omega_{\bg}. 
\end{split}
\end{align}
From this inequality, we get that
\begin{align}
\begin{split}
    \delta \omega_{n-1} m_0 (n-1) &\geq \int_{U_{\rho_0}^A} \left(\frac{\rho^{(n-1)}}{n-1}\phi^2 q_1\right)^2 \frac{(n-1)^2}{\rho^{2(n-1)}\phi}  d \omega_{\bg}  \\
    &\geq \frac{(n-1)^2}{A^{2n-1}C} \int_{U_{\rho_0}^A} |Q|^2   d \omega_{\bg}
\end{split}
\end{align}
where in the last line we have used \eqref{eq: bound on phi}. It follows that
\begin{align}
    \norm{Q}_{L^2(U_{\rho_0}^A,\bg)}^2 \leq \frac{\delta \omega_{n-1} m_0 A^{2n-1}C}{n-1}
\end{align}
and we have the result.
\end{proof}

The motivation for controlling the divergence term is made more clear if we re-write the metric $\bg$ in the form of a Schwarzschild metric. Indeed, if for some function $m(\rho)$ we set
\begin{align}\label{eq: met expres}
    \bg = \frac{1}{\phi^2} d\rho^2 + \rho^2 d\Omega^2=  \left( 1 - \frac{2m(\rho)}{\rho^{n-2}} \right)^{-1}d\rho^2 + \rho^2 d\Omega^2 
\end{align}
we solve to get that $m(\rho) = \frac{1}{2} \rho^{n-2}(1-\rho_{,s}^2)$, which is exactly the Hawking mass at $S_{\rho}$ as in \eqref{eq: mass line 1}. From this, and the decomposition of the Hawking mass as in \eqref{eq: hawk mass decom}, we see that
\begin{align}\label{eq: met expres 2}
    \bg = \left( 1 - \frac{2(m_0 + P(\rho) + Q(\rho) )}{\rho^{n-1}}  \right)^{-1}d\rho^2 + \rho^2 d\Omega^2 
\end{align}
and the difference between $\bg$ and $g_S$ is then controlled by the terms $P(\rho)$ and $Q(\rho)$, which determine the deviation of the Hawking mass from the half area radius $m_0$. We have seen that $0 \leq P(\rho) \leq \delta m_0$; moreover, $P$ is increasing. Therefore, the behavior of the metric is controlled by $Q$, and the $L^2$ control of $Q$ by $\delta$ is thus essential to proving intrinsic flat convergence.

\section{Convergence of the warping factor and its consequences}\label{warp fact convergence}
Now we turn to the study of the warping factor $\phi$ and show convergence to $\phi_S$ as a consequence of Lemma \ref{lem: analysis of Q}. To prove this result we will need to restrict away from the boundary. Recall that
\begin{align}
    \phi = \sqrt{1 - \frac{2m(\rho)}{\rho^{n-2}}} \quad \mbox{ and } \quad \phi_S = \sqrt{1 - \frac{2m_0}{\rho^{n-2}}}.
\end{align}
\begin{lemma}\label{lem: conv of phi}
If the initial data $(M,g,k)$ are smooth, asymptotically flat, and spherically symmetric and satisfy the dominant energy condition \eqref{assumption: DEC} and bounded expansion condition \eqref{assumption: bounded expansions.}, then for any $\epsilon_1 > 0$ and $\epsilon > 0$ there is $\delta(\epsilon_1,\epsilon) > 0$ small enough so that $\norm{\phi - \phi_S}_{L^2(\uep,\bg)} < \epsilon$. 
\end{lemma}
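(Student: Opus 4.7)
The plan is to use the decomposition $m(\rho) = m_0 + P(\rho) + Q(\rho)$ from \eqref{eq: hawk mass decom}, together with the uniform bound $0 \le P(\rho) \le \delta m_0$ and the $L^2$ bound $\norm{Q}_{L^2(U_{\rho_0}^A,\bg)}^2 \le C\delta$ from Lemma \ref{lem: analysis of Q}, in order to reduce the convergence of $\phi$ to $\phi_S$ to an integral estimate of $|P+Q|$ against the $\bg$-volume form on $\uep$.

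First I would write
$$\phi^2 - \phi_S^2 = -\frac{2(P(\rho)+Q(\rho))}{\rho^{n-2}}$$
and factor the left-hand side as $(\phi - \phi_S)(\phi + \phi_S)$. On $\uep$ we have $\rho \ge a(\epsilon_1) \ge \rho_0 + \epsilon_1 > \rho_0$, so $\phi_S(\rho) \ge \phi_S(a(\epsilon_1)) =: c(\epsilon_1) > 0$; together with $\phi \ge 0$ this yields $\phi + \phi_S \ge c(\epsilon_1)$ and the pointwise estimate
$$(\phi - \phi_S)^2 \le \frac{4(P+Q)^2}{\rho^{2(n-2)}(\phi + \phi_S)^2} \le \frac{8(P^2 + Q^2)}{a(\epsilon_1)^{2(n-2)} c(\epsilon_1)^2}.$$

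Next I would integrate this inequality over $\uep$ against $d\omega_{\bg}$. The $Q$ contribution is handled immediately by Lemma \ref{lem: analysis of Q}, since $\uep \subseteq U_{\rho_0}^A$. For the $P$ contribution, the pointwise bound $P(\rho) \le \delta m_0$ reduces the problem to bounding $\vol(\uep,\bg)$ uniformly in $\delta$. Here I would exploit the key observation that by construction $\uep \subseteq T_{\widetilde{D}}(S_A)$, so in arc length coordinates the radial extent of $\uep$ is at most $\widetilde{D}$; combined with the warped product structure $\bg = ds^2 + \rho^2 d\Omega^2$ and $\rho \le A$, this gives $\vol(\uep,\bg) \le \omega_{n-1} A^{n-1} \widetilde{D}$, a constant independent of $\delta$. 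Combining the two integral estimates yields $\norm{\phi - \phi_S}_{L^2(\uep,\bg)}^2 \le C'(\epsilon_1)\,\delta$, and choosing $\delta$ small enough completes the proof.

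The main obstacle is the uniform volume bound for $\uep$ in $\bg$. Without it one would have to contend with the possibility that $\phi$ degenerates in the interior of $\uep$, which would make the raw $\bg$-volume element $\rho^{n-1}\phi^{-1}d\rho\, d\Omega$ blow up and wreck the estimate for the $P$ term. It is precisely the definition of $a(\epsilon_1)$ via the fixed tubular neighborhood $T_{\widetilde{D}}(S_A)$ that prevents this, regardless of how badly $\phi$ may degenerate; once the volume is controlled, everything else is a routine combination of the $L^\infty$ bound on $P$ and the $L^2$ bound on $Q$.
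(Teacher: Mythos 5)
Your proposal is correct and follows essentially the same route as the paper: rationalize $\phi-\phi_S$ via $\phi^2-\phi_S^2=-2(P+Q)/\rho^{n-2}$, bound $\phi+\phi_S$ below by $\phi_S$ at $\rho=a(\epsilon_1)\ge\rho_0+\epsilon_1$, control the $Q$ term by Lemma \ref{lem: analysis of Q} and the $P$ term by $P\le\delta m_0$ together with the volume bound $\vol_{\bg}(\uep)\le\omega_{n-1}A^{n-1}\widetilde{D}$ coming from the construction of $a(\epsilon_1)$. This is precisely the paper's argument, so no further comment is needed.
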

\begin{proof}
    We compute
	\begin{align}
		\int_{\uep} \left| \phi - \phi_S \right|^2 d\omega_{\bg} & = \int_{\uep} \frac{4}{\rho^{2(n-2)}(\phi + \phi_S)^2} \left| m_0 - m(\rho) \right|^2 d\omega_{\bg}
    \end{align}
    by rationalizing $(\phi - \phi_S)^2$. Now, because $\rho \geq a(\epsilon_1)$, $\phi \geq 0$, and $\phi_S \geq \phi_S(a(\epsilon_1))$ on $\uep$, we may estimate that on $\uep$,
    \begin{align}
    \begin{split}
        \frac{4}{\rho^{2(n-2)}(\phi + \phi_S)^2} &\leq \frac{4}{(a(\epsilon_1)^{n-2} \phi_S(a(\epsilon_1)))^2} \\
        &= \frac{4}{(a(\epsilon_1))^{n-2}(a(\epsilon_1)^{n-2}-2m_0)}. 
    \end{split}
    \end{align}
    Using this and the decomposition \eqref{eq: hawk mass decom}, we have that 
    \begin{align}
    \begin{split}
		\int_{\uep} \left| \phi - \phi_S \right|^2 d\omega_{\bg} &\leq \frac{4}{(a(\epsilon_1))^{n-2}(a(\epsilon_1)^{n-2}-2m_0)}  \int_{\uep}|Q + P |^2d\omega_{\bg} \\
        &\leq\frac{8}{(a(\epsilon_1))^{n-2}(a(\epsilon_1)^{n-2}-2m_0)}  \left(\int_{\uep}|Q|^2 d\omega_{\bg} + \int_{\uep}|P |^2d\omega_{\bg} \right).
    \end{split}
	\end{align}
	We know that $0 \leq P(\rho) \leq \delta$ for any $\rho \in \Sigma$ and we can apply the estimate \eqref{eq: bound on q} from Lemma \ref{lem: analysis of Q} to see that
	\begin{align}
 \begin{split}
	    \int_{\uep} \left| \phi - \phi_S \right|^2 d\omega_{\bg} & \leq \frac{8}{(a(\epsilon_1))^{n-2}(a(\epsilon_1)^{n-2}-2m_0)}  \left(\delta \omega_{n-1}m_0 A^{2n-1} C + \int_{\uep}\delta^2d\omega_{\bg}\right) \\
        &\leq \frac{8}{(a(\epsilon_1))^{n-2}(a(\epsilon_1)^{n-2}-2m_0)}  \left(\delta \omega_{n-1}m_0 A^{2n-1} C + \delta^2\omega_{n-1}A^{n-1}\widetilde{D}\right)
        \end{split}
	\end{align}
    where we have estimated $\vol(\uep)$ by $\omega_{n-1}A^{n-1}\widetilde{D}$ using the coarea formula. We find therefore that for $\delta$ small enough with respect to $\epsilon_1$ we have $\norm{\phi - \phi_S}_{L^2(\uep,\bg)} < \epsilon$.
\end{proof}

The next result describes the behavior of $a(\epsilon_1)$ as we take $\delta \to 0$. Specifically, we find that $a(\epsilon_1) \to \rho_0 + \epsilon_1$, meaning that $T_{\widetilde{D}}(S_A) \cap U_{\rho_0}^A \subset \Sigma$ will have inner boundary close to the Schwarzschild radius for small enough $\delta$ and $\epsilon_1$. 

\begin{figure}[!ht]
\centering
\begin{tikzpicture}

\draw[->] (0,-1) .. controls (0.5,2) and (3,2) .. (5,2.1);
\node[circle, inner sep=0pt, minimum size = 5pt, fill=blue, label={0: $\rho = \rho_0$}] at (0,-1) {};

\draw[red, thick](0.25,-0.07437)  .. controls (0.85,1.65) and (2.75,2.05) .. (4.5,2.07574) ;

\draw[black, thick, <-] (0.005,-0.07437) -- (0.25,-0.07437);

\node[circle, inner sep=0pt, minimum size = 5pt, fill=green, label={90: $\rho = A$}] at (4.5,2.07574) {};
\node[circle, inner sep=0pt, minimum size = 1pt, fill=red, label={0: $\rho = a(\epsilon_1)-\epsilon_1$}] at (0.25,-0.07437) {};
\node[circle, inner sep=0pt, minimum size = 5pt, fill=green, label={180: $\rho = a(\epsilon_1)$}] at (0.5,0.432282) {};

\draw[->] (0,-1.5) -- (5,-1.5);
\node[circle, inner sep=0pt, minimum size = 0.5pt, fill=black, label={270: $\rho$}] at (4,-1.5) {};

\end{tikzpicture}
\begin{tikzpicture}

\draw[->] (0,-0.07437) .. controls (0.5,2) and (3,2) .. (5,2.1);
\draw (0,-1) -- (0,-0.07437);
\node[circle, inner sep=0pt, minimum size = 5pt, fill=blue, label={0: $\rho = \rho_0$}] at (0,-1) {};

\draw[red, thick](0,-0.07437)   .. controls (0.45,1.65) and (2.25,2.05) .. (4.5,2.07574) ;

\node[circle, inner sep=0pt, minimum size = 5pt, fill=green, label={90: $\rho = A$}] at (4.5,2.07574) {};
\node[circle, inner sep=0pt, minimum size = 5pt, fill=red, label={0: $\rho = a(\epsilon_1)-\epsilon_1 = \rho_0$}] at (0,-0.07437) {};
\node[circle, inner sep=0pt, minimum size = 5pt, fill=green, label={0: $\rho = a(\epsilon_1)$}] at (0.2,0.432282) {};
\draw[->] (0,-1.5) -- (5,-1.5);
\node[circle, inner sep=0pt, minimum size = 0.5pt, fill=black, label={270: $\rho$}] at (4,-1.5) {};
\end{tikzpicture}
\caption{Here we show the nontrivial case of $a(\epsilon_1)$ convergence, which occurs in the first case described in Figure \ref{fig:set diagram}. The black arrow emanating from $\rho = a(\epsilon_1)-\epsilon_1$ in the left diagram demonstrates the convergence of this value to $\rho_0$. The arrow is drawn horizontally to make it clear that, while this convergence occurs by Lemma \ref{lem: a close to rho}, the set $U_{\rho_0}^{a(\epsilon_1)-\epsilon_1}$ will not necessarily collapse. In the case that it does not collapse, this set converges to a cylindrical end, as shown in the diagram on the right. }
    \label{fig:convergence of uep bdy}
\end{figure}
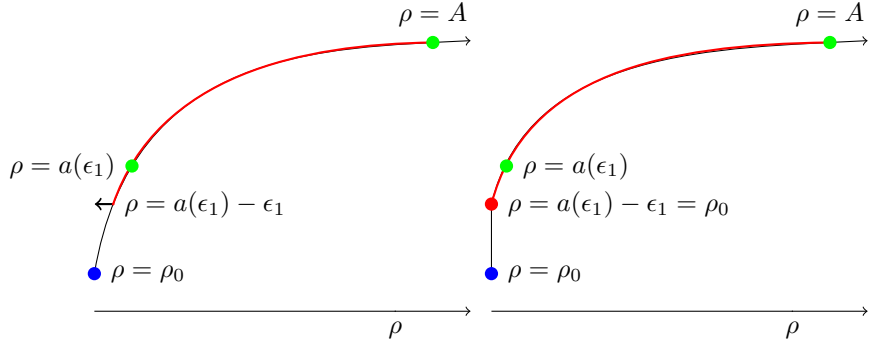

\begin{lemma}\label{lem: a close to rho}
	If the initial data $(M,g,k)$ are smooth, asymptotically flat, and spherically symmetric and satisfy the dominant energy condition \eqref{assumption: DEC} and bounded expansion condition \eqref{assumption: bounded expansions.}, then as $\delta \to 0$ we have that $a(\epsilon_1) \to \rho_0 + \epsilon_1$.
\end{lemma}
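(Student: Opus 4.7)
The plan is to separate the two geometric cases illustrated in Figure \ref{fig:convergence of uep bdy}. In Case 2 (right panel), the radial geodesic from $S_A$ of length $\widetilde{D}$ already reaches the inner boundary $S_{\rho_0}$, so $a(\epsilon_1) - \epsilon_1 = \rho_0$ and the conclusion is immediate. In the nontrivial Case 1, writing $a_0 := a(\epsilon_1) - \epsilon_1 > \rho_0$, the definition of $a(\epsilon_1)$ together with the formulas for radial distance in $\bg$ and $g_S$ give
\begin{equation}\label{eq:plan-radial-match}
\int_{a_0}^A \frac{d\rho}{\phi} = \widetilde{D} = \int_{\rho_0}^A \frac{d\rho}{\phi_S}.
\end{equation}
I would argue by contradiction: suppose there are $\mu > 0$ and a sequence $\delta_j \to 0$ whose corresponding values satisfy $a_0^j \geq \rho_0 + \mu$. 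Then, using $[a_0^j, A] \subset [\rho_0 + \mu, A]$, the identity \eqref{eq:plan-radial-match} rearranges to
\begin{equation}\label{eq:plan-contradiction}
\int_{\rho_0+\mu}^A \left( \frac{1}{\phi_j} - \frac{1}{\phi_S} \right) d\rho \geq \int_{\rho_0}^{\rho_0+\mu} \frac{d\rho}{\phi_S} =: I(\mu) > 0,
\end{equation}
and the goal becomes to show the left-hand side of \eqref{eq:plan-contradiction} tends to zero.

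To analyze it, decompose $[\rho_0+\mu, A] = G_j \sqcup B_j$ with $G_j := \{\phi_j \geq \phi_S/2\}$. On $G_j$, the uniform lower bound $\phi_S \geq c(\mu) := \phi_S(\rho_0 + \mu) > 0$ yields $|1/\phi_j - 1/\phi_S| \leq 2|\phi_j - \phi_S|/c(\mu)^2$. Rerunning the argument of Lemma \ref{lem: conv of phi} on the annular region $[\rho_0+\mu, A] \times \IS^{n-1}$ (whose only prerequisite is a positive lower bound on $\rho$, not the specific value $a(\epsilon_1)$) and then using $\phi_j \leq AC$ from Lemma \ref{lem: bound on phi} to convert the $\bg$-volume $L^2$ estimate into one for Lebesgue $d\rho$ should give $\|\phi_j - \phi_S\|_{L^2([\rho_0+\mu, A], d\rho)} \to 0$, so Cauchy--Schwarz yields $\int_{G_j} |1/\phi_j - 1/\phi_S|\, d\rho \to 0$.

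The main obstacle is the bad set $B_j$, on which $\phi_j$ could a priori be small enough that $\int_{B_j} d\rho/\phi_j$ blows up. The key observation is that on $B_j$ the identity $\phi_j^2 = \phi_S^2 - 2(P_j+Q_j)/\rho^{n-2}$ combined with $\phi_j < \phi_S/2$ forces $P_j + Q_j \geq \tfrac{3}{8}(\rho_0+\mu)^{n-2} c(\mu)^2 =: \kappa > 0$, so since $0 \leq P_j \leq \delta_j m_0 \to 0$ we have $Q_j \geq \kappa/2$ throughout $B_j$ for large $j$. Expanding $\int_{B_j} Q_j^2\, d\omega_{\bg} = \omega_{n-1} \int_{B_j} Q_j^2 \rho^{n-1}/\phi_j\, d\rho$ and plugging in the bound from Lemma \ref{lem: analysis of Q} yields
\begin{equation}\label{eq:plan-Q-bound}
(\kappa/2)^2 \omega_{n-1} (\rho_0+\mu)^{n-1} \int_{B_j} \frac{d\rho}{\phi_j} \leq \delta_j C_0,
\end{equation}
so $\int_{B_j} d\rho/\phi_j \to 0$; since $\phi_j < \phi_S/2 < 1/2$ on $B_j$ forces $|B_j| < \tfrac{1}{2}\int_{B_j} d\rho/\phi_j$, we also obtain $|B_j| \to 0$ and hence $\int_{B_j} d\rho/\phi_S \leq |B_j|/c(\mu) \to 0$. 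Combining the contributions from $G_j$ and $B_j$ makes the left-hand side of \eqref{eq:plan-contradiction} tend to zero, contradicting $I(\mu) > 0$ and giving $a_0 \to \rho_0$, i.e.\ $a(\epsilon_1) \to \rho_0 + \epsilon_1$.
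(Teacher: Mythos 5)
Your argument is correct in substance and follows the same overall contradiction strategy as the paper: in the nontrivial case you set $a_0=a(\epsilon_1)-\epsilon_1$, use the radial identity $\int_{a_0}^{A}\phi^{-1}\,d\rho=\widetilde{D}=\int_{\rho_0}^{A}\phi_S^{-1}\,d\rho$, and contradict a fixed gap $a_0\geq\rho_0+\mu$ using the $L^2$ closeness of $\phi$ to $\phi_S$, which ultimately rests on the $Q$-estimate of Lemma \ref{lem: analysis of Q}. The execution, however, is genuinely different. The paper never leaves the region $U_{\rho_\delta}^A$, whose $\bg$-radial depth is exactly $\widetilde{D}$ by construction; it bounds $\bigl|\widetilde{D}-\int_{\rho_\delta}^{A}\phi_S^{-1}d\rho\bigr|$ by $\int\frac{|\phi-\phi_S|}{\phi\,\phi_S}\,d\rho$ and absorbs the dangerous factor $1/\phi$ into the volume form $d\omega_{\bg}=\phi^{-1}\rho^{n-1}d\rho\,d\sigma_{\bg}$, so that after H\"older only the bounded $\bg$-volume of $U_{\rho_\delta}^A$ and Lemma \ref{lem: conv of phi} are needed, and the contradiction comes from the monotonicity of $\rho_\delta\mapsto\int_{\rho_\delta}^{A}\phi_S^{-1}d\rho$. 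You instead work on the fixed interval $[\rho_0+\mu,A]$, which may contain part of the near-horizon neck where $\phi$ is tiny, and you compensate with the good/bad split $G_j\cup B_j$ and a Chebyshev-type argument: on $B_j$ the identity $\phi^2=\phi_S^2-2(P+Q)\rho^{2-n}$ forces $Q\geq\kappa/2$, so Lemma \ref{lem: analysis of Q} controls $\int_{B_j}\phi^{-1}d\rho$ and $|B_j|$. This is a valid alternative; what the paper's weighting trick buys is that no bad set is ever needed, while your route makes explicit (and quantitative) why a thin set where $\phi$ degenerates cannot spoil the radial distance.

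One step needs more care. You claim the argument of Lemma \ref{lem: conv of phi} reruns on $[\rho_0+\mu,A]\times\IS^{n-1}$ with ``a positive lower bound on $\rho$'' as the only prerequisite. In fact that proof also uses $\vol_{\bg}(\uep)\leq\omega_{n-1}A^{n-1}\widetilde{D}$, coming from the radial depth being at most $\widetilde{D}$, to handle the term $\int P^2\,d\omega_{\bg}\leq\delta^2 m_0^2\,\vol_{\bg}$; on $[\rho_0+\mu,A]$ the $\bg$-volume is not a priori bounded uniformly in $\delta$, precisely because the region below $a_0$ may contain a long neck with small $\phi$. The fix is already in your toolkit: either run the $L^2$ estimate only over $G_j$, where $\phi\geq\phi_S/2\geq c(\mu)/2$ bounds $d\omega_{\bg}$ by a fixed multiple of $\rho^{n-1}d\rho\,d\sigma$, or observe that your displayed bound for $\int_{B_j}d\rho/\phi_j$, kept with the weight $\rho^{n-1}$, shows $\vol_{\bg}(B_j\times\IS^{n-1})\leq C\delta_j/\kappa^2$, so the total volume of $[\rho_0+\mu,A]\times\IS^{n-1}$ stays bounded for small $\delta_j$. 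With that adjustment the proof is complete.
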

\begin{proof}
	Recall that we have defined 
    \begin{align}
        a(\epsilon_1) := \min\{\rho \spa (\rho,\theta) \in T_{\widetilde{D}} \cap \uep \} + \epsilon_1
    \end{align}
    where $\widetilde{D}$ is the Schwarzschild distance between $\widetilde{S}_A$ and $\pd \sch(m_0)$. In the case that the radial depth of $\upa$ is less than or equal to $\widetilde{D}$, we have that $a(\epsilon_1) = \rho_0 +\epsilon_1$. Otherwise, $a(\epsilon_1) > \rho_0 +\epsilon_1$. In that case, set 
    \begin{align}
        \rho_{\delta} = a(\epsilon_1) - \epsilon_1 = \min\{\rho \spa (\rho,\theta) \in T_{\widetilde{D}} \cap U_{\rho_0}^A \}.
    \end{align}
    We assume, for contradiction, that convergence does not occur, so that $\rho_{\delta} - \rho_0 \geq c$ as $\delta \to 0$ for some small $c > 0$. By definition of $U_{\rho_{\delta}}^A = T_{\widetilde{D}} \cap U_{\rho_0}^A$, we have that the radial distance between $S_A $ and $S_{\rho_{\delta}}\subset \Sigma$ is
		\begin{align}
			\int_{\rho_{\delta}}^A \frac{1}{\phi}d\rho  =\widetilde{D}
		\end{align} 
    where $S_A \cup S_{\rho_{\delta}} = \pd \uep$. We use this to estimate the difference between $\widetilde{D}$ and the radial distance between $\widetilde{S}_A$ and $\widetilde{S}_{\rho_{\delta}} \subset \sch(m_0)$ as follows:
	\begin{align}
 \begin{split}
		\left| \widetilde{D} - \int_{\rho_{\delta}}^A\frac{1}{\phi_S}d\rho \right|  &\leq \int_{\rho_{\delta}}^A \left|\frac{1}{\phi} - \frac{1}{\phi_S} \right|d\rho \\
		&= \int_{\rho_{\delta}}^A \frac{1}{\phi \phi_S} \left|\phi_S - \phi \right|d\rho \\
        &\leq \frac{1}{\omega_{n-1}}\int_{\rho_{\delta}}^A\int_{S_\rho} \frac{1}{\phi_S\rho^{n-1}}|\phi_S - \phi|  \frac{\rho^{n-1}}{\phi} d\rho d\sigma_{\bg} \\
		&\leq \frac{1}{\omega_{n-1}\phi_S(\rho_{\delta})\rho_{\delta}^{n-1}} \int_{U_{\rho_\delta}^A}\left|\phi_S - \phi \right|d\omega_{\bg}
  \end{split}
	\end{align}
	where in the last line we have used that $d\omega_{\bg} = \frac{1}{\phi}\rho^{n-1}d\rho d\sigma_{\bg}$ and estimated from above on $U_{\rho_\delta}^A$. By H\"older's inequality and the coarea formula,
    \begin{align}
    \begin{split}
        \left| \widetilde{D} - \int_{\rho_{\delta}}^A\frac{1}{\phi_S}d\rho \right|^2  &\leq \left(\frac{1}{\omega_{n-1}\phi_S(\rho_{\delta})\rho_{\delta}^{n-1}} \right)^2 \vol_{\bg}(U_{\rho_\delta}^A) \int_{U_{\rho_\delta}^A}\left|\phi_S - \phi \right|^2d\omega_{\bg} \\
        & \leq \left(\frac{1}{\omega_{n-1}\phi_S(\rho_{\delta})\rho_{\delta}^{n-1}} \right)^2 \left(A^{n-1}\widetilde{D} \right) \int_{U_{\rho_\delta}^A}\left|\phi_S - \phi \right|^2d\omega_{\bg}.  
    \end{split}    
    \end{align} 
    Our assumption that $\rho_{\delta}$ stays at least $c$ away from $\rho_0$ means that 
    \begin{align}
        \phi_S(\rho_{\delta}) = \sqrt{1 - \frac{\rho_0^{n-2}}{\rho_{\delta}^{n-2}}} \geq \sqrt{\frac{(\rho_0 + c)^{n-2}-\rho_0^{n-2}}{(\rho_0 + c)^{n-2}} } > 0
    \end{align}
    and so by Lemma \ref{lem: conv of phi} the right hand side becomes arbitrarily small as $\delta \to 0$. It follows that
	\begin{align}
	    \int_{\rho_{\delta}}^A \frac{1}{\phi_S} d\rho \to \widetilde{D} = \int_{\rho_0}^A \frac{1}{\phi_S} d\rho
	\end{align}
	as $\delta \to 0$. However, $\int_{\rho_{\delta}}^A \frac{1}{\phi_S} d\rho$ increases as $\rho_{\delta}$ decreases with maximum $\widetilde{D}$ which is only achieved when $\rho_{\delta} = \rho_0$. We then have a contradiction that $\rho_{\delta} - \rho_0 \geq c > 0$ for all $\delta$.
\end{proof}

In the next proposition, we show that $\norm{\widetilde{U}_{\rho_0}^{a(\epsilon_1)}}_{\mf}$ is arbitrarily small when $\epsilon_1$ and $\delta$ are small enough. In other words, the ``remainder" of the Schwarzschild space is small. 

\begin{prop}\label{prop: size of inner}
    Suppose the initial data $(M,g,k)$ are smooth, asymptotically flat, and spherically symmetric and satisfy the dominant energy condition \eqref{assumption: DEC} and bounded expansion condition \eqref{assumption: bounded expansions.}. Then for any $\epsilon > 0$, there exist $\epsilon_1(\epsilon)$, $\delta(\epsilon_1)$ small enough so that $\norm{\widetilde{U}_{\rho_0}^{a(\epsilon_1)}}_{\mf} < \epsilon$.
\end{prop}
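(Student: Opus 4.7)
The plan is to bound the intrinsic flat norm of $\widetilde{U}_{\rho_0}^{a(\epsilon_1)}$ by its volume, and then exploit the boundary convergence $a(\epsilon_1)\to \rho_0+\epsilon_1$ already established in Lemma \ref{lem: a close to rho}. Since the Schwarzschild annulus has an integrable volume form near $\rho_0$, shrinking $\epsilon_1$ and then $\delta$ will drive the volume, and hence the flat norm, to zero.

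\textbf{Step 1: flat norm is controlled by volume.} The set $\widetilde{U}_{\rho_0}^{a(\epsilon_1)}$ is a compact oriented Riemannian manifold with boundary, and as such it defines an integral current in itself via the identity embedding. Using the trivial filling $T = T + \partial 0$ in the definition of the flat norm gives
\begin{align}
\norm{\widetilde{U}_{\rho_0}^{a(\epsilon_1)}}_{\mf} \;\le\; \vol_{g_S}\!\bigl(\widetilde{U}_{\rho_0}^{a(\epsilon_1)}\bigr).
\end{align}

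\textbf{Step 2: evaluate and control the volume.} Writing $g_S$ in radial coordinates produces the volume form $\phi_S^{-1}\rho^{n-1}\,d\rho\,d\sigma_{\IS^{n-1}}$, so
\begin{align}
\vol_{g_S}\!\bigl(\widetilde{U}_{\rho_0}^{a(\epsilon_1)}\bigr) = \omega_{n-1}\int_{\rho_0}^{a(\epsilon_1)} \frac{\rho^{n-1}}{\phi_S(\rho)}\,d\rho.
\end{align}
Near $\rho_0$ one has $\phi_S(\rho)^2 = 1 - (\rho_0/\rho)^{n-2} \sim (n-2)(\rho-\rho_0)/\rho_0$, so $\rho^{n-1}/\phi_S$ has an integrable $(\rho-\rho_0)^{-1/2}$ singularity at $\rho_0$. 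In particular the function $a \mapsto \omega_{n-1}\int_{\rho_0}^{a}\rho^{n-1}/\phi_S\,d\rho$ is continuous and monotone on $[\rho_0,A]$ with value $0$ at $a=\rho_0$.

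\textbf{Step 3: choose parameters.} Given $\epsilon>0$, first select $\epsilon_1(\epsilon)>0$ small enough that
\begin{align}
\omega_{n-1}\int_{\rho_0}^{\rho_0+2\epsilon_1}\frac{\rho^{n-1}}{\phi_S(\rho)}\,d\rho < \epsilon,
\end{align}
using the continuity from Step 2. Then Lemma \ref{lem: a close to rho} yields $\delta(\epsilon_1)>0$ for which $a(\epsilon_1) < \rho_0 + 2\epsilon_1$. Monotonicity of the integral and Step 1 combine to give $\norm{\widetilde{U}_{\rho_0}^{a(\epsilon_1)}}_{\mf} < \epsilon$.

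\textbf{Main obstacle.} The argument is largely bookkeeping; the only point requiring care is verifying integrability of $\rho^{n-1}/\phi_S$ at $\rho_0$ and the continuity of the resulting volume as a function of the upper limit, which is handled by the local expansion of $\phi_S$ above. No genuine analytic difficulty arises beyond properly combining the two smallness scales: first $\epsilon_1$ is chosen (depending only on $\epsilon$, $m_0$, $n$), and only afterward is $\delta$ chosen (depending on $\epsilon_1$) so that $a(\epsilon_1)$ is driven close to $\rho_0+\epsilon_1$.
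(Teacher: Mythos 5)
Your proposal is correct and follows essentially the same route as the paper: bound $\norm{\widetilde{U}_{\rho_0}^{a(\epsilon_1)}}_{\mf}$ by $\vol_{g_S}(\widetilde{U}_{\rho_0}^{a(\epsilon_1)})$, invoke Lemma \ref{lem: a close to rho} to get $a(\epsilon_1)\le \rho_0+2\epsilon_1$ for small $\delta$, and use the integrable $(\rho-\rho_0)^{-1/2}$ singularity of $1/\phi_S$ at $\rho_0$ to make the volume small by choosing $\epsilon_1$ first and $\delta$ afterward. The only cosmetic difference is that you justify integrability by the local expansion of $\phi_S^2$ near $\rho_0$, while the paper factors the integrand algebraically and pulls out $(a(\epsilon_1))^{n-1}$; both are fine.
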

\begin{proof}
    We have that 
    \begin{align}
    \begin{split}
        \norm{\widetilde{U}_{\rho_0}^{a(\epsilon_1)}}_{\mf} &\leq \vol_{g_S}\left(\widetilde{U}_{\rho_0}^{a(\epsilon_1)} \right) \\
        & = \int_{\widetilde{U}_{\rho_0}^{a(\epsilon_1)} }  d\omega_{g_S} \\
        &\leq \omega_{n-1}(a(\epsilon_1))^{n-1} \int_{\rho_0}^{a(\epsilon_1)}\frac{1}{\phi_S} d\rho
    \end{split}
    \end{align}
    where the last line follows by the coarea formula. By Lemma \ref{lem: a close to rho}, for $\delta$ small enough we have $a(\epsilon_1) \leq \rho_0 + 2\epsilon_1$ so that
    \begin{align}
    \begin{split}
         \norm{\widetilde{U}_{\rho_0}^{a(\epsilon_1)}}_{\mf} &\leq \omega_{n-1}(\rho_0 + 2\epsilon_1)^{n-1} \int_{\rho_0}^{\rho_0 + 2\epsilon_1} \sqrt{\frac{\rho^{n-2}}{\rho^{n-2}-\rho_0^{n-2}}} d\rho \\
         &\leq \omega_{n-1}(\rho_0 + 2\epsilon_1)^{n-1} \int_{\rho_0}^{\rho_0 + 2\epsilon_1} \sqrt{\frac{\rho}{\rho-\rho_0}}\sqrt{\frac{\rho^{n-3}}{\sum_{i=0}^{n-3}\rho^i\rho_{0}^{n-3-i}}} d\rho \\
         &\leq \omega_{n-1}(\rho_0 + 2\epsilon_1)^{n-1}\sqrt{\frac{(\rho_0+2\epsilon_1)^{n-3}}{(n-3)\rho_0^{n-3}}}\int_{\rho_0}^{\rho_0 + 2\epsilon_1} \sqrt{\frac{\rho}{\rho-\rho_0}} d\rho
    \end{split}
    \end{align}
    where in the last line we have applied the estimate on $[\rho_0,\rho_0+2\epsilon_1]$ that
    \begin{align}
        \sqrt{\frac{\rho^{n-3}}{\sum_{i=0}^{n-3}\rho^i\rho_{0}^{n-3-i}}} \leq \sqrt{\frac{(\rho_0+2\epsilon_1)^{n-3}}{\sum_{i=0}^{n-3}\rho_0^i\rho_{0}^{n-3-i}}} \leq \sqrt{\frac{(\rho_0+2\epsilon_1)^{n-3}}{(n-3)\rho_0^{n-3}}}.
    \end{align}
    The function $\sqrt{\frac{\rho}{\rho-\rho_0}}$ is integrable on $[\rho,\rho+2\epsilon_1]$, from which it follows that for $\epsilon_1$ and $\delta$ small enough we may obtain 
    \begin{align}
        \norm{\widetilde{U}_{\rho_0}^{a(\epsilon_1)}}_{\mf} < \epsilon.
    \end{align}
\end{proof}

\section{Application of VADB with Boundary}\label{vadb app}

In this section, we prove the next main ingredient for the theorem, ie, that
\begin{align}
    d_{\mf}(\uep,\wuep)
\end{align}
can be made as small as we like. However, we cannot do this directly: recall that, as in \eqref{eq: met expres 2}, we can write each metric as
\begin{align}
    \bg = \left(1 - \frac{2(m_0 + P(\rho) + Q(\rho))}{\rho^{n-2}}  \right)^{-1} d\rho^2 + \rho^2 d\Omega^2 && g_S = \left(1 - \frac{2m_0}{\rho^{n-2}}  \right)^{-1} d\rho^2 + \rho^2 d\Omega^2. 
\end{align}
By inspecting the metrics, we can see that the diffeomorphism $F_1: \uep \to \wuep$ is not distance decreasing. Indeed, we can compute that $g_S \geq \bg$ when $P \geq -Q$; as we have no control over the sign of $Q$ this is certainly a possibility. However, we may define an intermediary metric space which will admit a distance decreasing map from both sets.

Define the space $U_0 = (U,g_0)$ with $U$ diffeomorphic to $\uep$ and $\wuep$. Define $g_0$ as follows. Let $V = \{(\rho,\theta) \in U \spa P(\rho) \geq -Q(\rho) \}$. This is a closed set composed of countably many disjoint annular regions: denote these connected components by $\{V_{j} \}$. For each $V_j$, let $\rho_j = \min\{\rho \spa (\rho,\theta) \in V_j \}$. Let $V' = \bigcup_j (V_j - S_{\rho_j})$ so that we have deleted the innermost circle from each component of $V$. Let $g_0$ be a metric on $U$ defined as follows:
\begin{align}
    g_0 := \begin{cases}
        \bg &\mbox{ when } x \in V' \\
        g_S &\mbox{ when } x \in U \setminus V'.
    \end{cases}
\end{align}
Note that by definition, $\bg = g_S$ on $\pd V'$ so that the metric $g_0$ is continuous and, for every annulus on which $g_0 = \bg$ and $g_0 = g_S$, the metric is smooth up to the outer boundary.. Moreover, $\bg \leq g_S$ on $V'$ and $g_S \leq \bg$ on $U \setminus V'$ so by Lemma \ref{lem: met dist}, the diffeomorphisms $J_1: \uep \to U_0$, $J_2: \wuep \to U_0$ for which $(\rho,\theta) \mapsto (\rho,\theta)$ are distance decreasing. 

In this section, we prove the following proposition.
\begin{prop}\label{prop: inter metric space}
 Suppose the initial data $(M,g,k)$ are smooth, asymptotically flat, and spherically symmetric and satisfy the dominant energy condition \eqref{assumption: DEC} and bounded expansion condition \eqref{assumption: bounded expansions.}. Then, for any $\epsilon, \epsilon_1 > 0$ there exists $\delta(\epsilon,\epsilon_1) > 0$ small enough so that
 \begin{align}
 \begin{split}
     d_{\mf}(U_0, \uep) &< \epsilon/3, \\
     d_{\mf}(U_0, \wuep) &< \epsilon/3
    \end{split}
 \end{align}
     when $m_{ADM} = (1+\delta) m_0$.
\end{prop}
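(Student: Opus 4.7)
The plan is to apply the Volume Above Distance Below with Boundary theorem (Theorem \ref{vadb bdy}) twice, once to the pair $(\uep, U_0)$ via $J_1$ and once to $(\wuep, U_0)$ via $J_2$. In each application, the diameter, total volume, and boundary area of the source manifold are controlled uniformly in $\delta$ in terms of $A$, $\widetilde{D}$, and $m_0$ via the coarea formula and the bound $\phi \le AC$ from Lemma \ref{lem: bound on phi}. The only nontrivial ingredient is the distance inequality \eqref{est: dist}, which I aim to establish on the entire source manifold with error $\alpha_1 = \alpha_1(\delta) \to 0$, so that the bad set may be taken empty ($V_1 = 0$).

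To produce the uniform distance estimate, I would exploit spherical symmetry. All three metrics $\bg$, $g_S$, and $g_0$ take the warped-product form $\phi_\ast^{-2} d\rho^2 + \rho^2 d\Omega^2$ with identical angular factor, so the length gap of any absolutely continuous curve $\gamma$ in $U$ satisfies
\begin{align*}
    |\ell_{\bg}(\gamma) - \ell_{g_0}(\gamma)| \;\le\; \int |\dot\rho|\,\left|1/\phi - 1/\phi_0\right| dt,
\end{align*}
where $\phi_0$ equals $\phi$ on $V'$ and $\phi_S$ on $U\setminus V'$. Reducing a minimizing $g_0$-geodesic to its meridian plane so that $\rho$ is monotone along it, this is bounded by the one-dimensional integral $2\alpha_1 := \int_{a(\epsilon_1)}^{A} |1/\phi - 1/\phi_S|\,d\rho$ supported on the region where $g_0$ and $\bg$ disagree.

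To bound this integral, I would split the radial interval into the \emph{good} subset $\{\phi \ge \phi_S/2\}$ and the \emph{bad} subset $\{\phi < \phi_S/2\}$. On the good subset the pointwise estimate $|1/\phi - 1/\phi_S| \le 2|\phi - \phi_S|/\phi_S(a(\epsilon_1))^2$, together with Cauchy-Schwarz and the volume-form manipulation $d\rho = \phi \rho^{-(n-1)} \omega_{n-1}^{-1}\,d\omega_{\bg}$ exactly as in the proofs of Lemmas \ref{lem: conv of phi} and \ref{lem: a close to rho} (using $\phi \le AC$ and $\rho \ge a(\epsilon_1)$), yields an $o(1)$ bound in terms of $\|\phi - \phi_S\|_{L^2(\uep,\bg)}$. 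On the bad subset $|\phi - \phi_S| \ge \phi_S(a(\epsilon_1))/2$ pointwise, so by Chebyshev and Lemma \ref{lem: conv of phi} the $\bg$-volume, and hence (via $\phi \le AC$) the Lebesgue measure of this set are $o(1)$; the integrals $\int_{\text{bad}} 1/\phi\,d\rho$ and $\int_{\text{bad}} 1/\phi_S\,d\rho$ are then controlled by the same $\bg$-volume bound. Lemma \ref{lem: a close to rho} ensures that $\phi_S(a(\epsilon_1))$ stays bounded away from $0$ uniformly in $\delta$, so the constants do not degenerate. Setting $h_1 = \sqrt{2\alpha_1 D + \alpha_1^2}$, Theorem \ref{vadb bdy} gives $d_{\mf}(U_0, \uep) \le h_1(V + A) < \epsilon/3$ for $\delta$ small enough; the symmetric argument, interchanging the roles of $V'$ and $U\setminus V'$, yields $d_{\mf}(U_0, \wuep) < \epsilon/3$.

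The main obstacle will be controlling the one-dimensional integral $\int |1/\phi - 1/\phi_S|\,d\rho$ without a pointwise lower bound on $\phi$, which is not provided by any of the standing assumptions on the initial data. The good/bad dichotomy above is what resolves this: smallness of $\phi$ on the bad subset forces that subset to have small $\bg$-volume through the $L^2$ convergence of Lemma \ref{lem: conv of phi}, while the upper bound $\phi \le AC$ prevents a bounded radial interval from concealing too much $\bg$-volume when translating between $d\rho$ and $d\omega_{\bg}$.
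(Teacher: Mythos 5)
Your overall architecture matches the paper's: you feed the distance non-increasing identifications $J_1,J_2$ into Theorem \ref{vadb bdy}, control diameter, volume, and boundary area uniformly, and reduce everything to smallness of a radial integral of $\left|1/\phi - 1/\phi_S\right|$ driven by $L^2$ control (your good/bad split $\{\phi \ge \phi_S/2\}$ versus $\{\phi<\phi_S/2\}$, run through Lemma \ref{lem: conv of phi}, is a legitimate and in fact somewhat leaner alternative to the paper's decomposition into $W_{\epsilon_2}, B_{\epsilon_2}$ defined by $|Q|\le\epsilon_2$, with the bad volume entering as $V_1$ in the VADB theorem rather than being absorbed into $\alpha_1$). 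The pointwise inequality $\bigl|\sqrt{a+c}-\sqrt{b+c}\bigr|\le\bigl|\sqrt a-\sqrt b\bigr|$ behind your length-gap bound is also fine.

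The genuine gap is the sentence ``reducing a minimizing $g_0$-geodesic to its meridian plane so that $\rho$ is monotone along it.'' Restriction to a meridian plane follows from symmetry, but monotonicity of $\rho$ along a minimizer is false in general (a minimizer joining two points on the same sphere $S_A$ dips inward and comes back out), and even the correct weaker statement — that the radial coordinate is first decreasing then increasing, so the curve meets each discrepancy annulus a bounded number of times — is not automatic: in a spherically symmetric annulus whose spheres do not all have positive mean curvature, minimizers can have interior radial maxima and oscillate radially, in which case your bound $\int_\gamma |\dot\rho|\,|1/\phi-1/\phi_0|\,dt \le \int_{a(\epsilon_1)}^{A}|1/\phi-1/\phi_S|\,d\rho$ fails, since each radial traversal of a bad annulus contributes the corresponding one-dimensional integral again. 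This is exactly what the paper's Lemma \ref{lem: no max} supplies: using that the $\bg$- and $g_S$-pieces of $g_0$ are foliated by spheres of positive mean curvature, a Hessian comparison rules out radial relative maxima, so a $g_0$-minimizer splits into at most two radially monotone arcs (whence the factor $2$ in $\alpha$ in Lemma \ref{lemm: diff in dist}); moreover the paper's construction of $g_0$ (deleting the innermost sphere of each $V_j$ so the metric is smooth up to the outer boundary of each annulus, continuous at the inner one) is made precisely so this lemma applies to the only continuous metric $g_0$. To close your argument you must either invoke and verify this no-radial-maximum property for $g_0$-minimizers, or replace it with some other device bounding the number of times a minimizer can cross the region where $g_0\ne\bg$ (respectively $g_0\ne g_S$); as written, the reduction to the one-dimensional integral — the crux of the distance estimate — is unjustified.
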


First, we prove bounds on diameters, volumes, and volumes of the boundary for each $\uep$, $\wuep$. Then, we choose the set $W$ on which we can get the estimate \eqref{est: dist} and prove the distance estimate. Finally, we prove the proposition by applying Theorem \ref{vadb bdy}.

\subsection{Volumes, Areas, and Diameters}

We now estimate the volumes, boundary areas, and diameters of the sets $\uep$ and $\wuep$.

\begin{lemma} \label{lemm: volumes of regions}
	Suppose the initial data $(M,g,k)$ are smooth, asymptotically flat, and spherically symmetric and satisfy the dominant energy condition \eqref{assumption: DEC} and bounded expansion condition \eqref{assumption: bounded expansions.}. Then, the volumes of the diffeomorphic subregions $\uep$ and $\wuep$ may be estimated by 
	\begin{align}\label{est: bg vol in}
		\mbox{Vol}_{\bg}(\uep), \mbox{Vol}_{g_S}(\wuep) \leq \omega_{n-1}A^{n-1}\widetilde{D}.
	\end{align}
\end{lemma}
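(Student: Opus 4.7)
The plan is to write each volume as an iterated integral using the coarea formula in radial coordinates, pull out the spherical factor $\rho^{n-1}$, and then recognize what remains as a radial length that is bounded by $\widetilde{D}$ via the definition of $a(\epsilon_1)$.

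First I would write each metric in the warped form
\begin{equation*}
\bg = \frac{1}{\phi^2}d\rho^2 + \rho^2 d\Omega^2, \qquad g_S = \frac{1}{\phi_S^2}d\rho^2 + \rho^2 d\Omega^2,
\end{equation*}
so that the volume forms are $d\omega_{\bg} = \phi^{-1}\rho^{n-1} d\rho\, d\sigma$ and $d\omega_{g_S} = \phi_S^{-1}\rho^{n-1}d\rho\, d\sigma$, where $d\sigma$ is the standard measure on $\IS^{n-1}$. Integrating over the angular factor gives
\begin{equation*}
\vol_{\bg}(\uep) = \omega_{n-1}\int_{a(\epsilon_1)}^A \frac{\rho^{n-1}}{\phi}\,d\rho \leq \omega_{n-1}A^{n-1}\int_{a(\epsilon_1)}^A \frac{1}{\phi}\,d\rho,
\end{equation*}
and similarly for $\wuep$ with $\phi_S$ in place of $\phi$. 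It then remains to show that each of the radial integrals $\int_{a(\epsilon_1)}^A \phi^{-1}d\rho$ and $\int_{a(\epsilon_1)}^A \phi_S^{-1}d\rho$ is bounded by $\widetilde{D}$.

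For the Schwarzschild side this is immediate from the definition $\widetilde{D} = \int_{\rho_0}^A \phi_S^{-1}d\rho$ together with the fact that $a(\epsilon_1) \geq \rho_0 + \epsilon_1 > \rho_0$, so $\int_{a(\epsilon_1)}^A \phi_S^{-1}d\rho < \widetilde{D}$. For the Jang side I would split into the two cases for $a(\epsilon_1)$ described in Figure \ref{fig:set diagram}. In the case where the radial depth of $\upa$ in $\bg$ exceeds $\widetilde{D}$, the definition of $a(\epsilon_1)$ as a minimum over $T_{\widetilde{D}}(S_A)$ (shifted by $\epsilon_1$) gives exactly $\int_{a(\epsilon_1)-\epsilon_1}^A \phi^{-1}d\rho = \widetilde{D}$, and hence $\int_{a(\epsilon_1)}^A \phi^{-1}d\rho < \widetilde{D}$. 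In the other case $a(\epsilon_1) - \epsilon_1 = \rho_0$ and the radial depth of $\upa$ is at most $\widetilde{D}$, so again $\int_{a(\epsilon_1)}^A \phi^{-1}d\rho \leq \int_{\rho_0}^A \phi^{-1}d\rho \leq \widetilde{D}$.

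There is no real obstacle here; the only thing to be careful about is bookkeeping the two cases in the definition of $a(\epsilon_1)$ and making sure we use the integrand's positivity so that shrinking the domain of integration strictly decreases the value. Combining the coarea formula, the pointwise bound $\rho \leq A$ on $\uep$ and $\wuep$, and the two radial estimates yields both bounds in \eqref{est: bg vol in}.
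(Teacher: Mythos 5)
Your proof is correct and takes essentially the same route as the paper: apply the coarea formula in radial coordinates, bound the spherical factor by the largest sphere area $\omega_{n-1}A^{n-1}$, and bound the remaining radial integral by $\widetilde{D}$ using the definition of $a(\epsilon_1)$. Your two-case analysis for $\int_{a(\epsilon_1)}^A \phi^{-1}\,d\rho$ just spells out, correctly and with the right lower limit of integration, the paper's one-line observation that the radial depth of $\uep$ is at most $\widetilde{D}$.
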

\begin{proof}
    Both $\uep$ and $\wuep$ have a radial depth of at most $\widetilde{D}$ and a largest sphere of radius $A$. By the coarea formula, we have
    \begin{align}
    \begin{split}
        \mbox{Vol}_{\bg}(\uep) &= \int_{\uep} d\omega_{\bg} \\
            &\leq \mbox{Area}_{\bg}(S_A) \int_{\rho_0 + \epsilon_1}^A \frac{1}{\phi} d \rho \\
            & \leq \omega_{n-1}A^{n-1}\widetilde{D}.
    \end{split}
    \end{align}
    The same estimate holds for $\wuep$.
\end{proof}

\begin{lemma}\label{lem: bdy areas}
    Suppose the initial data $(M,g,k)$ are smooth, asymptotically flat, and spherically symmetric and satisfy the dominant energy condition \eqref{assumption: DEC} and bounded expansion condition \eqref{assumption: bounded expansions.}. Then, the boundary areas of the diffeomorphic subregions $\uep$ and $\wuep$ may be estimated by 
	\begin{align}\label{est: bdy}
		\mbox{Vol}_{\bg}(\pd\uep), \mbox{Vol}_{g_S}(\pd\wuep) = \omega_{n-1}(A^{n-1}+a(\epsilon_1)^{n-1}).
	\end{align}
\end{lemma}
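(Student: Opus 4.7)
The plan is to use that both metrics $\bg$ and $g_S$ are written in radial coordinates with the standard round sphere $\rho^2 d\Omega^2$ appearing as the angular part. The boundary of each annular region $\uep$ and $\wuep$ is a disjoint union of two coordinate spheres, $S_A \cupdot S_{a(\epsilon_1)}$ (respectively $\widetilde{S}_A \cupdot \widetilde{S}_{a(\epsilon_1)}$), so the computation reduces to measuring the area of coordinate spheres in each metric.

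First I would note that in the coordinate expressions
\begin{align*}
    \bg = \frac{1}{\phi^2} d\rho^2 + \rho^2 d\Omega^2, \qquad g_S = \frac{1}{\phi_S^2} d\rho^2 + \rho^2 d\Omega^2,
\end{align*}
the induced metric on any coordinate sphere $\{\rho = c\}$ is $c^2 d\Omega^2$, i.e.\ the round metric of radius $c$ on $\IS^{n-1}$. Thus $\mbox{Vol}_{\bg}(S_c) = \mbox{Vol}_{g_S}(\widetilde{S}_c) = \omega_{n-1} c^{n-1}$, independent of $\phi$ or $\phi_S$.

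Then I would identify the boundaries: by definition of $\uep$ and $\wuep$ as the annuli between radii $a(\epsilon_1)$ and $A$, we have $\pd \uep = S_A \cupdot S_{a(\epsilon_1)}$ and $\pd \wuep = \widetilde{S}_A \cupdot \widetilde{S}_{a(\epsilon_1)}$. Summing the two spherical areas yields
\begin{align*}
    \mbox{Vol}_{\bg}(\pd \uep) = \omega_{n-1} A^{n-1} + \omega_{n-1} a(\epsilon_1)^{n-1},
\end{align*}
and the identical computation in $g_S$ gives the same value for $\pd \wuep$, establishing \eqref{est: bdy}.

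There is no real obstacle here; the only thing worth flagging is that we must know $a(\epsilon_1) < A$ so that the two boundary spheres are genuinely disjoint and exhaust the boundary, which follows from the construction of $a(\epsilon_1)$ along with the choice $A \gg \rho_0$ made at the start of Section \ref{setup}.
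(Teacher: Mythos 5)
Your proof is correct and follows essentially the same reasoning as the paper, which simply notes that the result follows from spherical symmetry and the radial values on the boundary; you have merely written out the computation of the induced round metrics on the two boundary spheres explicitly.
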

\begin{proof}
    This follows by spherical symmetry and the radial values on the boundary.
\end{proof}

\begin{lemma}\label{lemm diams}
	Suppose the initial data $(M,g,k)$ are smooth, asymptotically flat, and spherically symmetric and satisfy the dominant energy condition \eqref{assumption: DEC} and bounded expansion condition \eqref{assumption: bounded expansions.}. Let $D_0 \geq A, \widetilde{D}$. Then
	\begin{align}
		\max\{\mbox{diam}_{\bg}(\uep),\mbox{diam}_{g_S}(\widetilde{U}_{a(\epsilon_1)}^A) \} \leq  4 \pi D_0.
	\end{align}
\end{lemma}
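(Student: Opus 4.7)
The plan is to bound the diameter by exhibiting, for any two points in the annulus, an explicit piecewise path consisting of a radial segment followed by a spherical segment, then bounding the length of each segment separately.

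First I would observe that both annuli are spherically symmetric with outer sphere at $\rho = A$ and inner sphere at $\rho = a(\epsilon_1)$, and that in either metric the induced metric on the sphere $\{\rho = \mathrm{const}\}$ is $\rho^2 d\Omega^2$, so the intrinsic diameter of such a sphere (viewed in $\uep$ or $\wuep$) is at most $\pi \rho \le \pi A \le \pi D_0$. Next, for the radial segment, the length in $\bg$ from $S_A$ to $S_{a(\epsilon_1)}$ is controlled by construction: since $a(\epsilon_1) \ge \min\{\rho : (\rho,\theta) \in T_{\widetilde D}(S_A)\}$, every point of $\uep$ lies within $\bg$-distance $\widetilde D$ of $S_A$ along a radial curve, so the radial depth of $\uep$ is at most $\widetilde D \le D_0$. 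For $\wuep$, the radial depth in $g_S$ is
\begin{align}
    \int_{a(\epsilon_1)}^{A} \frac{1}{\phi_S}\,d\rho \;\le\; \int_{\rho_0}^{A} \frac{1}{\phi_S}\,d\rho \;=\; \widetilde D \;\le\; D_0.
\end{align}

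Given any two points $(\rho_1,\theta_1), (\rho_2,\theta_2)$ in $\uep$ (resp.\ $\wuep$), I would connect them by first traveling radially from $(\rho_1,\theta_1)$ to $(\rho_2,\theta_1)$ — of length at most the radial depth $\widetilde D \le D_0$ — and then traveling within the sphere $\{\rho = \rho_2\}$ from $\theta_1$ to $\theta_2$, which has length at most $\pi \rho_2 \le \pi D_0$. This yields
\begin{align}
    d_{\bg}\bigl((\rho_1,\theta_1),(\rho_2,\theta_2)\bigr) \;\le\; D_0 + \pi D_0 \;=\; (1+\pi)D_0 \;\le\; 4\pi D_0,
\end{align}
and identically for $d_{g_S}$ on $\wuep$, giving the stated bound on both diameters.

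There is no real obstacle here; the argument is elementary once one notes that the radial depth of both annuli is bounded by $\widetilde D$ by construction and that the spheres of constant $\rho$ have the standard round metric scaled by $\rho \le A$. The constant $4\pi D_0$ is not tight; it is chosen for convenience so that $D_0$ may be taken as the uniform scale bounding both $A$ and $\widetilde D$ in later applications of Theorem \ref{vadb bdy}.
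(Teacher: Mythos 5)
Your proposal is correct and follows essentially the same route as the paper: bound the radial depth of both annuli by $\widetilde{D}$ (using the construction of $a(\epsilon_1)$ for $\uep$ and $a(\epsilon_1)\geq\rho_0$ for $\wuep$), bound the diameter of a symmetry sphere by $\pi A$, and combine a radial segment with a spherical segment via the triangle inequality. Your constant $(1+\pi)D_0$ is in fact slightly sharper than the paper's intermediate bound $4\widetilde{D}+\pi A$, and both are absorbed into $4\pi D_0$.
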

\begin{proof}
	The depth of $\uep$ and $\widetilde{U}_{a(\epsilon_1)}^A$ is at most $2\widetilde{D}$, and the largest symmetry sphere satisfies
	\begin{align}
		\mbox{diam}_{g_S}(S_A) = \mbox{diam}_{g_S}(\widetilde{S}_A) = \pi A.
	\end{align}
	By the triangle inequality the diameters of $\uep$ and $\widetilde{U}_{a(\epsilon_1)}^A$ are no larger than $4 \widetilde{D}+ \pi A \leq 4\pi D_0$. 
\end{proof}

\subsection{Metric Estimate}

Although we will eventually compare the metrics $g_S$ to $g_0$ and $\bg$ to $g_0$, because $g_0$ is an amalgam of $g_S$ and $\bg$ we may first compare these metrics. In this section, we define the the ``good" region $W$ on which $\bg$ will be close to $g_S$ and ``bad" region $B := U - W$ on which the metric is not close to $g_S$. Once again, we can write each metric as
\begin{align}
    \bg = \left(1 - \frac{2(m_0 + P(\rho) + Q(\rho))}{\rho^{n-2}}  \right)^{-1} d\rho^2 + \rho^2 d\Omega^2 && g_S = \left(1 - \frac{2m_0}{\rho^{n-2}}  \right)^{-1} d\rho^2 + \rho^2 d\Omega^2 
\end{align}
which suggests that choosing the set on which $\bg$ and $g_S$ are close is equivalent to choosing the set on which $Q$ is forced to be small in some precise way.

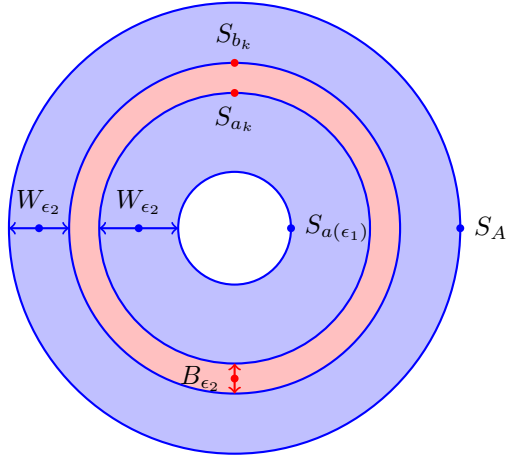
\begin{figure}[h!]
    \centering
    \begin{tikzpicture}
\fill[red,nearly transparent] (1.8,0) 
    arc [radius = 1.8, start angle = 0, delta angle = 360] 
    -- (2.2,0) arc [radius = 2.2, start angle = 360, delta angle = -360] -- cycle;

    \fill[blue,nearly transparent] (0.75,0) 
    arc [radius = 0.75, start angle = 0, delta angle = 360] 
    -- (1.8,0) arc [radius = 1.8, start angle = 360, delta angle = -360] -- cycle;

    \fill[blue,nearly transparent] (2.2,0) 
    arc [radius = 2.2, start angle = 0, delta angle = 360] 
    -- (3,0) arc [radius = 3, start angle = 360, delta angle = -360] -- cycle;


    
    \draw[blue,thick] circle[radius=2.2];
    \draw[blue,thick] circle[radius=3];
    \draw[blue,thick] circle[radius=1.8];
    \draw[blue,thick] circle[radius=0.75];


    \node[circle, inner sep=0pt, minimum size = 3pt, fill=blue, label={0: $S_{a(\epsilon_1)}$}] at (0.75,0) {};

    \node[circle, inner sep=0pt, minimum size = 3pt, fill=red, label={270: $S_{a_{k}}$}] at (0,1.8) {};

    \node[circle, inner sep=0pt, minimum size = 3pt, fill=red, label={90: $S_{b_{k}}$}] at (0,2.2) {};

    \node[circle, inner sep=0pt, minimum size = 3pt, fill=blue, label={0: $S_A$}] at (3,0) {};

    \draw[blue,thick, <->] (-3,0) -- (-2.2,0);
    \draw[blue,thick, <->] (-1.8,0) -- (-0.75,0);
    \draw[red,thick, <->] (0,-2.2) -- (0,-1.8);

    \node[circle, inner sep=0pt, minimum size = 3pt, fill=blue, label={90: $W_{\epsilon_2}$}] at (-2.6,0) {};
    \node[circle, inner sep=0pt, minimum size = 3pt, fill=blue, label={90: $W_{\epsilon_2}$}] at (-1.275,0) {};

    \node[circle, inner sep=0pt, minimum size = 3pt, fill=red, label={180: $B_{\epsilon_2}$}] at (0,-2) {};
\end{tikzpicture}
    \caption{We show a sketch of the regions of interest inside of $\Sigma$. The blue regions labeled $W_{\epsilon_2}$ represent the part of $\uep$ for which $|Q|\leq \epsilon_2$ and the red region is the complement of this set in $\uep$. Note that the spherical symmetry forces the components of $W_{\epsilon_2}$ and $B_{\epsilon_2 }$ to be annular regions as displayed here.}
    \label{fig:good set bad set}
\end{figure}

We introduce a new parameter $\epsilon_2 > 0$ and let 
\begin{align}
\begin{split}
    B_{\epsilon_2} &:= \{ (\rho,\theta) \in \uep \spa |Q(\rho)| > \epsilon_2  \} \mbox{ and }\\
    W_{\epsilon_2} &:= \uep - B_{\epsilon_2}.
    \end{split}
\end{align}
By definition, $B_{\epsilon_2 } - \pd \uep$ is open and composed of perhaps infinitely many disjoint annular components; however, as $\uep$ is second countable, there may be at most countably many of these components. Let $B_k$ be a component of $B_{\epsilon_2 }$ so that $\pd B_k = S_{a_k} \cup S_{b_k}$ with $a_k < b_k$ when $a_k \ne a(\epsilon_1)$ and $b_k \ne A$; otherwise, we might have $a_k = b_k$. We can thus describe $\pd B_{\epsilon_2}$ by the radial values of the inner and outer boundaries of each region -- ie, the set $\{(a_k,b_k) \}$. See Figure \ref{fig:good set bad set}. We denote the diffeomorphic counterparts of these sets in $(\sch(m_0),g_S)$ by $\widetilde{B}_{\epsilon_2}$ and $\widetilde{W}_{\epsilon_2}$. When we identify these regions via the diffeomorphism, we refer to them as $B$ and $W$.

\begin{lemma}\label{lem: metric estimate}
    Suppose the initial data $(M,g,k)$ are smooth, asymptotically flat, and spherically symmetric and satisfy the dominant energy condition \eqref{assumption: DEC} and bounded expansion condition \eqref{assumption: bounded expansions.}. Let $\epsilon_1$, $\epsilon_3 > 0$. Then, for $\epsilon_2(\epsilon_1,\epsilon_3)$ and $\delta(\epsilon_1,\epsilon_2,\epsilon_3)$ small enough, we have 
    \begin{align}\label{eq: metric estimate}
        \bg \leq (1+\epsilon_3)^{2} g_S \mbox{ and } g_S \leq (1+\epsilon_3)^{2} \bg
    \end{align}
    on $W$.
\end{lemma}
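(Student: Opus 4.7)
The plan is to exploit the fact that, as written in the displayed expressions for $\bg$ and $g_S$ just before the lemma, the two metrics have identical angular parts $\rho^2 d\Omega^2$, so the desired tensor inequalities reduce to comparing the radial coefficients. Explicitly, $\bg \leq (1+\epsilon_3)^2 g_S$ at a point is equivalent to $\bg_{\rho\rho} \leq (1+\epsilon_3)^2 (g_S)_{\rho\rho}$ there (and symmetrically for the reverse inequality), since $(1+\epsilon_3)^2 \geq 1$ handles the angular direction trivially.

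First I would form the ratio
\begin{align}
    \frac{\bg_{\rho\rho}}{(g_S)_{\rho\rho}} = \frac{1 - \frac{2m_0}{\rho^{n-2}}}{1 - \frac{2(m_0 + P(\rho) + Q(\rho))}{\rho^{n-2}}} = \frac{1}{1 - \frac{2P(\rho) + 2Q(\rho)}{\rho^{n-2} - 2m_0}}
\end{align}
and observe that proving the lemma amounts to showing that the fraction $\frac{|2P(\rho) + 2Q(\rho)|}{\rho^{n-2} - 2m_0}$ can be made smaller than an absolute constant depending only on $\epsilon_3$ (e.g., something like $1 - (1+\epsilon_3)^{-2}$), so that the ratio lies in the interval $[(1+\epsilon_3)^{-2}, (1+\epsilon_3)^{2}]$.

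Next I would extract the lower bound on the denominator. Since by the defining inequality $a(\epsilon_1) \geq \rho_0 + \epsilon_1$, on $\uep$ we have
\begin{align}
    \rho^{n-2} - 2m_0 \geq a(\epsilon_1)^{n-2} - \rho_0^{n-2} \geq (\rho_0 + \epsilon_1)^{n-2} - \rho_0^{n-2} =: c_1(\epsilon_1) > 0,
\end{align}
a positive constant independent of $\delta$. For the numerator, the definition of $W = W_{\epsilon_2}$ gives $|Q(\rho)| \leq \epsilon_2$ pointwise on $W$, while the discussion following \eqref{eq: restricted pos term} yields $0 \leq P(\rho) \leq \delta m_0$ everywhere on $\Sigma$. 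Combining,
\begin{align}
    \frac{|2P(\rho) + 2Q(\rho)|}{\rho^{n-2} - 2m_0} \leq \frac{2\delta m_0 + 2\epsilon_2}{c_1(\epsilon_1)} \qquad \text{on } W.
\end{align}

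Finally, I would choose parameters in the stated order. Given $\epsilon_1, \epsilon_3 > 0$, fix $\epsilon_2 = \epsilon_2(\epsilon_1,\epsilon_3)$ small enough that $2\epsilon_2/c_1(\epsilon_1)$ is at most half of the threshold $1 - (1+\epsilon_3)^{-2}$, and then fix $\delta = \delta(\epsilon_1,\epsilon_2,\epsilon_3)$ small enough that $2\delta m_0/c_1(\epsilon_1)$ absorbs the remaining half. The ratio above is then controlled so that both $\bg_{\rho\rho} \leq (1+\epsilon_3)^2 (g_S)_{\rho\rho}$ and $(g_S)_{\rho\rho} \leq (1+\epsilon_3)^2 \bg_{\rho\rho}$ hold on $W$, yielding \eqref{eq: metric estimate}. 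The only subtle point — and the place where the choice of $W$ is essential — is that without the pointwise control $|Q|\leq \epsilon_2$ we only have the $L^2$ bound from Lemma \ref{lem: analysis of Q}, which is insufficient for a pointwise metric comparison; this is precisely why the bad set $B_{\epsilon_2}$ must be excised before invoking the VADB theorem.
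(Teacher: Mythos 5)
Your proposal is correct and follows essentially the same route as the paper: a pointwise comparison of the radial coefficients on $W$, using $0 \leq P \leq \delta m_0$, the excised bound $|Q| \leq \epsilon_2$, and the lower bound $\rho^{n-2} - 2m_0 \geq \xi(\epsilon_1)$ coming from $\rho \geq a(\epsilon_1) \geq \rho_0 + \epsilon_1$, with $\epsilon_2$ and then $\delta$ chosen small relative to $\xi(\epsilon_1)$ and $\epsilon_3$. The only cosmetic difference is that you treat both inequalities at once via the single ratio $\bigl(1 - \tfrac{2(P+Q)}{\rho^{n-2}-2m_0}\bigr)^{-1}$, whereas the paper introduces two comparison functions $c_1(\rho)$ and $c_2(\rho)$, leading to the same smallness condition on $\delta m_0 + \epsilon_2$.
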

\begin{proof}
    The proof is done by direct estimate. First, we find a function $c_1(\rho) \geq 1$ that we can use to get the comparison $\bg \leq c_1(\rho) g_S$. Second, we show that for a given $\epsilon_1$ and $\epsilon_3$ there exist $\delta$ and $\epsilon_2$ small enough so that $c_1(\rho) \leq (1+\epsilon_3)^2$ on $W$. We repeat the procedure to find $c_2(\rho)$ so that $g_S \leq c_2(\rho) \bg$ and $c_2(\rho) \leq (1+\epsilon_3)^2$ on $W$ for $\delta$ and $\epsilon_2$ small enough.

    In $\rho$ coordinates we have
    \begin{align}
    \begin{split}
        \bg &= \left( 1 - \frac{2 m(\rho)}{\rho^{n-2}}  \right)^{-1} d\rho^2 + \rho^2 d\Omega^2  \\
        &= \left( 1 - \frac{2 (m_0 + P(\rho) + Q(\rho) )}{\rho^{n-2}}  \right)^{-1} d\rho^2 + \rho^2 d\Omega^2. 
    \end{split}
    \end{align}
    We have that $P \leq \delta m_0$ on all of $\Sigma$. Restricting to $W$, we get $Q \leq \epsilon_2$ as well. We therefore have the estimate 
    \begin{align}
        1- \frac{2(m_0+ P(\rho) + Q(\rho))}{\rho^{n-2}} & \geq 1-\frac{2(m_0+ \delta m_0 + \epsilon_2)}{\rho^{n-2}}.  
    \end{align}
    When 
    \begin{align}\label{inv nec}
        \rho^{n-2} - 2(m_0 + \delta m_0 + \epsilon_2) > 0
    \end{align}
    we may invert both sides to get that the radial metric coefficient is estimated by
     \begin{align}\label{ineq 1}
        \left(1- \frac{2(m_0+ P(\rho) + Q(\rho))}{\rho^{n-2}} \right)^{-1} &\leq \left(1-\frac{2(m_0+ \delta m_0 + \epsilon_2)}{\rho^{n-2}}\right)^{-1}.
    \end{align} 
    
    Let $c_1(\rho) = \left( 1 - \frac{2 (m_0 + \delta m_0 + \epsilon_2 )}{\rho^{n-2}}  \right)^{-1}\left( 1 - \frac{2 m_0 }{\rho^{n-2}}  \right)$. We compute that 
    \begin{align}\label{eq: c1 comp}
        1   \leq c_1(\rho) = \frac{\rho^{n-2}- 2m_0}{\rho^{n-2} - 2(m_0 + \delta m_0 + \epsilon_2)} 
    \end{align}
    when \eqref{inv nec} holds. It follows that  
    \begin{align}
        c_1(\rho) g_S = \left( 1 - \frac{2 (m_0 + \delta m_0 + \epsilon_2 )}{\rho^{n-2}}  \right)^{-1} d\rho^2 + c_1(\rho) \rho^2 d\Omega^2 \geq \bg.
    \end{align}

    We finish the proof of the first metric estimate by showing there exists $\epsilon_2,\delta m_0$ small enough so that $c_1(\rho) \leq (1+ \epsilon_3)^2$ by finding an upper bound for $c_1(\rho)$ and showing this bound can be controlled as needed. Recall that $\rho_0^{n-2} = 2m_0$ and set
    \begin{align}\label{eq: xi}
        \xi(\epsilon_1) := (\rho_0+\epsilon_1)^{n-2}-2m_0= \sum_{k=0}^{n-3}\binom{n-2}{k} \epsilon_1^{n-2-k}\rho_0^k.
    \end{align}
    
    Note that $c_1(\rho)$ is decreasing in $\rho$ so that the right hand side of \eqref{eq: c1 comp} will be bounded above on $W$ by $c_1(\rho_0 + \epsilon_1)$. We have therefore that
    \begin{align}
        c_1(\rho) &\leq c_1(\rho_0 + \epsilon_1) = \frac{\xi(\epsilon_1)}{\xi(\epsilon_1) - 2\delta m_0 - 2\epsilon_2}.
    \end{align}
    A brief computation shows that $\frac{\xi(\epsilon_1)}{\xi(\epsilon_1) - 2\delta m_0 - 2\epsilon_2} \leq (1+\epsilon_3)^2$ when
    \begin{align}\label{eq: est 2}
        \delta m_0 + \epsilon_2 \leq \frac{\xi(\epsilon_1)(2\epsilon_3 + \epsilon_3^2 ) }{2(1+\epsilon_3)^2}.
    \end{align}
    Note that if \eqref{eq: est 2} holds then \eqref{inv nec} holds, so all inversions and calculations are valid given \eqref{eq: est 2}. It follows that
    \begin{align}
        \bg \leq c_1(\rho) g_S \leq (1+\epsilon_3)^2 g_s
    \end{align}
    on $W$ when $\delta m_0$ and $\epsilon_2$ satisfy \eqref{eq: est 2}.

    We follow the same idea to show that for $\epsilon_2,\delta m_0$ small enough we have
    \begin{align}
        g_S \leq c_2(\rho) \bg \leq (1+\epsilon_3)^2 \bg.
    \end{align}
    First, we study the Schwarzschild metric. We have by definition of the set $W$ that $Q(\rho) + \epsilon_2 \geq 0$ on $W$. 
    Therefore, we have
    \begin{align}
        0 \leq 2(P(\rho)+Q(\rho)+\epsilon_2)\leq 2\delta m_0 + 4\epsilon_2
    \end{align}
    so that when
    \begin{align}\label{eq: est 4}
        2\delta m_0 + 4 \epsilon_2 < \xi(\epsilon_1)
    \end{align}
    we have 
    \begin{align}
        \rho^{n-2} - 2m_0 - 2(P(\rho)+Q(\rho)+\epsilon_2) \geq \xi(\epsilon_1) - 2(P(\rho)+Q(\rho)+\epsilon_2) > 0
    \end{align}
    and
    \begin{align}
        1 - \frac{2 (m_0 + P(\rho) + Q(\rho) + \epsilon_2)}{\rho^{n-2}}  > 0.
    \end{align}
    
   We then have
    \begin{align}\label{eq: that one}
    \begin{split}
        g_S &= \left( 1 - \frac{2 m_0}{\rho^{n-2}}  \right)^{-1} d\rho^2 + \rho^2 d\Omega^2  \\
        &\leq \left( 1 - \frac{2 (m_0 + P(\rho) + Q(\rho) + \epsilon_2)}{\rho^{n-2}}  \right)^{-1} d\rho^2 + \rho^2 d\Omega^2 \\
        & = \left( 1 - \frac{2 (m(\rho) + \epsilon_2)}{\rho^{n-2}}  \right)^{-1} d\rho^2 + \rho^2 d\Omega^2 
    \end{split}
    \end{align}
    where the second inequality follows because $0 \leq P(\rho) + Q(\rho) + \epsilon_2$. Therefore, we choose 
    \begin{align}
        c_2(\rho) = \left( 1 - \frac{2 (m(\rho) + \epsilon_2)}{\rho^{n-2}}  \right)^{-1}\left( 1 - \frac{2 m(\rho)}{\rho^{n-2}}  \right)
    \end{align}
    and can check that $c_2(\rho) \geq 1$ when
    \begin{align}
        \rho^{n-2} - 2(m(\rho) + \epsilon_2) > 0
    \end{align}
    which we have already guaranteed by \eqref{eq: est 4}. Now, we have
    \begin{align}\label{eq: this one}
        c_2(\rho) \bg = \left( 1 - \frac{2 (m(\rho) + \epsilon_2)}{\rho}  \right)^{-1} d\rho^2 + c_2(\rho)\rho^2 d\Omega^2 \geq g_S
    \end{align}
    by the last line of \eqref{eq: that one} and that $c_2(\rho) \geq 1$. 
    
    To complete the proof, we check the conditions under which $c_2(\rho) \leq (1+\epsilon_3)^2$. First, we will find an upper bound for $c_2(\rho)$ in terms of $\xi(\epsilon_1)$, $\epsilon_2$, and $\delta m_0$. Observe that
    \begin{align}
        \begin{split}
            c_2(\rho) &= \frac{\rho^{n-2}-2m(\rho)}{\rho^{n-2}-2m(\rho)-2\epsilon_2} \\
            &= \frac{1}{1-\frac{2\epsilon_2}{\rho^{n-2}-2m(\rho)}}
        \end{split}
    \end{align}
    from which we can see that we may find an upper bound for $c_2(\rho)$ by finding a lower bound for $\rho^{n-2}-2m(\rho)$. Indeed, we have that $\rho^{n-2}-2m(\rho) \geq \xi(\epsilon_1) - 2(\delta m_0 + \epsilon_2)$ so that
    \begin{align}
        \begin{split}
            c_2(\rho) &\leq \frac{1}{1-\frac{2\epsilon_2}{\xi(\epsilon_1)-2(\delta m_0+\epsilon_2)}} \\
            &\leq \frac{1}{1-\frac{2\epsilon_2+ 2\delta m_0}{\xi(\epsilon_1)-2(\delta m_0+\epsilon_2)}}
        \end{split}
    \end{align}
    where in the last line we have once more increased the upper bound by subtracting $2\delta m_0/(\xi(\epsilon_1)-2(\delta m_0+\epsilon_2))$ from the denominator, which is valid when
    \begin{align}\label{eq: est 6}
        4(\delta m_0 + \epsilon_2) < \xi(\epsilon_1).
    \end{align}
    Now we check when 
    \begin{align}
        \begin{split}
            c_2(\rho)\leq \frac{1}{1-\frac{2(\epsilon_2+ \delta m_0)}{\xi(\epsilon_1)-2(\delta m_0+\epsilon_2)}} < (1+\epsilon_3)^2.
        \end{split}
    \end{align}
    First, we get that this holds when 
    \begin{align}
        \frac{2(\epsilon_2+ \delta m_0)}{\xi(\epsilon_1)-2(\delta m_0+\epsilon_2)} < \frac{\epsilon_3^2+2\epsilon_3}{(1+\epsilon_3)^2}.
    \end{align}
    Then we solve for $2(\epsilon_2+\delta m_0)$ to get that we have the inequality when
    \begin{align}
        \begin{split}\label{eq: est 5}
            \delta m_0 + \epsilon_2 < \frac{\xi(\epsilon_1)\left( \epsilon_3^2+2\epsilon_3\right)}{2((1+\epsilon_3)^2+\epsilon_3^2+2\epsilon_3)}.
        \end{split}
    \end{align}
    
    Note that \eqref{eq: est 2}, \eqref{eq: est 6}, and \eqref{eq: est 5} hold when we have
    \begin{align}\label{final est}
        \delta m_0 + \epsilon_2 < \frac{\xi(\epsilon_1)\left( \epsilon_3^2+2\epsilon_3\right)}{4((1+\epsilon_3)^2+\epsilon_3^2+2\epsilon_3)}
    \end{align}
    Thus, we have both metric estimates \eqref{eq: metric estimate} on $W$ when $\epsilon_2$ and $\delta $ satisfy \eqref{final est}.
\end{proof}
As a consequence, we get the desired comparison to the $g_0$ metric.
\begin{cor}\label{cor: met comp 2}
    Suppose the initial data $(M,g,k)$ are smooth, asymptotically flat, and spherically symmetric and satisfy the dominant energy condition \eqref{assumption: DEC} and bounded expansion condition \eqref{assumption: bounded expansions.}. Let $\epsilon_1$, $\epsilon_3 > 0$. Then, for $\epsilon_2(\epsilon_1,\epsilon_3)$ and $\delta(\epsilon_1,\epsilon_2,\epsilon_3)$ small enough, we have 
    \begin{align}\label{eq: metric estimate2}
        \bg \leq (1+\epsilon_3)^{2} g_0 \mbox{ and } g_S \leq (1+\epsilon_3)^{2} g_0
    \end{align}
    on $W$.
\end{cor}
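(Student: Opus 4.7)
The plan is to reduce the two inequalities in \eqref{eq: metric estimate2} to the pairwise comparisons between $\bg$ and $g_S$ already established in Lemma \ref{lem: metric estimate}, via a direct case analysis driven by the piecewise definition of $g_0$. Since $g_0$ equals one of $\bg$ or $g_S$ at each point, in each case one of the two inequalities will be trivial and the other will be precisely one of the two halves of Lemma \ref{lem: metric estimate}.

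Concretely, I would first fix $\epsilon_1, \epsilon_3 > 0$ and invoke Lemma \ref{lem: metric estimate} to choose $\epsilon_2(\epsilon_1,\epsilon_3)$ and $\delta(\epsilon_1,\epsilon_2,\epsilon_3)$ small enough that both
\begin{align}
\bg \leq (1+\epsilon_3)^2 g_S \qquad \text{and} \qquad g_S \leq (1+\epsilon_3)^2 \bg
\end{align}
hold on $W$. I claim these same choices of $\epsilon_2$ and $\delta$ work for the corollary, since the transition to $g_0$ introduces no new analytic smallness requirement.

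Next I would split $W$ along the decomposition $U = V' \sqcup (U\setminus V')$ used in the construction of $g_0$. On $W \cap V'$ we have $g_0 = \bg$, so the first inequality $\bg \leq (1+\epsilon_3)^2 g_0$ is the trivial statement $\bg \leq (1+\epsilon_3)^2\bg$ (valid for any $\epsilon_3 > 0$), while the second inequality $g_S \leq (1+\epsilon_3)^2 g_0 = (1+\epsilon_3)^2 \bg$ is exactly the second estimate from Lemma \ref{lem: metric estimate}. On $W \cap (U\setminus V')$ we have $g_0 = g_S$, so now the second inequality $g_S \leq (1+\epsilon_3)^2 g_0$ is trivial and the first inequality $\bg \leq (1+\epsilon_3)^2 g_0 = (1+\epsilon_3)^2 g_S$ is exactly the first estimate from Lemma \ref{lem: metric estimate}. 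Patching the two cases gives both comparisons in \eqref{eq: metric estimate2} on all of $W$.

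There is no serious analytic obstacle here; the corollary is a bookkeeping step that translates the two-sided comparison between $\bg$ and $g_S$ into a two-sided comparison against their piecewise amalgam $g_0$. The only point that warrants verification is that the piecewise definition causes no issue along $\partial V'$: since the estimates from Lemma \ref{lem: metric estimate} are pointwise and $\bg = g_S$ along $\partial V'$ (by continuity of $g_0$ noted during its construction), both inequalities extend through this interface automatically. Thus the same $\epsilon_2(\epsilon_1,\epsilon_3)$ and $\delta(\epsilon_1,\epsilon_2,\epsilon_3)$ supplied by Lemma \ref{lem: metric estimate} suffice.
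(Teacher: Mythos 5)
Your proposal is correct and follows essentially the same route as the paper: at each point of $W$ the metric $g_0$ equals either $\bg$ or $g_S$, so in each case one inequality in \eqref{eq: metric estimate2} is trivial and the other is one of the two estimates of Lemma \ref{lem: metric estimate}, with the same choices of $\epsilon_2$ and $\delta$. The paper's proof is just a terser version of this pointwise case analysis.
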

\begin{proof}
    At any point $x \in W$ we have that $g_0 = \bg$ or $g_0 = g_S$. In either case, both estimates hold by Lemma \ref{lem: metric estimate} when $\delta$ and $\epsilon_2$ are chosen thusly with respect to $\epsilon_1$ and $\epsilon_3$.
\end{proof}

\subsection{Volume of the Bad Region}
Now that we have validated the choices of $W_{\epsilon_2}$ and $B_{\epsilon_2}$ by expressing the metric estimate on $W_{\epsilon_2}$, we need to estimate from above the volume of the sets $B_{\epsilon_2}$ and $\widetilde{B}_{\epsilon_2}$. This estimate relies on the bound on the $L^2$ norm of $Q$ given by \eqref{eq: bound on q}.

\begin{lemma} \label{lem: vol out}
    Suppose the initial data $(M,g,k)$ are smooth, asymptotically flat, and spherically symmetric and satisfy the dominant energy condition \eqref{assumption: DEC} and bounded expansion condition \eqref{assumption: bounded expansions.}. Then, the volumes outside the diffeomorphic subregions, denoted by $B_{\epsilon_2}$ and $\widetilde{B}_{\epsilon_2}$, may be estimated by
	\begin{align}\label{est: vol diff 1}
		\mbox{Vol}_{\bg}(B_{\epsilon_2}) \leq\frac{ \delta \omega_{n-1} m_0 A^{2n-1}C}{\epsilon_2^2},
	\end{align}
	and
	\begin{align}\label{est: vol diff 2}
		\mbox{Vol}_{g_S}(\widetilde{B}_{\epsilon_2}) \leq \frac{\delta \omega_{n-1} m_0 A^n C^2\sqrt{(\rho_0+\epsilon_1)^{n-2}}}{\sqrt{\xi(\epsilon_1)} \epsilon_2^2}.
	\end{align}
	In particular, for small enough $\epsilon_1$
	\begin{align}\label{both vol diff}
		\mbox{Vol}_{\bg}(B_{\epsilon_2}), \mbox{Vol}_{g_S}(\widetilde{B}_{\epsilon_2}) \leq \frac{\delta \omega_{n-1}  m_0 A^n C^2\sqrt{(\rho_0+\epsilon_1)^{n-2}}}{\sqrt{\xi(\epsilon_1)} \epsilon_2^2}.
	\end{align}
\end{lemma}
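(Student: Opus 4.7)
The plan is to derive both estimates from Chebyshev's inequality applied to $Q$, using the $L^2$-bound $\norm{Q}^2_{L^2(U_{\rho_0}^A,\bg)} \le \delta \omega_{n-1} m_0 A^{2n-1} C$ from Lemma \ref{lem: analysis of Q} together with the pointwise warping-factor bound $\phi \le AC$ from Lemma \ref{lem: bound on phi}.

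For the first estimate \eqref{est: vol diff 1}, I would exploit that $|Q(\rho)| > \epsilon_2$ pointwise on $B_{\epsilon_2}$ by definition, so that Chebyshev's inequality yields
\begin{align*}
\epsilon_2^2\, \mbox{Vol}_{\bg}(B_{\epsilon_2}) \le \int_{B_{\epsilon_2}} |Q|^2\, d\omega_{\bg} \le \norm{Q}^2_{L^2(U_{\rho_0}^A,\bg)} \le \delta \omega_{n-1} m_0 A^{2n-1} C,
\end{align*}
after invoking Lemma \ref{lem: analysis of Q}. Rearranging gives \eqref{est: vol diff 1} directly.

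For the second estimate \eqref{est: vol diff 2}, I would transfer the estimate to the Schwarzschild side by change of volume form. Because $\bg$ and $g_S$ are both warped products over the common radial parameter with the same angular factor $\rho^2 d\Omega^2$, the Jacobian of the identification $F_1:\Sigma\to \sch(m_0)$ is $\phi/\phi_S$ and hence
\begin{align*}
\mbox{Vol}_{g_S}(\widetilde{B}_{\epsilon_2}) \;=\; \int_{B_{\epsilon_2}} \frac{\phi}{\phi_S}\, d\omega_{\bg}.
\end{align*}
On $B_{\epsilon_2}$ the radial coordinate satisfies $\rho \ge a(\epsilon_1) \ge \rho_0+\epsilon_1$, so monotonicity of $\phi_S$ together with \eqref{eq: xi} yields the uniform lower bound $\phi_S \ge \sqrt{\xi(\epsilon_1)/(\rho_0+\epsilon_1)^{n-2}}$. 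Combined with $\phi \le AC$ this produces a uniform pointwise bound on $\phi/\phi_S$, and substituting \eqref{est: vol diff 1} for $\mbox{Vol}_{\bg}(B_{\epsilon_2})$ gives \eqref{est: vol diff 2} (with all extra powers of $A$ absorbed into the constant $C$).

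The combined bound \eqref{both vol diff} follows by comparing the two estimates: since $\xi(\epsilon_1)\to 0$ as $\epsilon_1\to 0$, the bound in \eqref{est: vol diff 2} blows up and dominates that of \eqref{est: vol diff 1} for small enough $\epsilon_1$, so both volumes are controlled by the larger expression. There is no genuine obstacle here; the argument is essentially Chebyshev's inequality on $\bg$ plus a Jacobian conversion to $g_S$, and the only care required is tracking the dependence of constants on $A$, $C$, $\epsilon_1$, and $\rho_0$.
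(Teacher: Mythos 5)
Your proof follows essentially the same route as the paper: the first estimate is exactly the paper's Chebyshev-type argument from Lemma \ref{lem: analysis of Q}, and the second is the same Jacobian comparison $d\omega_{g_S} = (\phi/\phi_S)\,d\omega_{\bg}$ combined with $\phi \le AC$ and $\phi_S \ge \phi_S(\rho_0+\epsilon_1) = \sqrt{\xi(\epsilon_1)/(\rho_0+\epsilon_1)^{n-2}}$, followed by the observation that the second bound dominates the first once $\epsilon_1$ is small. The only caveat is bookkeeping: this computation yields $A^{2n}C^2$ rather than the stated $A^nC^2$ (the paper's own proof has the same feature), and since $C$ is the fixed constant from \eqref{assumption: bounded expansions.} and $A \gg \rho_0$, the extra powers of $A$ cannot literally be ``absorbed into $C$''; this discrepancy is harmless for the way the lemma is applied later.
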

\begin{proof}
    From \eqref{eq: bound on q} and the definition of $B_{\epsilon_2}$,
    \begin{align}
    \begin{split}
        \delta \omega_{n-1} m_0 A^{2n-1}C &\geq \int_{\uep} |Q|^2 d\omega_{\bg} \\
        &\geq \int_{B_{\epsilon_2}} \epsilon_2^2 d\omega_{\bg}
    \end{split}
    \end{align}
    from which \eqref{est: vol diff 1} follows by dividing by $\epsilon_2^2$. To get \eqref{est: vol diff 2}, we calculate that
    \begin{align}
    \begin{split}
        \mbox{Vol}_{\bg}(B_{\epsilon_2}) &= \int_{B_{\epsilon_2}}  d\omega_{\bg} \\
        &= \int_{\widetilde{B}_{\epsilon_2}} \frac{\phi_S}{\phi}  d\omega_{g_S} \\
        &\geq \frac{\phi_S(a(\epsilon_1))}{AC} \mbox{Vol}_{g_S}(\widetilde{B}_{\epsilon_2})
    \end{split}
    \end{align}
    where in the last line we have used the upper bound on $\phi$ given by \eqref{eq: bound on phi} in Lemma \ref{lem: bound on phi}. By \eqref{est: vol diff 1}, we then have that
    \begin{align}
        \mbox{Vol}_{g_S}(\widetilde{B}_{\epsilon_2}) \leq \frac{AC}{\phi_S(a(\epsilon_1))}\frac{ \delta \omega_{n-1} m_0 A^{2n-1}C}{\epsilon_2^2}.
    \end{align}
    Note that $a(\epsilon_1) \geq \rho_0 + \epsilon_1$ so that $\phi_S(a(\epsilon_1)) \geq \phi_S(\rho_0 + \epsilon_1) = \sqrt{\frac{\xi(\epsilon_1)}{(\rho_0+\epsilon_1)^{n-2}}}$ where $\xi(\epsilon_1)$ is defined in \eqref{eq: xi}. The result follows.
\end{proof}

\subsection{Distance Estimate}

Now we consider our intermediary metric space $(U_0,g_0)$ to compute the estimate \eqref{est: dist} for $g_S$ and $\bg$. We wish to obtain some estimates
\begin{align}
\begin{split}
    d_S(x,y) - d_0(x,y) &\leq 2\alpha\\
    d_{\bg}(x,y) - d_0(x,y) &\leq 2\alpha.
    \end{split}
\end{align}
In this section, we compute $\alpha$. First, we need a lemma estimating radial distances in the bad region in the $\bg$ and $g_S$ metrics.

\begin{lemma}\label{lem: ests on B}
    Suppose the initial data $(M,g,k)$ are smooth, asymptotically flat, and spherically symmetric and satisfy the dominant energy condition \eqref{assumption: DEC} and bounded expansion condition \eqref{assumption: bounded expansions.}. Let $(a_k,b_k)$ denote the radii which correspond to the inner and outer boundaries of the components of $B$. Then
    \begin{align}
        \sum_k \int_{a_k}^{b_k} \frac{1}{\phi} d\rho, \sum_k \int_{a_k}^{b_k} \frac{1}{\phi_S} d\rho &\leq \frac{\delta  m_0 A^n C^2}{\sqrt{(\rho_0+\epsilon_1)^{n}\xi(\epsilon_1)} \epsilon_2^2}.
    \end{align}
\end{lemma}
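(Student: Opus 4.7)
The plan is to reduce both radial-length sums to the volume estimates already established in Lemma \ref{lem: vol out} by applying the spherically symmetric coarea formula on $B_{\epsilon_2}$ and $\widetilde{B}_{\epsilon_2}$, together with the crucial geometric lower bound $\rho \geq a(\epsilon_1) \geq \rho_0 + \epsilon_1$ that holds throughout the restricted region $\uep$. Both pieces are already in hand, so the lemma should fall out by a short computation.

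For the Jang metric bound, I would write $\bg$ in radial form so that $d\omega_{\bg} = \phi^{-1} \rho^{n-1} \, d\rho \, d\sigma_{\IS^{n-1}}$ and use spherical symmetry together with the decomposition $B_{\epsilon_2} = \bigcup_k B_k$ with $\partial B_k = S_{a_k} \cup S_{b_k}$ to expand
\begin{align*}
    \vol_{\bg}(B_{\epsilon_2}) = \omega_{n-1} \sum_k \int_{a_k}^{b_k} \frac{\rho^{n-1}}{\phi} \, d\rho \geq \omega_{n-1} (\rho_0 + \epsilon_1)^{n-1} \sum_k \int_{a_k}^{b_k} \frac{1}{\phi} \, d\rho.
\end{align*}
Rearranging and substituting the upper bound \eqref{both vol diff} from Lemma \ref{lem: vol out} gives
\begin{align*}
    \sum_k \int_{a_k}^{b_k} \frac{1}{\phi} \, d\rho \leq \frac{\delta m_0 A^n C^2 \sqrt{(\rho_0+\epsilon_1)^{n-2}}}{(\rho_0+\epsilon_1)^{n-1} \sqrt{\xi(\epsilon_1)} \, \epsilon_2^2} = \frac{\delta m_0 A^n C^2}{\sqrt{(\rho_0+\epsilon_1)^n \xi(\epsilon_1)} \, \epsilon_2^2},
\end{align*}
where the final equality is the algebraic identity $(\rho_0+\epsilon_1)^{n-1} / \sqrt{(\rho_0+\epsilon_1)^{n-2}} = \sqrt{(\rho_0+\epsilon_1)^n}$, which is exactly the target constant.

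The argument for the Schwarzschild integral is completely parallel: write $d\omega_{g_S} = \phi_S^{-1} \rho^{n-1} \, d\rho \, d\sigma_{\IS^{n-1}}$, apply spherical symmetry and the same lower bound $\rho \geq \rho_0 + \epsilon_1$ on $\widetilde{B}_{\epsilon_2}$ to get
\begin{align*}
    \vol_{g_S}(\widetilde{B}_{\epsilon_2}) \geq \omega_{n-1}(\rho_0 + \epsilon_1)^{n-1} \sum_k \int_{a_k}^{b_k} \frac{1}{\phi_S} \, d\rho,
\end{align*}
and then insert the same bound \eqref{both vol diff} on $\vol_{g_S}(\widetilde{B}_{\epsilon_2})$. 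The same algebraic simplification produces the identical right hand side.

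There is no real obstacle here — everything rests on the already established $L^2$-control of $Q$ and the fact that the inner radius of $\uep$ is uniformly separated from $\rho_0$ by $\epsilon_1$. The only point to be careful about is making sure the power counting on $(\rho_0 + \epsilon_1)$ matches: the $\sqrt{(\rho_0+\epsilon_1)^{n-2}}$ in the numerator of the volume bound combined with the $(\rho_0+\epsilon_1)^{-(n-1)}$ coming from the coarea lower bound collapses to $(\rho_0+\epsilon_1)^{-n/2}$, which is what appears in the stated estimate.
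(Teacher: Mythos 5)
Your proposal is correct and follows essentially the same route as the paper: bound the volumes of $B_{\epsilon_2}$ and $\widetilde{B}_{\epsilon_2}$ via \eqref{both vol diff}, expand the volumes with the spherically symmetric coarea formula, and use the lower bound $\rho \geq \rho_0 + \epsilon_1$ on $\uep$ to extract the radial integrals, with the same power-counting simplification $(\rho_0+\epsilon_1)^{n-1}/\sqrt{(\rho_0+\epsilon_1)^{n-2}} = \sqrt{(\rho_0+\epsilon_1)^{n}}$. No gaps.
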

\begin{proof}
    By \eqref{both vol diff} we have
    \begin{align}
		\sum_k\int_{a_k}^{b_k}\int_{S_\rho}d\omega_{\bg}  \leq\frac{\delta \omega_{n-1}  m_0 A^n C^2\sqrt{(\rho_0+\epsilon_1)^{n-2}}}{\sqrt{\xi(\epsilon_1)} \epsilon_2^2}.
	\end{align}
    We can minimize $\rho$ by $\rho_0+\epsilon_1$ to get that 
    \begin{align}
		(\rho_0+\epsilon_1)^{n-1}\omega_{n-1}\sum_k\int_{a_k}^{b_k}\frac{1}{\phi} d\rho  \leq\frac{\delta \omega_{n-1}  m_0 A^n C^2\sqrt{(\rho_0+\epsilon_1)^{n-2}}}{\sqrt{\xi(\epsilon_1)} \epsilon_2^2}.
	\end{align}
    from which the estimate follows. The estimate for the Schwarzschild distance follows in the same way.
\end{proof}

Before we compute the $\alpha$ parameter, we need another result concerning the nature of geodesics in spherically symmetric annuli which may be foliated by spheres of positive mean curvature. We say that a curve \begin{align}
c: [t_1,t_2] \to M, \quad c(t) = (r(t),\theta(t))
\end{align}
on such a manifold has a radial relative maximum on an interval $[a,b] \subsetneq [t_1,t_2]$ if $r(t)$ is constant on $[a,b]$ and there exists $\delta$ small enough such that $r(t)$ is strictly increasing on $[a - \delta,a)$ and strictly decreasing on $(b,b+\delta]$. We prove that a geodesic on such a manifold may not have a radial relative maximum. Note that the relative radial maximum may only occur on the outer boundary of the annulus and not the inner boundary. This is why we chose the metric $g_0$ to be smooth up to the outer boundary of each annulus and continuous up to the inner boundary.

\begin{lemma}\label{lem: no max}
    Let $V$ be a spherically symmetric annulus with boundary with Riemannian metric $\hat{g}$ which is smooth up to the outer boundary, smooth in the interior, and continuous up to the inner bondary. Let $V$ be foliated by spheres of positive mean curvature. Then for points $x,y \in U$, the distance minimizing geodesic which realizes $d_{\hat{g}}(x,y) = \ell_{\hat{g}}(\gamma)$ may not achieve a radial relative maximum.
\end{lemma}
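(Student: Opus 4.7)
The plan is to exploit the warped product structure of $\hat{g}$ to show that the radial coordinate $s(t)$ along a distance-minimizing geodesic is convex as a function of arc length, which rules out a radial relative maximum in the interior of the parameter interval. First, I would reparameterize $c$ by $\hat{g}$-arc length; this preserves its being distance minimizing and produces a genuine smooth geodesic wherever the ambient metric is smooth. Then, using the spherical symmetry, I would write the metric in radial arc length coordinates as $\hat{g} = ds^2 + \rho(s)^2\,d\Omega^2$, where $\rho(s)$ is the area radius of the foliating sphere at radial arc length $s$. The positive mean curvature hypothesis is equivalent to $\rho'(s) > 0$ wherever $\rho$ is differentiable, which holds on the interior of $V$ and on a neighborhood of the outer boundary.

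Writing $c(t) = (s(t), \theta(t))$ with respect to these coordinates, the $s$-component of the geodesic equation reads
\begin{align*}
\ddot{s}(t) \;=\; \rho(s(t))\,\rho'(s(t))\,|\dot\theta(t)|^2_{d\Omega^2} \;\geq\; 0,
\end{align*}
so $s(t)$ is convex on any parameter subinterval whose image under $c$ avoids the inner boundary of $V$. Suppose for contradiction that $c$ achieves a radial relative maximum on $[a,b] \subsetneq [t_1,t_2]$ at value $s_0$. By the definition of radial relative maximum, there exist $t_- \in [a-\delta, a)$ and $t_+ \in (b, b+\delta]$ with $s(t_\pm) < s_0 = s(a) = s(b)$. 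Choosing the unique $\lambda \in (0,1)$ with $a = \lambda t_- + (1-\lambda)t_+$, convexity yields
\begin{align*}
s_0 \;=\; s(a) \;\leq\; \lambda\, s(t_-) + (1-\lambda)\, s(t_+) \;<\; s_0,
\end{align*}
the desired contradiction.

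The main technical point, and the reason $g_0$ was constructed to be smooth up to the outer boundary of each annular component, will be ensuring that the geodesic equation is valid at the peak value $s_0$. Since $c$ strictly increases toward $s_0$ on $[a-\delta, a)$, the value $s_0$ must strictly exceed the infimum of $s$ on $V$; hence the peak cannot be attained on the inner boundary where $\hat{g}$ is only continuous, but instead lies either in the smooth interior or on the smooth outer boundary. Thus the convexity derivation applies on a neighborhood of $[a, b]$, and the argument closes. The case $a=b$ (classical strict local maximum at a single point) is covered by the same convexity inequality with no modification.
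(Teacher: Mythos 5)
Your proposal is correct and takes essentially the same route as the paper: your coordinate identity $\ddot{s} = \rho\,\rho'\,|\dot\theta|^2_{d\Omega^2} \geq 0$ along the geodesic is exactly the paper's computation $\frac{d^2}{dt^2}(s\circ\gamma) = \mathrm{Hess}(s)(\dot\gamma,\dot\gamma)$ with $\mathrm{Hess}(s)$ expressed through the positive second fundamental form of the foliating spheres, and both arguments conclude by contradicting the radial relative maximum via this convexity, after observing (as you do, and as the paper does) that the peak cannot occur at the merely continuous inner boundary but only in the interior or at the smooth outer boundary.
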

\begin{proof}
    We write our spherically symmetric metric in arc length coordinates as 
    \begin{align}
        \hat{g} = ds^2 + \rho^2(s)g_S.
    \end{align}
    Suppose $\gamma(t)$ is a distance minimizing geodesic between two points $x$ and $y$. Suppose further that $\gamma(t)$ has a radial relative maximum on $[a,b]$ at a radius $s_1$ which might be on the outermost boundary of the annulus. Let $\delta_1$, $\delta_2$ be such that $\gamma(a - \delta_1) = (s_2,\theta_1)$ and $\gamma(b+\delta_2) = (s_2,\theta_2)$. Further, choose $\delta_1$ and $\delta_2$ small enough enough so that $\gamma:(a-\delta_1,b+\delta_2) \to M$ lies above $S_{s_2}$. Such a choice of $\delta_j$ is always possible because $[a,b]$ is a radial relative maximum. Let $s \circ \gamma: [a,b] \to \IR_+$ be the distance between $\gamma$ and $S_{s_2}$.

    We may compute the mean curvature as 
    \begin{align}
        0 < H = \frac{(n-1)\rho_{,s}}{\rho}
    \end{align}
    and the second fundamental form is
    \begin{align}
        2A = 2\pd_s(\rho^2(s)g_S) = \rho \rho_{,s}g_S 
    \end{align}
    which is positive because positive mean curvature implies $\rho_{,s}$ is positive. Further, we write
    \begin{align}
        A = \nabla_{S}^2 s
    \end{align}
    because, as a distance function, $|\nabla s| = 1$. We express the Hessian of $s$ as
    \begin{align}
        \mbox{Hess}(s) = \nabla^2 s (\pd_s,\pd_s) + A
    \end{align}
    where
    \begin{align}
        \nabla^2s(\pd_s,\pd_s) = \bracket{\nabla_{\pd_s}s,\pd_s} = 0.
    \end{align}

    Note that $\mbox{Hess}(s)(v,v)$ may be zero if and only if $v$ is strictly radial. Recall that $\Dot{\gamma}(t)$ is tangent to $S_{s_1}$ on $[a,b]$. As $\Dot{\gamma}(t)$ varies smoothly in a smooth metric, we may choose $\delta$ small enough so that $\Dot{\gamma}(t)$ may not have a strictly radial component on $(a-\delta,a+\delta)$. It follows that 
    \begin{align}
        \mbox{Hess}(s)(\Dot{\gamma}(t),\Dot{\gamma}(t)) > 0
    \end{align}
    on that section, which contradicts that $s \circ \gamma$ is concave down or flat on that interval; ie $[a,b]$ cannot be a relative radial maximum for $\gamma$.
\end{proof}

A consequence of this same argument is that a geodesic may not be radially constant except on the innermost boundary. In other words, there are no radial shortcuts except at the inner boundary. With this result in hand, we estimate the difference in distances directly. This is the result in which we need the specific definition of $g_0$, which is smooth up to the outer boundary of each annulus and continuous on the inner boundary.

\begin{lemma}\label{lemm: diff in dist}
	Suppose the initial data $(M,g,k)$ are smooth, asymptotically flat, and spherically symmetric and satisfy the dominant energy condition \eqref{assumption: DEC} and bounded expansion condition \eqref{assumption: bounded expansions.}. Then we can compute that the difference in distances with respect to the $g_S, \bg,$ and $g_0$ metrics for $x,y \in W$ satisfies 
	\begin{align}
    \begin{split}
		d_{S}(x,y)- d_0(x,y)  &\leq 2\alpha  \\
        d_{\bg}(x,y) - d_0(x,y) & \leq 2\alpha
    \end{split}
	\end{align}
    where 
    \begin{align}
        \alpha = 2\epsilon_3 \pi D_0 + \frac{\delta  m_0 A^n C^2}{\sqrt{(\rho_0+\epsilon_1)^{n}\xi(\epsilon_1)} \epsilon_2^2}
    \end{align}
\end{lemma}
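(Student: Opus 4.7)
The plan is to upper bound $d_{\bg}(x,y)$ by constructing an explicit competitor curve from $x$ to $y$ whose $\bg$-length is at most $d_0(x,y)+2\alpha$; the estimate for $d_S(x,y)$ will follow from the identical argument with $\phi_S$ in place of $\phi$. Start by taking a $g_0$-distance minimizing geodesic $\gamma_0:[0,1]\to U$ from $x$ to $y$, so $\ell_{g_0}(\gamma_0)=d_0(x,y)$. The strategy is to keep $\gamma_0$ on the good region $W$, where Corollary \ref{cor: met comp 2} furnishes the pointwise comparison $\bg\leq(1+\epsilon_3)^2 g_0$, and to substitute a cheap radial-plus-angular detour for the portion of $\gamma_0$ lying inside each bad component $B_k$, where Lemma \ref{lem: ests on B} controls the total radial cost in $\bg$.

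On the $W$ part of $\gamma_0$, the metric comparison will yield $\ell_{\bg}(\gamma_0|_W)\leq(1+\epsilon_3)\,\ell_{g_0}(\gamma_0|_W)\leq(1+\epsilon_3)\,d_0(x,y)$, so the excess over $g_0$-length on $W$ will be at most $\epsilon_3\,d_0(x,y)\leq 4\epsilon_3\pi D_0$ by Lemma \ref{lemm diams}. For each component $B_k=\{a_k<\rho<b_k\}$ of $B$, consider any maximal interval $[t_1,t_2]\subset[0,1]$ on which $\gamma_0$ stays in $\bar{B_k}$, and write $\gamma_0(t_1)=(\rho_p,\theta_p)$, $\gamma_0(t_2)=(\rho_q,\theta_q)$ with $\rho_p,\rho_q\in\{a_k,b_k\}$. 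Replace this piece by the concatenation of a radial segment from $(\rho_p,\theta_p)$ down to $(a_k,\theta_p)$, an angular arc on $S_{a_k}$ from $(a_k,\theta_p)$ to $(a_k,\theta_q)$, and a radial segment from $(a_k,\theta_q)$ up to $(\rho_q,\theta_q)$; its $\bg$-length is at most $2\int_{a_k}^{b_k}\phi^{-1}\,d\rho+a_k\,d_{\IS^{n-1}}(\theta_p,\theta_q)$. Since the angular part of $\bg$, $g_S$, and $g_0$ coincides as $\rho^2\,d\Omega^2$ and $\rho\geq a_k$ throughout $\bar{B_k}$, the $g_0$-length of $\gamma_0|_{[t_1,t_2]}$ is bounded below by the angular displacement $a_k\,d_{\IS^{n-1}}(\theta_p,\theta_q)$, so the excess contributed by this single subinterval is at most $2\int_{a_k}^{b_k}\phi^{-1}\,d\rho$.

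To cap the number of such subintervals per $B_k$, apply Lemma \ref{lem: no max} on each smooth annular piece of $U$ on which $g_0=\bg$ or $g_0=g_S$: the radial function along $\gamma_0$ then has no interior relative maximum, so it is unimodal and $\gamma_0^{-1}(\bar{B_k})$ has at most two connected components. A short case analysis (either one trough-containing interval with $\rho_p=\rho_q=b_k$, or two transit intervals with $\{\rho_p,\rho_q\}=\{a_k,b_k\}$) shows the total excess per $B_k$ is still bounded by $2\int_{a_k}^{b_k}\phi^{-1}\,d\rho$. Summing over $k$ and invoking Lemma \ref{lem: ests on B} will produce a total $B$-excess of at most $\frac{2\delta m_0 A^n C^2}{\sqrt{(\rho_0+\epsilon_1)^n\xi(\epsilon_1)}\,\epsilon_2^2}$. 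Combined with the $W$ estimate, the resulting competitor $\sigma$ satisfies $\ell_{\bg}(\sigma)\leq d_0(x,y)+2\alpha$, which gives the claim.

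The main obstacle will be the careful application of Lemma \ref{lem: no max} to the $g_0$-geodesic, since $g_0$ is only $C^0$ across the transition spheres between the $\bg$- and $g_S$-regions, and $\gamma_0$ could a priori kink there and develop a spurious radial maximum. However, at each such transition sphere $\bg$ and $g_S$ agree and induce the same positive mean curvature on the sphere, so the Hessian-of-distance argument underlying Lemma \ref{lem: no max} should carry across the $C^0$ seam, preserving unimodality of the radial function globally on $U$.
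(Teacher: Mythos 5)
Your argument is correct and reaches the stated $\alpha$ with the same constants, but it handles the key step differently from the paper. Both proofs share the same skeleton: take a $g_0$-minimizing curve, split it into its good part in $W$ and its bad part in $B$, use Corollary \ref{cor: met comp 2} together with the diameter bound of Lemma \ref{lemm diams} to charge at most $4\epsilon_3\pi D_0$ to the $W$ portion, and use Lemma \ref{lem: no max} plus Lemma \ref{lem: ests on B} to control the $B$ portion. The difference is in how the bad-region excess is bounded. The paper keeps the same curve $\gamma^0$ and compares line elements pointwise: on each radially monotone piece it parametrizes by $\rho$ and estimates
\begin{align}
\sqrt{\tfrac{1}{\phi_S^2}+\rho^2\beta_{n-1}}-\sqrt{\tfrac{1}{\phi_0^2}+\rho^2\beta_{n-1}}\;\leq\;\frac{1}{\phi_S},
\end{align}
so that monotonicity (each monotone piece meets each $B_k$ at most once) converts the excess into $\sum_k\int_{a_k}^{b_k}\phi_S^{-1}\,d\rho$. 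You instead build an explicit competitor, replacing each maximal excursion of $\gamma^0$ into $\bar{B}_k$ by a radial--angular--radial detour through $S_{a_k}$, charging the angular arc against the $g_0$-length of the replaced piece (legitimate, since all three metrics have angular part $\rho^2 d\Omega^2\geq a_k^2 d\Omega^2$ on $\bar{B}_k$) and the radial segments against $2\int_{a_k}^{b_k}\phi^{-1}\,d\rho$; your pass-counting via unimodality (at most one trough interval or two full transits per $B_k$, total radial cost $2\int_{a_k}^{b_k}\phi^{-1}\,d\rho$) closes the count correctly, and Lemma \ref{lem: ests on B} then gives the same second term of $\alpha$. Your route is slightly more robust in that it never needs the algebraic chain of inequalities on $\tfrac{1}{\phi_S^2}-\tfrac{1}{\phi_0^2}$ (whose sign the paper does not discuss) nor an explicit $\rho$-parametrization of the geodesic inside $B$; the paper's route avoids your per-component case analysis. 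Finally, the seam issue you flag at the end is exactly why the paper defines $g_0$ to agree with one of $\bg$, $g_S$ smoothly up to the \emph{outer} boundary of each annular piece and only continuously at the inner one: a putative radial maximum at a transition sphere lies on the outer boundary of the piece below, where Lemma \ref{lem: no max} applies verbatim, so your appeal to the matching mean curvature across the seam can be replaced by this one-sided application of the lemma, which is how the paper justifies global unimodality.
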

\begin{proof}
    For $x$ and $y \in W$, let $\gamma^0$ by a geodesic connecting them in the $g_0$ metric, and $\gamma^S$ a geodesic in the Schwarzschild metric $g_S$. Let $\ell_{g_0}$ and $\ell_{g_S}$ denote the lengths in each metric.
    
    To begin, we have
    \begin{align}
        d_{g_S}(x,y) = \ell_{g_S}(\gamma^S) \leq \ell_{g_S}(\gamma^0) 
    \end{align}
    because $\gamma^S$ and $\gamma^0$ both connect points the points $x$ and $y$ in $W$. Now we break up $\gamma^0$ into the disjoint sets $\gamma^0 \cap W$ and $\gamma^0 \cap B$ and estimate $\gamma^0 \cap W$ in terms of $g_0$ using the metric comparison Corollary \ref{cor: met comp 2}:  
    \begin{align}
    \begin{split}
        d_{g_S}(x,y) & \leq \ell_{g_S}(\gamma^0\cap W) + \ell_{g_S}(\gamma^0 \cap B)\\
        &\leq (1+\epsilon_3)\ell_{g_0}(\gamma^0\cap W) +  \ell_{g_S}(\gamma^0 \cap B).
    \end{split}
    \end{align}
    If we express the distance between $x$ and $y$ in $g_0$ as
    \begin{align}
        d_{g_0}(x,y) = \ell_{g_0}(\gamma^0 \cap W) + \ell_{g_0}(\gamma^0 \cap B)
    \end{align}
    we get that
    \begin{align}\label{eq: dist1}
        d_{g_S}(x,y) - d_{g_0}(x,y) \leq \epsilon_3 \ell_{g_0}(\gamma^0 \cap W) + \ell_{g_S}(\gamma^0 \cap B) - \ell_{g_0}(\gamma^0 \cap B).
    \end{align}
    Repeating the procedure with $\bg$ and $g_0$ gives 
    \begin{align}\label{eq: dist2}
        d_{\bg}(x,y) - d_{g_0}(x,y) \leq \epsilon_3 \ell_{g_0}(\gamma^0 \cap W) + \ell_{\bg}(\gamma^0 \cap B) - \ell_{g_0}(\gamma^0 \cap B).
    \end{align}
    If we can estimate the right hand side of \eqref{eq: dist1} and \eqref{eq: dist2} by the same constant independent of $x$ and $y$ we will have $\alpha$.
    
    By Lemma \ref{lemm diams}, $\ell_{g_0}(\gamma^0 \cap W) \leq 4\pi D_0$. It remains only to estimate the differences
    \begin{align}
        \ell_{g_S}(\gamma^0 \cap B) - \ell_{g_0}(\gamma^0 \cap B) \quad \mbox{ and } \quad \ell_{\bg}(\gamma^0 \cap B) - \ell_{g_0}(\gamma^0 \cap B).
    \end{align}

    By Lemma \ref{lem: no max}, for each annular region in which $g_0 = \bg$ or $g_0 = g_S$, $\gamma_0$ may not achieve a relative maximum either on the interior of the region or on the outermost boundary. It follows that we can split $\gamma^0$ into at two disjoint components: one which is radially decreasing and one which is radially increasing. 

    Let $x = (\rho_1,\theta_1)$, $y = (\rho_2,\theta_2)$ and, without loss of generality, let $\gamma_1^0(t) = (\rho_1 - t,\theta_1(t))$ be radially decreasing for $t \in [0, t_3]$ and $\gamma_2^0(t) = (\rho_{min}+t,\theta_2(t))$ be radially increasing for $t \in [t_3, \rho_2-\rho_{min}]$. We may compute
    \begin{align}
    \begin{split}
        \ell_{g_S}(\gamma^0 \cap B) - \ell_{g_0}(\gamma^0 \cap B) &= \int_{\gamma_1^0\cap B}\sqrt{\frac{1}{\phi_S^2} + \rho^2\beta_{n-1}} - \sqrt{\frac{1}{\phi_0^2} + \rho^2\beta_{n-1} }dt \\
        &\quad + \int_{\gamma_2^0\cap B}\sqrt{\frac{1}{\phi_S^2} + \rho^2\beta_{n-1}} - \sqrt{\frac{1}{\phi_0^2} + \rho^2\beta_{n-1} }dt
        \end{split}
    \end{align}
    where $\beta_{n-1}$ is the distance on the $(n-1)$-sphere and we abuse notation slightly so $\gamma_1^0 \cap B$ is the set of $t$ values for which $\gamma_1^0$ lies in $B$. Then,
    \begin{align}
    \begin{split}
        \ell_{g_S}(\gamma_1^0 \cap B) - \ell_{g_0}(\gamma_1^0 \cap B) &= \int_{\gamma_1^0\cap B}\frac{\frac{1}{\phi_S^2}-\frac{1}{\phi_0^2}}{\sqrt{\frac{1}{\phi_S^2} + \beta_{n-1}} + \sqrt{\frac{1}{\phi_0^2} + \beta_{n-1} }}dt \\
        &\leq \int_{\gamma_1^0\cap B}\frac{\frac{1}{\phi_S^2}-\frac{1}{\phi_0^2}}{\sqrt{\frac{1}{\phi_S^2} + \frac{1}{\phi_0^2} }}dt\\
        &\leq \int_{\gamma_1^0\cap B}\frac{\frac{1}{\phi_S^2}-\frac{1}{\phi_0^2}}{\sqrt{\frac{1}{\phi_S^2} - \frac{1}{\phi_0^2} }}dt \\
        &\leq \int_{\gamma_1^0\cap B}\sqrt{\frac{1}{\phi_S^2} - \frac{1}{\phi_0^2} } dt\\
        & \leq \int_{\gamma_1^0\cap B} \frac{1}{\phi_S } \\
        &\leq  \sum_k \int_{a_k}^{b_k} \frac{1}{\phi_S } d\rho
    \end{split}
    \end{align}
    where the last estimate holds for any possible choice of $x$ and $y$ because, by virtue of being monotonic in $\rho$, $\gamma_1^0$ may be in each component of $B$ at most once. The same estimates hold for $\gamma_2^0$, so by Proposition \ref{lem: ests on B} we have now that
    \begin{align}
        \ell_{g_S}(\gamma^0 \cap B) - \ell_{g_0}(\gamma^0 \cap B) \leq 2 \frac{\delta  m_0 A^n C^2}{\sqrt{(\rho_0+\epsilon_1)^{n}\xi(\epsilon_1)} \epsilon_2^2}.
    \end{align}

    The same procedure for $\bg$ gives that 
    \begin{align}
        \ell_{\bg}(\gamma^0 \cap B) - \ell_{g_0}(\gamma^0 \cap B) \leq 2 \frac{\delta  m_0 A^n C^2}{\sqrt{(\rho_0+\epsilon_1)^{n}\xi(\epsilon_1)} \epsilon_2^2}
    \end{align}
    and we have
    \begin{align}
    \begin{split}
        d_S(x,y) - d_0(x,y) &\leq 2\left(2\epsilon_3 \pi D_0 + \frac{\delta  m_0 A^n C^2}{\sqrt{(\rho_0+\epsilon_1)^{n}\xi(\epsilon_1)} \epsilon_2^2} \right)\\
        d_{\bg}(x,y) - d_0(x,y) &\leq 2\left(2\epsilon_3 \pi D_0 + \frac{\delta  m_0 A^n C^2}{\sqrt{(\rho_0+\epsilon_1)^{n}\xi(\epsilon_1)} \epsilon_2^2}\right)
        \end{split}
    \end{align}
    for all $x,y \in W$, so we may choose 
    \begin{align}
        \alpha = 2\epsilon_3 \pi D_0 + \frac{\delta  m_0 A^n C^2}{\sqrt{(\rho_0+\epsilon_1)^{n}\xi(\epsilon_1)} \epsilon_2^2} .
    \end{align}
\end{proof}

\subsection{Proof of Proposition}

We may now apply Theorem \ref{vadb bdy} to prove Proposition \ref{prop: inter metric space}. 

\begin{proof}[Proof of Proposition 2]
    Let $\epsilon > 0$. First, consider $d_{\mf}(U_0,\uep)$. The map $F: \uep \to U_0$ which identifies $(\rho, \theta) \in \uep$ with $(\rho, \theta) \in U_0$ is a smooth diffeomorphism and so is biLipschitz with smooth inverse. Also, by definition of $U_0$, this map is distance nonincreasing. If we take $\alpha$ as in Lemma \ref{lemm: diff in dist}, we have $d_{\bg}(x,y) \leq d_0(x,y)+2\alpha$ for all $x,y \in W$. 

    It follows by Theorem \ref{vadb bdy} and Lemmas \ref{lemm: volumes of regions}, \ref{lem: bdy areas}, \ref{lemm diams}, \ref{lem: vol out}, and \ref{lemm: diff in dist} that
    \begin{align}
        d_{\mf}(U_0,\uep) \leq 2\left( \frac{\delta \omega_{n-1}  m_0 A^n C^2\sqrt{(\rho_0+\epsilon_1)^{n-2}}}{\sqrt{\xi(\epsilon_1)} \epsilon_2^2} \right) + h(\omega_{n-1}A \widetilde{D}) + h(\omega_{n-1}(A^{n-1}+a(\epsilon_1)^{n-1}))
    \end{align}
    where 
    \begin{align}
        h= \sqrt{8\alpha \pi D_0 + \alpha^2 }.
    \end{align}
    and $\alpha = 2\epsilon_3 \pi D_0 + \frac{\delta  m_0 A^n C^2}{\sqrt{(\rho_0+\epsilon_1)^{n}\xi(\epsilon_1)} \epsilon_2^2}$. For each $\epsilon_1, \epsilon > 0$ there exists $\epsilon_3$, $\epsilon_2$, and $\delta$ small enough so that each term on the right hand side is less than $\epsilon/9$ and we have
    $$d_{\mf}(U_0,\uep) < \epsilon/3.$$
    The proof for $\wuep$ follows in the same way.
\end{proof}

\begin{rem}
    In the proof of Proposition 2, we first have $\epsilon_1$, $\epsilon$ small, then choose $\epsilon_3$ sufficiently small so that $2\epsilon_3 \pi D_0$ is small. We then must choose $\epsilon_2$ and $\delta$ small enough to satisfy Corollary \ref{cor: met comp 2}. This, as well as the fact that $\xi(\epsilon_1)$ has largest term $\rho_0^{n-3}\epsilon_1$, means we must choose $\delta$ much smaller than $\rho_0^{n-1}\epsilon_1\epsilon_2^2$ so that $\frac{\delta \omega_{n-1}  m_0 A^n C^2\sqrt{(\rho_0+\epsilon_1)^{n-2}}}{\sqrt{\xi(\epsilon_1)} \epsilon_2^2}$ and $\frac{\delta  m_0 A^n C^2}{\sqrt{(\rho_0+\epsilon_1)^{n}\xi(\epsilon_1)} \epsilon_2^2}$ are small.
\end{rem}

\section{Proof of Theorem}\label{proof}

We prove first the estimate on the difference in volumes, then the convergence of the second fundamental forms in separate arguments.

\begin{prop}\label{prop vol diff}
	Suppose the initial data $(M,g,k)$ are smooth, asymptotically flat, and spherically symmetric and satisfy the dominant energy condition \eqref{assumption: DEC} and bounded expansion condition \eqref{assumption: bounded expansions.}. Then, the difference of total volumes may be estimated by
	\begin{align}
    \begin{split}
		|\mbox{Vol}_{\bg}(\uep) - \mbox{Vol}_{g_S}(\widetilde{U}_{a(\epsilon_1)}^A)| &\leq  \frac{2\delta \omega_{n-1}  m_0 A^n C^2\sqrt{(\rho_0+\epsilon_1)^{n-2}}}{\sqrt{\xi(\epsilon_1)} \epsilon_2^2}  \\
		& \quad + ((1+\epsilon_3)^n - 1) \widetilde{D} A^{n-1}\omega_{n-1}.
  \end{split}
	\end{align}
    In particular for any $\epsilon > 0$, there exists $\epsilon_3, \epsilon_2,\epsilon_1,$ and $\delta$ small enough that
    \begin{align}
        |\mbox{Vol}_{\bg}(\uep) - \mbox{Vol}_{g_S}(\widetilde{U}_{a(\epsilon_1)}^A)| < \epsilon/4.
    \end{align}
\end{prop}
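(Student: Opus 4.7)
The plan is to decompose both volumes using the partition of $\uep$ into the good region $W = W_{\epsilon_2}$ and bad region $B = B_{\epsilon_2}$ (and similarly for $\wuep$), estimate the difference on $W$ via the pointwise metric comparison from Lemma \ref{lem: metric estimate}, and then absorb the bad region using the volume estimate from Lemma \ref{lem: vol out}. Writing
\begin{align}
    \mbox{Vol}_{\bg}(\uep) = \mbox{Vol}_{\bg}(W) + \mbox{Vol}_{\bg}(B), \qquad \mbox{Vol}_{g_S}(\wuep) = \mbox{Vol}_{g_S}(\widetilde{W}) + \mbox{Vol}_{g_S}(\widetilde{B}),
\end{align}
the triangle inequality gives
\begin{align}
    |\mbox{Vol}_{\bg}(\uep) - \mbox{Vol}_{g_S}(\wuep)| \leq |\mbox{Vol}_{\bg}(W) - \mbox{Vol}_{g_S}(\widetilde{W})| + \mbox{Vol}_{\bg}(B) + \mbox{Vol}_{g_S}(\widetilde{B}).
\end{align}

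For the good region, I would first upgrade the metric comparison $\bg \leq (1+\epsilon_3)^2 g_S$ and $g_S \leq (1+\epsilon_3)^2 \bg$ on $W$ to a comparison of volume forms. Since these are $n$-dimensional Riemannian metrics, taking determinants (under the diffeomorphism $F_1$) yields $d\omega_{\bg} \leq (1+\epsilon_3)^n d\omega_{g_S}$ and $d\omega_{g_S} \leq (1+\epsilon_3)^n d\omega_{\bg}$ pointwise on $W$. Integrating over $W$ gives
\begin{align}
    |\mbox{Vol}_{\bg}(W) - \mbox{Vol}_{g_S}(\widetilde{W})| \leq ((1+\epsilon_3)^n - 1)\min\{\mbox{Vol}_{\bg}(W), \mbox{Vol}_{g_S}(\widetilde{W})\},
\end{align}
and the minimum is bounded by $\widetilde{D}A^{n-1}\omega_{n-1}$ using Lemma \ref{lemm: volumes of regions}, producing the second summand of the claimed estimate. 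For the bad region, Lemma \ref{lem: vol out} directly yields
\begin{align}
    \mbox{Vol}_{\bg}(B) + \mbox{Vol}_{g_S}(\widetilde{B}) \leq \frac{2\delta \omega_{n-1} m_0 A^n C^2 \sqrt{(\rho_0+\epsilon_1)^{n-2}}}{\sqrt{\xi(\epsilon_1)}\epsilon_2^2},
\end{align}
which is the first summand.

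For the final statement, one chooses parameters in the order dictated by the earlier arguments: given $\epsilon$ and a fixed $\epsilon_1$, first pick $\epsilon_3$ small so that $((1+\epsilon_3)^n - 1)\widetilde{D}A^{n-1}\omega_{n-1} < \epsilon/8$, then invoke Corollary \ref{cor: met comp 2} to choose $\epsilon_2$ and $\delta$ small enough so that the metric comparison actually holds on $W$, and finally shrink $\delta$ further (it appears only linearly in the bad-region term while $\epsilon_1, \epsilon_2, \epsilon_3$ are already fixed) so that the $B$-contribution is also less than $\epsilon/8$. Adding these gives $|\mbox{Vol}_{\bg}(\uep) - \mbox{Vol}_{g_S}(\wuep)| < \epsilon/4$.

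The only subtle point is the ordering of the parameter selection, since $\xi(\epsilon_1)$ appears in the denominator of the bad-region term; this is already observed in the remark following Proposition \ref{prop: inter metric space}, so one just needs to be careful that shrinking $\delta$ after fixing $\epsilon_1, \epsilon_2, \epsilon_3$ suffices. There is no genuine obstacle here beyond bookkeeping.
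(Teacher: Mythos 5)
Your proposal is correct and follows essentially the same route as the paper: split $\uep$ and $\wuep$ into the good and bad regions, bound the bad-region contributions by Lemma \ref{lem: vol out}, and bound the good-region difference by $((1+\epsilon_3)^n-1)$ times the volume bound of Lemma \ref{lemm: volumes of regions} via the metric comparison of Lemma \ref{lem: metric estimate}, with the same parameter ordering at the end. The determinant step upgrading the metric inequality to the volume-form inequality is exactly what the paper's proof uses implicitly, so there is nothing to add.
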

\begin{proof}
	We estimate each term on the right hand side of
	\begin{align}
    \begin{split}
		|\mbox{Vol}_{\bg_j}(\uep) - \mbox{Vol}_{g_S}(\widetilde{U}_{a(\epsilon_1)}^A)| \leq& |\mbox{Vol}_{\bg}(\uep) - \mbox{Vol}_{\bg}(W_{\epsilon_2})| + |\mbox{Vol}_{\bg}(W_{\epsilon_2}) - \mbox{Vol}_{g_S}(\widetilde{W}_{\epsilon_2})| \\
		&+ |\mbox{Vol}_{g_S}(\widetilde{W}_{\epsilon_2}) - \mbox{Vol}_{g_S}(\widetilde{U}_{a(\epsilon_1)}^A)|.
  \end{split}
	\end{align}
	By Lemma \ref{lem: vol out}, we have
	\begin{align}
		|\mbox{Vol}_{\bg}(\uep) - \mbox{Vol}_{\bg}(W_{\epsilon_2})|, 
		|\mbox{Vol}_{g_S}(\widetilde{W}_{\epsilon_2}) - \mbox{Vol}_{g_S}(\widetilde{U}_{a(\epsilon_1)}^A)| & \leq \frac{\delta \omega_{n-1}  m_0 A^n C^2\sqrt{(\rho_0+\epsilon_1)^{n-2}}}{\sqrt{\xi(\epsilon_1)} \epsilon_2^2}.  
	\end{align}
	By Lemmas \ref{lem: metric estimate} and \ref{lemm: volumes of regions} we have
	\begin{align}
    \begin{split}
		|\mbox{Vol}_{\bg}(W_{\epsilon_2}) - \mbox{Vol}_{g_S}(\widetilde{W}_{\epsilon_2})| &\leq ((1+\epsilon_3)^n - 1) \mbox{Vol}_{g_S}(\widetilde{W}_{\epsilon_2})\\
		 &\leq ((1+\epsilon_3)^n - 1) \widetilde{D} A^{n-1}\omega_{n-1}.
   \end{split}
	\end{align}
\end{proof}

\begin{lemma}\label{lem: conv of 2nd fund form}
    If the initial data $(M,g,k)$ are smooth, asymptotically flat, and spherically symmetric and satisfy the dominant energy condition \eqref{assumption: DEC} and bounded expansion condition \eqref{assumption: bounded expansions.}, then for any $\epsilon > 0$ there is $\delta > 0$ small enough so that $\norm{\sqrt{\phi}(k - \pi)}_{L^2(\Sigma,\bg)} < \epsilon$. 
\end{lemma}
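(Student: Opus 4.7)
The plan is to read this $L^2$ bound essentially for free from the mass decomposition used in the proof of Theorem \ref{Penrose Inequality}. Starting from the identity
\begin{align}
    m_{ADM} - m_0 = \frac{1}{2\omega_{n-1}(n-1)} \int_\Sigma \phi \left(16\pi(\mu - J(w)) + |h - K|_{\Sigma}|_{\bg}^2 + 2|q|_{\bg}^2\right)\,d\omega_{\bg},
\end{align}
I would invoke the dominant energy condition \eqref{assumption: DEC} together with the bound $|w|_g \leq 1$, which guarantees $\mu - J(w) \geq \mu - |J|_g \geq 0$, to drop the $16\pi(\mu - J(w))$ and $2|q|_{\bg}^2$ terms, both of which are non-negative. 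Since $\phi \geq 0$ by \eqref{eq: set phi} and the apparent horizon assumption, this yields
\begin{align}
    \delta m_0 = m_{ADM} - m_0 \geq \frac{1}{2\omega_{n-1}(n-1)} \int_\Sigma \phi\, |h - K|_{\Sigma}|_{\bg}^2 \, d\omega_{\bg}.
\end{align}

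Next, I would invoke Lemma \ref{lem: 2nd fund forms}, which gives the pointwise identification $h - K|_{\Sigma} = \pi - k$ under the graph map $G$. Substituting into the inequality above immediately converts the estimate into
\begin{align}
    \int_\Sigma \phi\, |\pi - k|_{\bg}^2 \, d\omega_{\bg} \leq 2\omega_{n-1}(n-1)\,\delta m_0,
\end{align}
which is precisely $\norm{\sqrt{\phi}(k - \pi)}_{L^2(\Sigma,\bg)}^2 \leq 2\omega_{n-1}(n-1)\, \delta m_0$. Given any $\epsilon > 0$, choosing
\begin{align}
    \delta < \frac{\epsilon^2}{2\omega_{n-1}(n-1)\, m_0}
\end{align}
delivers the desired bound.

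There is no serious obstacle in this argument; it is essentially a bookkeeping observation that the Penrose inequality proof already extracts weighted $L^2$ closeness of $\pi$ and $k$ as a byproduct of the positivity of the remaining scalar curvature terms. The only new ingredient relative to the rigidity case treated in Theorem \ref{Penrose Inequality} is that we allow a positive mass excess $\delta m_0$ and track it quantitatively rather than setting it to zero. Note in particular that, in contrast to the local convergence of $\phi$ to $\phi_S$ established in Lemma \ref{lem: conv of phi}, this estimate is global on $\Sigma$ — no restriction to $\Sigma_{\epsilon_1}$ is needed, since the boundary divergence term in the scalar curvature integrates out as in \eqref{eq: hawk mass decom}, \eqref{eq: restricted pos term} without any additional control near $\partial \Sigma$.
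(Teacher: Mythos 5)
Your proof is correct and follows essentially the same route as the paper: both start from the mass identity established in the proof of Theorem \ref{Penrose Inequality}, drop the nonnegative energy and $|q|^2$ terms using the dominant energy condition and positivity of $\phi$, and invoke Lemma \ref{lem: 2nd fund forms} to identify $h-K|_{\Sigma}$ with $\pi-k$, yielding $\norm{\sqrt{\phi}(k-\pi)}_{L^2(\Sigma,\bg)}^2 \leq 2\omega_{n-1}(n-1)\delta m_0$. Your version merely makes the choice of $\delta$ and the role of the vanishing boundary term explicit, which the paper leaves implicit.
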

\begin{proof}
    From Theorem \ref{Penrose Inequality}, we have
\begin{align}
\begin{split}
   \delta m_0 = m_{ADM} - m_0 &=  \frac{1}{2\omega_{n-1}(n-1)}\int_\Sigma  \phi \left(16\pi(\mu-J(w)) + |h-K|_{\Sigma}|_{\bg}^2 + 2|q|_{\bg}^2 \right)d \omega_{\bg}  \\
   & \geq \frac{1}{2\omega_{n-1}(n-1)}\int_\Sigma  \phi |h-K|_{\Sigma}|_{\bg}^2 d \omega_{\bg} 
   \end{split}
\end{align}
so that the $L^2$ norm of $\sqrt{\phi}(h-K|_{\Sigma}) = \sqrt{\phi}(k-\pi)$ is controlled by $\delta$ on all of $\Sigma$, from which the second fundamental form convergence follows. 
\end{proof}

Now we may prove the theorem.
\begin{proof}[Proof of Theorem 1]

By Lemma \ref{lem: conv of phi}  we have the $L_{loc}^2$ convergence of $\phi$ to $\phi_S$ and by Lemma \ref{lem: conv of 2nd fund form} we have the $L^2$ convergence of the second fundamental forms. It remains to show that

\begin{align}
        d_{\mv\mf}(\uep, \widetilde{U}_{a(\epsilon_1)}^A) < \epsilon
    \end{align}
for small enough $\delta$. 

By the triangle inequality, we may write
\begin{align}
\begin{split}
    d_{\mv\mf}(\uep, \widetilde{U}_{a(\epsilon_1)}^A) &= d_{\mf}(\uep, \widetilde{U}_{a(\epsilon_1)}^A) + |\mbox{Vol}_{\bg}(\uep) - \mbox{Vol}_{g_S}(\widetilde{U}_{a(\epsilon_1)}^A)|  \\
    &\leq d_{\mf}(\uep, U_0) + d_{\mf}(\wuep,U_0)  \\
    &\quad + |\mbox{Vol}_{\bg}(\uep) - \mbox{Vol}_{g_S}(\widetilde{U}_{a(\epsilon_1)}^A)| .
    \end{split}
\end{align}
By Propositions \ref{prop: size of inner}, \ref{prop: inter metric space}, and \ref{prop vol diff} we have for small enough $\epsilon_3$, $\epsilon_2$, and $\delta$ that
\begin{equation}
\begin{aligned}
    d_{\mf}(\uep, U_0) &< \epsilon/3 \\
    d_{\mf}(\wuep,U_0) &< \epsilon/3 \\
    |\mbox{Vol}_{\bg}(\uep) - \mbox{Vol}_{g_S}(\widetilde{U}_{a(\epsilon_1)}^A)| &< \epsilon/3.
\end{aligned}
\end{equation}
The bound follows.
\end{proof}

\printbibliography

\end{document}